\title{On a tree-based variant of bandwidth and forbidding simple topological minors}
\author{Hugo Jacob}{LIRMM, Université de Montpellier, CNRS, Montpellier,
France}{hugo.jacob@lirmm.fr}{https://orcid.org/0000-0003-1350-3240}{Supported
by the ANR project GODASse ANR-24-CE48-4377.}
\author{William Lochet}{LIRMM, Université de Montpellier, CNRS, Montpellier,
France}{william.lochet@gmail.com}{https://orcid.org/0000-0002-8711-1170}{Supported
by the ANR project CADO ANR-24-CE48-3758-01.}
\author{Christophe Paul}{LIRMM, Université de Montpellier, CNRS, Montpellier,
France}{paul@lirmm.fr}{https://orcid.org/0000-0001-6519-975X}{Supported by the
ANR project GODASse ANR-24-CE48-4377.}
\authorrunning{H. Jacob, W. Lochet, and C. Paul} 
\keywords{Tree decompositions, bandwidth, topological minors, approximation algorithms} 
\theoremstyle{definition}
\newtheorem{question}{Question}
\declaretheorem[name={Open question}]{opquestion}
\DeclareMathOperator{\bw}{\mathsf{bw}}
\DeclareMathOperator{\pw}{\mathsf{pw}}
\DeclareMathOperator{\tw}{\mathsf{tw}}
\DeclareMathOperator{\dtw}{\mathsf{dtw}}
\DeclareMathOperator{\tbw}{\mathsf{tbw}}
\DeclareMathOperator{\tpw}{\mathsf{tpw}}
\DeclareMathOperator{\td}{\mathsf{td}}
\DeclareMathOperator{\ts}{\mathsf{ts}}
\DeclareMathOperator{\vc}{\mathsf{vc}}
\DeclareMathOperator{\fvs}{\mathsf{fvs}}
\DeclareMathOperator{\etw}{\mathsf{etw}}
\DeclareMathOperator{\otw}{\mathsf{otw}}
\DeclareMathOperator{\p}{\mathsf{p}}
\newcommand{\Series}{\mathsf{S}}
\newcommand{\Parallel}{\mathsf{P}}
\newcommand{\Q}{\mathsf{Q}}
\newcommand{\Rigid}{\mathsf{R}}
\newcommand{\tb}{\texttt{tb}}
\newcommand{\te}{\texttt{te}}
\newcommand{\defparaproblem}[4]{
 \vspace{2mm}
\noindent\fbox{
 \begin{minipage}{0.96\textwidth}
 \begin{tabular*}{\textwidth}{@{\extracolsep{\fill}}lr} #1 & \\ \end{tabular*}
 {\textbf{Input:}} #2 \\
 {\textbf{Parameter:}} #3 \\
 {\textbf{Question:}} #4
 \end{minipage}
 }
 \vspace{2mm}
}
\newenvironment{subproof}[1][\proofname]{%
  \begin{proof}[#1]%
}{%
  \end{proof}%
}
\begin{document}

\maketitle

\begin{abstract}
	We obtain structure theorems for graphs excluding a fan (a path with a
	universal vertex) or a dipole ($K_{2,k}$) as a topological minor. The
	corresponding decompositions can be computed in FPT linear time. This is motivated by the study of a graph parameter we call treebandwidth
	which extends the graph parameter bandwidth by replacing the linear layout by a
	rooted tree such that neighbours in the graph are in ancestor-descendant relation in
	the tree.
	
	We deduce an approximation algorithm for treebandwidth running in FPT linear
	time from our structure theorems. We complement this result with a precise
	characterisation of the parameterised complexity of computing the parameter exactly.
\end{abstract}

\section{Introduction}

Bandwidth is a classical graph parameter that originates from the study of sparse matrices
in the 1960s (see e.g. \cite{CuthillM69}). Given a matrix $M$, the objective was to find a
permutation matrix $P$ such that $PMP^T$ has its nonzero entries next to the diagonal. In
the graph setting, this amounts to computing a linear ordering, called \emph{layout}, of
the vertices of a given graph such that adjacent vertices are kept close to each other.
More formally, given a graph $G=(V,E)$, a layout is a bijective mapping
$\sigma: V \to \{1,\dots,n\}$. Then the \emph{bandwidth} of a layout of graph $G$
is defined as $$\bw(\sigma)=\max_{xy\in E} |\sigma(x)-\sigma(y)|,$$ and the
bandwidth of $G,$ $\bw(G),$ is the minimum of $\bw(\sigma)$ over all layouts $\sigma$ of $G$.

Our initial motivation in this paper is to study the graph parameter
corresponding to bandwidth where instead of mapping vertices of the graph to a
layout, we map them to a rooted tree. A \emph{tree-layout} of $G=(V,E)$ is a
rooted tree $T$ whose nodes are the vertices of $V$, and such that, for every edge
$xy\in E$, $x$ is an ancestor of $y$ or vice-versa. The bandwidth of $T$ is
then the maximum distance in $T$ between pairs of neighbours in $G$. We call
\emph{treebandwidth} of $G$, the minimum bandwidth over tree-layouts of $G$,
and denote it by $\tbw(G)$. A definition via tree-layouts can also be obtained
for the usual graph parameters treewidth and treedepth by minimising the size
of open neighbourhoods of subtrees or the depth, respectively.

To motivate this notion, we provide two equivalent definitions via two other
common ways to define graph parameters. On one hand, similarly to many graph
parameters related to treewidth, we can define it via the maximum clique in a
minimal completion to a subclass of chordal graphs. In this case, the class of
proper chordal graphs recently introduced in \cite{properchordal}. On the other
hand, we can define it via a variant of graph search games (also called cops
and robber games).

\begin{proposition}\label{prop:equiv-intro}
Given a fixed graph $G$, the following statements are equivalent:
\begin{enumerate}[label=(\roman*)]
\item $\tbw(G) \leq k$;
\item $\omega(H)-1 \leq k$ for a proper chordal supergraph $H$
of $G$; and
\item there is a monotone search strategy to capture a visible fugitive
of infinite speed in which searchers are placed one at a time and each vertex of $G$ is
occupied by a searcher during at most $k+1$ time steps.
\end{enumerate}
\end{proposition}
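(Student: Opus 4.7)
The plan is to establish the three implications cyclically, namely (i) $\Rightarrow$ (ii) $\Rightarrow$ (iii) $\Rightarrow$ (i), since each of the structures (tree-layout, chordal completion, search strategy) is essentially a different encoding of the same object.

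For (i) $\Rightarrow$ (ii), starting from a tree-layout $T$ of $G$ with $\bw(T) \leq k$, I would take $H$ to be the completion obtained by adding, for every pair of vertices in ancestor-descendant relation in $T$ at tree-distance at most $k$, the corresponding edge. Every maximal clique of $H$ is then a vertical path of $T$ of length at most $k+1$, yielding $\omega(H) \leq k+1$. What remains is to check that $H$ is proper chordal; this should be immediate from the characterisation of the class in \cite{properchordal}, since $T$ itself witnesses the required hierarchical structure (each vertex's neighbourhood among its descendants is a downward-closed prefix of a root-to-leaf path, which is exactly the kind of ``proper interval on a tree'' condition one expects from the name).

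For (ii) $\Rightarrow$ (iii), I would use the structural description of proper chordal graphs to recover a tree-layout $T'$ of $H$ such that the cliques of $H$ are precisely the vertical paths of $T'$ of length at most $\omega(H) \leq k+1$, so in particular $T'$ is a tree-layout of $G$ of bandwidth at most $k$. From such a $T'$ I would read the search strategy by performing a DFS of $T'$: at each visit of a new vertex $v$ place a searcher on $v$, and retract the searcher from $v$ once all descendants of $v$ at tree-distance at most $k$ have been processed. Monotonicity and visibility of the fugitive follow because at every moment the vertices currently carrying a searcher form a downward-closed vertical path from the root of the currently explored subtree, and placing-one-at-a-time is built into the DFS; the occupation-time bound translates directly to the bandwidth bound.

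For (iii) $\Rightarrow$ (i) I would reverse this construction. Given a monotone search with the stated guarantees, order the vertices by the time $t_v$ at which their searcher is placed, and define $T$ by setting the parent of $v$ to be the most recently placed vertex $u$ still carrying a searcher at time $t_v$ (and making $v$ a root otherwise). Since searchers are placed one at a time, this is well defined, and monotonicity of the search forces every edge of $G$ to lie in ancestor-descendant relation in $T$, so $T$ is a tree-layout. For the bandwidth, if $u$ is an ancestor of $v$ at tree-distance $d$ in $T$, then between the placement of $u$ and the placement of $v$ exactly $d-1$ searchers along the $u$-to-$v$ path were placed and none of the intermediate ancestors was removed before $v$; combined with the occupation-time bound, this forces $d \leq k$.

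The main obstacle will be the precise correspondence in (ii) $\Rightarrow$ (iii): extracting from an arbitrary proper chordal supergraph $H$ of $G$ a tree-layout whose vertical-path cliques are exactly $\omega(H)$-bounded requires invoking the non-trivial structural characterisation from \cite{properchordal}; once that is in hand, the rest reduces to routine bookkeeping about DFS traversal. The other delicate point is verifying that the ``one at a time'' and monotonicity assumptions in (iii) are exactly strong enough to make the parent map in the reverse direction produce a genuine tree-layout, rather than merely a forest-layout whose root-to-leaf paths might miss edges of $G$.
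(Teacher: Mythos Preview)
Two of your three implications have genuine gaps, both stemming from not engaging with what visibility of the fugitive actually buys (and costs).

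In (ii)$\Rightarrow$(iii), a DFS of $T'$ does not meet the occupation bound at branching nodes. If $r$ has subtrees $A$ and $B$, your retraction rule keeps a searcher on $r$ until all its descendants at tree-distance $\leq k$ are processed; but those lie in \emph{both} $A$ and $B$, while DFS finishes $A$ entirely before entering $B$. Hence $r$ is occupied for $\Omega(|A|)$ steps, not $k+1$. The fix is precisely to exploit visibility: after placing on $r$, the fugitive is confined to one subtree, and the strategy follows it into that subtree only. This is what the paper does in its (i)$\Rightarrow$(iii): at each step place a searcher on the root of the subtree of $T'$ currently containing the fugitive, and retract from any vertex with no neighbour in that subtree. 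The occupied vertices then always form a suffix of length at most $k+1$ on a single root-to-leaf path, and the bound is immediate.

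In (iii)$\Rightarrow$(i), ordering vertices by placement time and reading off a parent map presupposes a single execution. But against a visible fugitive the strategy is adaptive, and a given execution need not place a searcher on every vertex: once the fugitive is captured the game stops (on a star, for instance, only the centre and the fugitive's leaf are ever touched). So your $T$ may simply omit vertices of $G$. The paper handles this by building $T$ inductively over \emph{all} fugitive choices simultaneously: from a connected contaminated set $S$ and history $H$, the strategy's next placement $x$ becomes the root, and one recurses on each component of $G[S]-x$ (one per possible fugitive destination). Your parent-map idea is morally the right local rule, but it only makes sense after this aggregation. Your (i)$\Rightarrow$(ii) is fine; your completion (all ancestor--descendant pairs at distance $\leq k$) is larger than the paper's (cliques on intervals $[x,y]$ for $xy\in E(G)$), but both are proper chordal with $\omega\leq k+1$ via the same witness tree-layout. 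Note also that the paper does not argue cyclically: it proves (i)$\Leftrightarrow$(ii) and (i)$\Leftrightarrow$(iii) with the tree-layout as the common pivot, which avoids having to pass from the chordal completion directly to a search strategy.
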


The question of embedding graphs in trees with bounded \emph{stretch} (distance
between neighbours in the embedding) dates back to Bienstock
\cite{Bienstock90OnEmbedding}. The fact that any graph can be embedded in a
star (tree of diameter $2$) lead Bienstock to restrict the attention to
bounded degree trees. Surprisingly, graphs that admit such an embedding of
bounded stretch are exactly graphs with bounded treewidth and bounded maximum
degree \cite{DingO95Some}. Another way of avoiding the trivial embedding in a
star while still considering trees of unbounded degree is to embed the graph in an
arborescence with the constraint that neighbours should be linked by a directed
path. In this case, the arborescence is exactly a tree-layout whose bandwidth
corresponds to the stretch of the embedding.

The central questions around which this paper is organised are the following:

\begin{question}\label{Question:struct}
What is the structure of graphs of bounded treebandwidth?
Can we efficiently recognise such graphs?
\end{question}

We also provide a simple algorithmic application of the parameter
treebandwidth via $p$-centered colourings, and a presentation of how this parameter
relates to other similar parameters.

\paragraph*{Main contributions}

We characterise graphs of bounded treebandwidth by two parametric forbidden topological
minors, namely the fan\footnote{A dominated path} and the wall. Note that excluding a wall as a topological minor bounds the treewidth.

\begin{theorem}\label{thm:tbw-obs}
Graphs of bounded treebandwidth are exactly graphs which exclude large walls and large fans as topological minors.
\end{theorem}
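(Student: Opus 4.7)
The plan is to prove both directions of this equivalence. The forward direction (bounded treebandwidth implies exclusion of large walls and fans as topological minors) follows from lower bounds on the treebandwidth of walls and fans together with monotonicity under subgraphs. The backward direction (excluding large walls and fans implies bounded treebandwidth) is the main content and relies on the structural theorems announced as a main contribution of the paper.

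For the forward direction, I would first establish the inequality $\tw(G) \leq \tbw(G)$: given a tree-layout $T$ of bandwidth $k$, the sets $B_v = \{v\} \cup \{\text{the }k\text{ closest ancestors of }v\text{ in }T\}$ along $T$ form the bags of a tree decomposition of width $k$, because every edge of $G$ connects a vertex to one of its $k$ ancestors. Since walls and their subdivisions have treewidth growing with their size, bounded $\tbw$ already excludes large walls as topological minors. For fans, treewidth does not suffice ($\tw(F_n)=2$), so a direct combinatorial argument is needed. With the hub $u$ at the root of a tree-layout of $F_n$, one shows that at most one spine vertex can be a direct child of $u$ (two such vertices would be siblings and hence not in ancestor-descendant relation), so the rest of the spine must live in the subtree of this unique child; a recursion on this subtree yields $\tbw(F_n) = \Omega(\log n)$. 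The same argument, adapted so that intermediate subdivision vertices are treated as skippable tree nodes along ancestor-descendant walks, gives the analogous bound for any subdivision of $F_n$, and therefore for topological minor containment.

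For the backward direction, I would combine the classical Excluded Grid/Wall Theorem with the structure theorem for fan-excluding graphs. Excluding a large wall as a topological minor bounds the treewidth of $G$. Assuming in addition that $G$ excludes a large fan as a topological minor, the fan-structure theorem (the main combinatorial contribution of the paper) yields a decomposition of $G$ into pieces each admitting a rooted tree-layout of bounded bandwidth, the bandwidth depending on the treewidth and fan-exclusion parameters. Gluing these local tree-layouts along the decomposition tree produces a global tree-layout of $G$ whose bandwidth is bounded by a function of the wall-exclusion, the fan-exclusion, and the treewidth bound.

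The main obstacle will be the backward direction, and specifically the gluing step: edges crossing between adjacent pieces of the decomposition must still respect the ancestor-descendant constraint at bounded depth in the combined tree-layout. This is where the precise form of the fan-structure theorem becomes essential, particularly its description of how the pieces interact at their interfaces; the construction will likely proceed by a careful recursion that mirrors the decomposition, propagating consistent orientations of the interface vertices into a coherent global tree-layout.
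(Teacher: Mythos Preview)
Your high-level plan matches the paper's: forward direction via the obstruction inequalities, backward direction via bounded treewidth (from wall exclusion) combined with the fan structure theorem. Two points where your details diverge from the paper are worth noting.

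For the forward direction on fans, your argument assumes the hub $u$ is at the root; this is not automatic, and with $u$ elsewhere the spine may have up to $k$ vertices among the ancestors of $u$, breaking the ``single child'' claim into several subpaths. The fix is routine (at most $k$ ancestor spine vertices, and each descendant subpath has length bounded via the treedepth of a path), but the cleanest route is the one the paper takes implicitly: a tree-layout of bandwidth $k$ yields an elimination tree of bounded depth for $N(u)$ in $G-u$, hence $\td(G-u,N(u))$ is bounded, and a $N(u)$-rooted path minor of length $2^t$ is impossible when this treedepth is at most $t$.

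For the backward direction, your plan of computing local tree-layouts for pieces and then ``gluing'' mirrors the paper's treatment of the planar special case (via SPQR trees), but in the general case the paper does something simpler and more direct. The folded structure theorem produces a single tree decomposition in which adhesions are small, every vertex appears in a subtree of bounded diameter, and---once the treewidth bound is in hand---every bag is small. One then roots this decomposition, linearises each bag arbitrarily, keeps only the topmost occurrence of each vertex, and reads off a tree-layout; its bandwidth is bounded because each edge joins two vertices in a common bag and the bags containing either endpoint span only boundedly many levels. So no delicate interface propagation is required: the bounded-diameter property of the folded decomposition already encodes the global consistency you are worried about.
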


This result is obtained as a corollary of a structure theorem for graphs excluding a $k$-fan as a topological minor.

\begin{theorem}\label{thm:folded-fan-intro}
If $G$ excludes $F_k$ as a topological minor, then there exists a tree
decomposition $(T,\beta)$ of $G$ such that adhesion size is at most $a$, each vertex
$v$ has at most $b$ neighbours in the torso at each bag, and the bags containing $v$
induce a subtree of $T$ of diameter at most $c$.

Such a decomposition can be computed in time $f(k)\cdot n$.

Conversely, if such a decomposition exists, $G$ excludes $F_{k'}$ as a
topological minor for some $k'$ function of $a,b,c$.
\end{theorem}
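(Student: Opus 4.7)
The plan is to address the three claims—existence of the decomposition, its algorithmic computation, and the converse—in turn, prioritising the forward direction which is the main substance.

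\textbf{Converse.} Assume $(T,\beta)$ satisfies the three bounds $a,b,c$, and suppose $G$ contains a subdivision of $F_{k'}$ with apex $v$ and principal path $P$. Each of the $k'$ initial neighbours of $v$ (one per branching path) is a torso-neighbour of $v$ in some bag of $T_v$, and each branching path must exit $T_v$ through an adhesion of size at most $a$. Using the subtree structure of $T_v$ together with the diameter bound $c$ and adhesion bound $a$, one can limit via a Menger-style counting how many internally disjoint paths can leave $v$ and reach $P$; combined with the per-bag torso-degree bound $b$, this yields $k' \leq k'(a,b,c)$.

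\textbf{Forward direction.} The key structural ingredient is a DFS-based lemma: in any DFS tree $D$ of $G$, each vertex has at most $k-1$ back-edges to ancestors. If $v$ had $d$ such back-edges to ancestors $a_1,\ldots,a_d$ (in tree-order), the tree-path from $a_1$ down to $a_d$ avoids $v$ and, together with the $d$ back-edges incident to $v$, witnesses a subdivision of $F_d$; hence $d<k$. From $D$, I would construct a tree decomposition whose decomposition tree refines $D$, and whose bag at node $v$ consists of $v$, its DFS-parent, and the \emph{active ancestors} of $v$---those ancestors that are endpoints of back-edges originating in the subtree rooted at $v$. Standard verification shows the bags cover all edges and induce subtrees; the back-edge bound propagated through the subtree controls both the size of each active set (bounding adhesion) and the per-bag torso-degree.

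\textbf{Main obstacle and algorithmic part.} The delicate point is bounding the diameter of $T_v$ by a function of $k$: an ancestor $v$ can remain active along a long DFS path if some deep descendant has a back-edge to $v$. To handle this, I would refine the decomposition by inserting ``cleanup'' nodes whenever the active set changes substantially, combined with a potential argument that amortises activation events using the per-vertex bound of $k-1$ on back-edges. Once this is in place, a single DFS followed by a post-order pass maintaining active-ancestor sets builds the decomposition; combined with the sparsity of $F_k$-topological-minor-free graphs (which have $O_k(n)$ edges), the total running time is $f(k)\cdot n$.
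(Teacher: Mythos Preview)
Your DFS-based forward direction has a genuine gap: the bound on back-edges \emph{from} a vertex does not control the size of the active-ancestor set, so neither the adhesion nor the bag size is bounded. Take the path $a_1\text{--}a_2\text{--}\cdots\text{--}a_m\text{--}u\text{--}d_1\text{--}\cdots\text{--}d_m$ together with chords $d_ia_i$ for all $i$. This graph has maximum degree $3$, hence excludes $F_4$ as a topological minor, and every vertex has at most one back-edge in any DFS. Yet in the DFS that follows the path from $a_1$, the subtree rooted at $u$ contains all the $d_i$, and each $d_i$ sends a back-edge to $a_i$; so the active-ancestor set at $u$ is $\{a_1,\dots,a_m\}$, and your bag at $u$ has size $m+1$. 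The ``cleanup nodes'' you propose for the diameter issue cannot fix this, because the problem is already at the level of a single adhesion: the claim that ``the back-edge bound propagated through the subtree controls the size of each active set'' is simply false.

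The paper takes a completely different route: it starts not from a DFS tree but from a well-formed $k'$-lean tree decomposition (computed via Korhonen's algorithm), and shows that excluding $F_k$ forces two invariants on such a decomposition---bounded torso-degree at each bag, and, along every path of $T$, a bounded number of bags that introduce edges incident to $v$ or branch towards such edges. These invariants are proved via a rooted-treedepth characterisation: the fan number of $G$ is tied to $\max_v \td(G,N[v])$, and leanness lets one read off treedepth bounds from the decomposition. The diameter bound on $T_v$ is then obtained by a separate ``folding'' step that reorganises the lean decomposition recursively, pulling the bags relevant to the current adhesion into a shallow subtree and recursing on what remains; crucially, folding preserves the adhesion bound because it only adds a bounded number of old adhesions to each new bag. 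Your converse sketch is in the right spirit but would also benefit from the treedepth reformulation, which makes the counting precise.
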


Our structure theorem follows from the identification of precise invariants
that any `sufficiently good' tree decomposition satisfies, and a technique to
modify tree decompositions in order to reduce the span of vertices. In
particular, it is easy to incorporate the assumption that the graph has bounded
treewidth. Since the structure theorem is algorithmic, we deduce an FPT linear
algorithm to approximate treebandwidth.

\begin{theorem}\label{thm:tbw-algo-intro}
	There is an algorithm which, given a graph $G$ and an integer $k$, in time
	$f(k)\cdot n,$ either computes a tree-layout of bandwidth at most $g(k)$, or
	determines that $\tbw(G)>k$.
\end{theorem}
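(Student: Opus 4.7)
The strategy is to leverage Theorem~\ref{thm:tbw-obs} to reduce the problem to excluding large fans and large walls, use an FPT treewidth approximation together with Theorem~\ref{thm:folded-fan-intro} to obtain a structured tree decomposition, and then flatten that decomposition into a tree-layout of bounded bandwidth.

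Concretely, given $(G,k)$, first run a known FPT approximation algorithm for treewidth: either it produces a decomposition of width at most $w(k)$ for some function $w$, or it witnesses that the treewidth exceeds the threshold below which a graph excludes some large wall as a topological minor, so by Theorem~\ref{thm:tbw-obs} we may already return $\tbw(G)>k$. In the first case, call the algorithm of Theorem~\ref{thm:folded-fan-intro} with a suitable parameter depending on $k$. This either exhibits the forbidden fan subdivision (yielding $\tbw(G)>k$ via Theorem~\ref{thm:tbw-obs}) or returns a tree decomposition with adhesion at most $a$, torso-degree at most $b$ and vertex-span at most $c$, where $a,b,c$ are functions of $k$. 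Computing a common refinement of the two decompositions, a standard operation achievable in FPT linear time, produces a single tree decomposition $(T,\beta)$ simultaneously of bag size at most $w(k)+1$, bounded adhesion, bounded torso-degree, and span bounded by some $c'(k)$ (the span may degrade under refinement, but only as a function of the parameters).

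To turn $(T,\beta)$ into a tree-layout $T'$, root $T$ arbitrarily and, for every vertex $v$, let $t_v$ be the vertex of the subtree $T_v=\{t : v\in\beta(t)\}$ closest to the root of $T$ (well-defined since $T_v$ is connected). Partition $V(G)$ via $V_t=\{v : t_v=t\}$ and, inside each bag $t$, list the vertices of $V_t$ in an arbitrary chain $C_t$. Build $T'$ by placing the top of each chain $C_{t'}$ of a child bag $t'$ as a child of the bottom of $C_t$ in $T'$, collapsing empty chains in the obvious way. For an edge $uv\in E(G)$, any bag $t^*$ containing both lies in $T_u\cap T_v$, so $t_u$ and $t_v$ both sit on the root-to-$t^*$ path in $T$ and are therefore comparable; hence one of $u,v$ is an ancestor of the other in $T'$, showing that $T'$ is a valid tree-layout.

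For the bandwidth bound, fix an edge $uv$ with, say, $t_u$ an ancestor of $t_v$. The unique $T'$-path from $u$ to $v$ runs down $C_{t_u}$ and through the chains of every bag on the $T$-path from $t_u$ to $t_v$; this $T$-path lies inside $T_u$ and hence contains at most $c'(k)+1$ bags, while each chain has at most $w(k)+1$ vertices. Therefore $\bw(T')\leq (c'(k)+1)(w(k)+1)=:g(k)$, as required. The main obstacle in executing this plan is verifying that the common refinement preserves all four invariants simultaneously with only a parameter-controlled blow-up in the span; this is where the robustness of Theorem~\ref{thm:folded-fan-intro}, in which the bounds $a,b,c$ are allowed to depend freely on $k$ rather than being tight, becomes essential.
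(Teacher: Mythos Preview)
Your overall plan---reject if treewidth is too large, otherwise apply the excluded-fan structure theorem, then flatten by rooting, linearising each bag, and keeping only the topmost occurrence of each vertex---matches the paper's, and your bandwidth analysis of the flattening is correct. The genuine gap is exactly the step you yourself flag: the ``common refinement'' of a bounded-width decomposition with the decomposition coming from Theorem~\ref{thm:folded-fan-intro}. This is not a standard operation, and there is no reason the span bound survives it. For instance, if one tries to refine by replacing each bag of the folded decomposition with a bounded-width sub-decomposition of its torso, a single vertex may land in arbitrarily many bags of that sub-decomposition, so the diameter of its occurrence set becomes unbounded even though the outer span was at most $c$. Appealing to the ``robustness'' of Theorem~\ref{thm:folded-fan-intro} does not help: that theorem hands you one decomposition and says nothing about merging two.

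The paper avoids the refinement problem entirely by using a \emph{single} decomposition for both roles. It computes a $k$-lean tree decomposition $T$ via Korhonen's algorithm; if the width exceeds $k$ it rejects immediately (using $\tw(G)\leq\tbw(G)$, which is more direct than your detour through walls and Theorem~\ref{thm:tbw-obs}), and otherwise this same $T$ is lean and hence $k'$-lean for the larger $k'$ needed by the fan structure theorem. The folding procedure of Theorem~\ref{thm:folded-fan} is then run directly on $T$, and the key Observation~\ref{obs:folded-bag} shows that every bag of the folded decomposition is contained in the union of one bag of $T$ plus at most three adhesions of $T$. Since $T$ already has width $\leq k$, the folded bags automatically have size $O(k)$, with no merging step required. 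That is the idea you are missing: start from a decomposition that is simultaneously narrow and suitable as input to the folding, and observe that folding only adds a bounded number of adhesion vertices per bag.
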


We complement this result with a hardness proof for the parameterised problem
\textsc{Treebandwidth} of computing treebandwidth. This excludes the possibility of an
FPT algorithm to compute treebandwidth exactly under reasonable assumptions.

\begin{theorem}
\textsc{Treebandwidth} is in XALP, and is XALP-hard even when parameterised by
treewidth and maximum degree.
\end{theorem}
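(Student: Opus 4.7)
The algorithm non-deterministically guesses a tree-layout of $G$ in depth-first order, using the XALP stack to store the current root-to-top path. Each stack frame carries a vertex name and a single $O(\log n)$-bit counter $c_v$ initialised to $\deg_G(v)$. When pushing a vertex $v$ as a child of the current top, I scan the $k$ topmost ancestors; for every such ancestor $u$ with $uv\in E(G)$ I decrement both $c_u$ and $c_v$. When popping $v$, I verify that $c_v=0$. This local check is complete: in any tree-layout of bandwidth at most $k$, every edge of $G$ must be realised as an ancestor-descendant pair at tree-distance at most $k$ and is therefore decremented exactly once, at the push of its descendant endpoint. An edge escaping this pattern --- linking $v$ to a too-deep ancestor or to a non-comparable node --- necessarily leaves $c_v>0$ at pop time, so the tree-layout is rejected. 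The simulation runs in $f(k)\cdot n^{O(1)}$ time with $f(k)\log n$ working memory outside the stack and with a stack of polynomial size, matching the XALP regime.

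\textbf{Plan: XALP-hardness under the combined parameter.} For the lower bound, I would reduce from a canonical tree-structured XALP-complete problem such as \textsc{Tree-Chained Multicolour Independent Set}: given a rooted tree $T'$, a colourful vertex set $C_u$ with $\chi$ colours at each node, and a compatibility relation on every parent-child pair, select one vertex per colour at each node so that all parent-child selections are compatible. From such an instance I would build a graph $G$ containing one \emph{bag gadget} per node of $T'$ and one \emph{edge gadget} per edge of $T'$. Each bag gadget is crafted so that, in any tree-layout of $G$ of bandwidth at most some $k=\Theta(\chi)$, it must occupy a contiguous subtree, enumerate $C_u$, and expose its chosen colour-representatives as distinguished boundary vertices; the edge gadgets connect boundary vertices of adjacent bag gadgets and enforce compatibility via dedicated selector vertices. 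The tree-of-gadgets structure of $G$ then mirrors $T'$, and $G$ admits a tree-layout of bandwidth $k$ if and only if the source instance is a yes-instance.

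\textbf{Main obstacle.} The delicate point is enforcing this rigidity while keeping both treewidth and maximum degree bounded by a function of $\chi$. Bounded degree forbids the usual shortcut of clique- or universal-vertex-based anchoring, so I would implement each selector through low-degree path structures and colour-coded pendants whose bandwidth alone forces the intended alignment. Bounded treewidth then follows from a tree decomposition of $G$ refining $T'$ with bags of width $f(\chi)$. A last threat is that an adversarial tree-layout might try to fold two bag gadgets into the same subtree; this is ruled out by a counting argument, since each gadget contributes at least $\Omega(\chi)$ vertices along every ancestor-descendant chain it occupies, so the bandwidth budget $\Theta(\chi)$ cannot accommodate two of them simultaneously. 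Once the gadget specification is fixed, soundness and completeness reduce to routine verifications following the structure of $T'$.
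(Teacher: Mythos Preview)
Your DFS-with-counters scheme is genuinely different from the paper's proof, which instead keeps a window of $k$ consecutive vertices as a separator, uses Reingold's logspace connectivity to enumerate components below it, and pushes component representatives to the stack. Your approach is lighter in that it avoids the connectivity black box, but it has a gap you do not address: nothing in your description prevents the nondeterministic machine from pushing the same vertex twice or omitting a vertex entirely. You argue that an edge to a too-deep or non-comparable partner leaves $c_v>0$, but you never argue that the guessed object is a bijection $V(G)\to V(T)$. The fix is not hard --- if a vertex $w$ is never pushed, every neighbour $u$ of $w$ has the edge $uw$ undecremented at every instance of $u$, so $c_u>0$ at pop; combining this with a global push-counter that must equal $|V(G)|$ forces all $m_v=1$ --- but it needs to be stated. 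You should also make explicit that ``scan the $k$ topmost ancestors'' is implemented by holding those $k$ frames in working memory rather than by random access into the stack.

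\textbf{Hardness.} Here your proposal is only a plan, not a proof. You correctly name \textsc{Tree-Chained Multicolour Independent Set} as the source, but ``each bag gadget is crafted so that \ldots'' and ``implement each selector through low-degree path structures'' are placeholders, not constructions. The paper's reduction is concrete and rather different in flavour from what you sketch: it builds a \emph{trunk} of cliques following a subdivision of the input tree, uses \emph{cluster gadgets} (cliques of size $L+1$) to force designated cliques to appear consecutively in any tree-layout of bandwidth $L=\Theta(k)$, and encodes each colour choice as the \emph{fold position} of a long \emph{clique chain} attached between a node and its grandparent; compatibility on an edge $e_j$ is checked because two incompatible choices make two size-$4$ cliques land in the same narrow section of the trunk, exceeding the bandwidth budget. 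None of these mechanisms appear in your outline. One specific misconception: you write that ``bounded degree forbids the usual shortcut of clique- or universal-vertex-based anchoring''. It does not. The parameter is $\tw+\Delta$, so $\Delta$ may be any function of $k$; the paper freely uses cliques of size $\Theta(k)$ and this is precisely what makes the rigidity argument (via the cluster gadgets) work. Your attempt to avoid cliques altogether is self-imposed and leaves you without the main structural tool.
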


Following the strategy we used for fans, we also establish a structure theorem
for graphs excluding a $k$-dipole\footnote{The multigraph consisting of $2$
vertices with $k$ edges between them} as a topological minor. Before our two
structure theorems, the only structure theorem for graphs excluding a single
topological minor of unbounded degree is the structure theorem for graphs
excluding a clique as a topological minor of Grohe and Marx
\cite{HadwigerGroheMarx,HadwigerDvorak}. On the other hand, many studied graph
parameters can be characterised by a finite list of topological minors. We
review them to provide an exposition of how treebandwidth compares to other
parameters in the literature.

\paragraph*{Related literature}

\subparagraph{Bandwidth and complexity.}
We begin with a brief presentation of the literature on the parameter bandwidth.
The problem \textsc{Bandwidth} of computing the bandwidth of a graph is particularly hard.
It is NP-hard even when restricted to caterpillars of bounded hair length or bounded
degree \cite{BandwidthNPhard}. The hardness of \textsc{Bandwidth} on caterpillars extends
to the parameterised setting: as first claimed by Bodlaender, Fellows, and
Hallett~\cite{BandwidthWhierarchy}, the parameterised bandwidth problem is hard for W[$t$]
for all $t$ (see \cite{BandwidthXNLP} for a proof). More precisely, \textsc{Bandwidth} is
complete for the class XNLP of parameterised problems\footnote{XNLP is defined in
Section~\ref{sec:hardness}.}, as well as many other problems parameterised by the bandwidth
of the input graph, see \cite{XNLP0,XNLP1journal,XNLP2} and \cite{BandwidthPreXNLP} in the
classical setting. Approximating the bandwidth of a graph is APX-hard, even on
caterpillars~\cite{Unger98TheComplexity}, but there is an FPT algorithm to approximate
bandwidth on trees \cite{BandwidthTrees}. For more results on the exact or approximate
computation of bandwidth, see~\cite{Cygan11Exact}.

\subparagraph{Structure of bounded bandwidth graphs.}
Despite the early interest in bandwidth, the combinatorial structure of graphs
of bounded bandwidth is not completely understood. From the definition of bandwidth, it is
easy to observe that bounded bandwidth implies bounded pathwidth and bounded degree. 
The degree lower bound can be strengthened in terms of the so-called \emph{local density} 
of a graph (see \cite{ChinnCDH82TheBandwidth,ChungSeymourBandwidth}), but this still 
does not yield a characterisation of bandwidth. Indeed, some caterpillars have bounded 
local density but unbounded bandwidth \cite{ChungSeymourBandwidth}. The parameterised
approximation algorithm on trees of \cite{BandwidthTrees} is essentially based on the fact
that, for trees, the obstructions are those listed above. The following question of Chung
and Seymour still stands.

\begin{opquestion}
	Is it true that the bandwidth of a graph is bounded by a function of the maximum
	bandwidth over its tree subgraphs?
\end{opquestion}

\subparagraph{Topological bandwidth.}
One of the main obstacles for a structural characterisation of graphs of bounded
bandwidth is that bandwidth does not seem to behave well with graph relations
except for the subgraph relation. This lack of characterisation may explain
why computing or approximating the bandwidth of a graph is hard and has motivated the
consideration of topological bandwidth as an alternative graph parameter. The
\emph{topological bandwidth} of a graph $G$ is the minimum bandwidth over all the
subdivisions of $G$. A graph $H$ is a \emph{topological minor} of $G$ if $G$ contains a
subdivision of $H$ as a subgraph. By definition, topological bandwidth is monotone for the
topological minor relation. This makes it possible to characterise graphs of bounded
topological bandwidth as graphs excluding large binary trees and large stars as
topological minors~\cite{ChungSeymourBandwidth}. 
Excluding large complete binary trees and stars as topological minors
respectively bounds the pathwidth and the degree of a graph (the two trivial bandwidth
lower bounds discussed above). These obstructions imply an FPT approximation algorithm for
topological bandwidth (e.g. via the computation of the equivalent parameter cutwidth
\cite{cutwidth}).

\subparagraph{Tree-like generalisations of bandwidth.} It is relatively natural to consider
tree-like generalisations of bandwidth or cutwidth. They correspond exactly to optimising
the stretch and congestion of an embedding on a path as noted by Bienstock
\cite{Bienstock90OnEmbedding}, who then considers stretch and congestion of embeddings in
bounded degree trees. Surprisingly, in this context, it turns out that graphs admitting an
embedding of bounded stretch are exactly graphs admitting an embedding of bounded
congestion, which are exactly graphs of bounded degree and bounded treewidth
\cite{Bienstock90OnEmbedding,DingO95Some}. This is captured by the parameter \emph{domino
treewidth} which is the minimum width of a tree decomposition\footnote{Tree decompositions
are defined in \cref{subsec:treedec}} such that each vertex is contained in at most two
bags (called \emph{domino tree decomposition}), as domino treewidth is also known to be
bounded exactly on graphs of bounded degree and bounded treewidth
\cite{DominoTw1,DominoTw2}. In \cite{FominHT05GraphSearching}, the parameter
\emph{treespan} is introduced as a generalisation of bandwidth following its
characterisation by a
graph searching game which we describe later. We will show that this parameter is also
bounded exactly on graphs of bounded degree and bounded treewidth (\cref{thm:eq-treespan-dtw}).
Instead of a tree decomposition, one may consider a \emph{tree-partition}
$(T,\mathcal{P})$ of a graph $G$, where $\mathcal{P}$ is a partition of the
vertices of $G$ whose parts are indexed by nodes of a tree $T$ in a way that, for
every edge $xy$ of $G$, either $x$ and $y$ are in the same part of $\mathcal{P}$, or they
are contained in parts indexed by adjacent nodes of $T$. The width of the tree-partition
is the size of the largest part of $\mathcal{P}$ and the \emph{tree-partition-width} of
$G$, denoted $\tpw(G)$, is the minimum width of a tree-partition of $G$. The notion was
introduced by Halin and Seese \cite{Halin91,Seese85}. This can also be expressed in terms
of a graph morphism to a tree (with self-loops) where the width is the maximum number of
preimages of a node of the tree. Observe that the bandwidth of a graph is linearly related
to its path-partition-width (see \cite{DujmovicSW07}). Thereby tree-partition-width is an
accurate tree-like analogue of bandwidth, it captures graphs of bounded treewidth and
bounded degree (there is always a tree-partition of width $O(\Delta(G)\tw(G)),$ see
\cite{tpwDegree,tpwWood}), but also some graphs of unbounded degree. 

\subparagraph{Metric flavoured variants.} The following notions seem quite similar but the
embedding is more constrained by the graph $G$, leading to very different classes of
graphs. Following the idea of keeping close to each other adjacent vertices,
the \emph{tree-length}~\cite{DourisboureG07TreeDecomposition} measures
the maximum diameter of the subgraph induced by a bag of a tree decomposition (this is
related to quasi-isometry to trees see \cite{BergerS24BoundedDiameter}). 
In~\cite{CarrollGM06Embedding} and \cite{BartalCMN13Bandwidth}, they consider the stretch in
a tree $T$ which must be a DFS-tree of $G$. The resulting parameter, which they called
\emph{tree-bandwidth}, admits another definition in terms of tree-partitions with
additional constraints. Considering the stretch in spanning trees leads to the notion of
$t$-spanner introduced in \cite{CaiC95}.

\subparagraph{Treebandwidth.} 
We now move to a presentation of the parameter treebandwidth we consider in this paper.
The parameter was first considered in \cite{spanheight} where it was called
\emph{spanheight}. We observe that the \emph{treedepth} of a graph
$G$~\cite{NesetrilO12Sparsity}, denoted $\td(G)$, is also defined using tree-layouts.
Indeed, $\td(G)$ is the minimum depth of a tree-layout (called elimination forest in this
context), implying that $\tbw(G)\leq \td(G)$. Moreover, in the same way that the bandwidth
of a graph $G$ is a lower bound to the \emph{pathwidth} $\pw(G)$, the treebandwidth of $G$
is a lower bound to the \emph{treewidth} $\tw(G)$. We believe that treebandwidth is a
natural parameter that deserves a thorough study as it gathers several important features.
As mentioned earlier, treebandwidth admits several equivalent definitions, which we now
compare to similar definitions of known parameters.

It is known that treedepth, bandwidth, pathwidth and treewidth of a graph $G$ are
respectively the maximum clique number $\omega(H)$ (minus one) minimised over completions
$H$ of $G$ into a trivially perfect graph, a proper interval graph, an interval graph and
a chordal graph (see Figure~\ref{fig:chordal-param}). The recently introduced class of
proper chordal graphs~\cite{properchordal} is designed as a tree-like analogue of proper
interval graphs: a graph $G$ is proper chordal if it admits a tree-layout $T$ such that
every root-to-leaf path of $T$ induces a proper interval subgraph of $G$. Treebandwidth
is equal to $\omega(H)-1$ for some completion $H$ of $G$ into a proper chordal graph.

The inclusion relations between these graph classes witness the relation discussed above
between these parameters.

\begin{figure}[ht]
\begin{center}
\bigskip
\begin{tikzpicture}[thick,scale=0.7]
\tikzstyle{sommet}=[circle, draw, fill=black, inner sep=0pt, minimum width=4pt]

\begin{scope}[xshift=-3cm,yshift=0cm]
\draw[->] (-5,1.3) -- (-5,4.7) node[midway,above,sloped] {$\subseteq$};

\node[] (C) at (1.2,5) {} ;
\draw[] (C) node[] {Chordal};
\draw[] (0.2,4.5) rectangle (2.2,5.5);

\node[] (I) at (-2,3.8) {} ;
\draw[] (I) node[] {Interval};
\draw[] (-3,3.3) rectangle (-1,4.3);

\node[] (PC) at (2,3) {} ;
\draw[] (PC) node[] {Proper Chordal};
\draw[] (0.2,2.5) rectangle (3.8,3.5);

\node[] (PI) at (-2.8,1.8) {} ;
\draw[] (PI) node[] {Proper Interval};
\draw[] (-4.6,1.3) rectangle (-1,2.3);

\node[] (TP) at (2.1,1) {} ;
\draw[] (TP) node[] {Trivially Perfect};
\draw[] (0.2,0.5) rectangle (4,1.5);

\draw[<-] (1.2,4.5) -- (1.2,3.5);
\draw[<-] (1.2,2.5) -- (1.2,1.5);
\draw[<-] (-2,3.3) -- (-2,2.3);
\draw[->] (-1,2) -- (0.2,3);
\draw[->] (-1,4) -- (0.2,5);
\draw[<-] (-1,3.6) -- (0.2,1);
\end{scope}

\begin{scope}[xshift=5cm,yshift=0cm]

\draw[dashed,->] (2,1.3) -- (2,4.7) node[midway,below,sloped] {$\geq$};

\node[] (tw) at (1,5) {} ;
\draw[] (tw) node[] {$\tw$};

\node[] (pw) at (-1,3.8) {} ;
\draw[] (pw) node[] {$\pw$};

\node[] (tbw) at (1,3) {} ;
\draw[] (tbw) node[] {$\tbw$};

\node[] (bw) at (-1,1.8) {} ;
\draw[] (bw) node[] {$\bw$};

\node[] (td) at (1,1) {} ;
\draw[] (td) node[] {$\td$};

\draw[dashed,<-] (1,4.7) -- (1,3.3);
\draw[dashed,<-] (1,2.7) -- (1,1.3);
\draw[dashed,<-] (-1,3.5) -- (-1,2.1);
\draw[dashed,->] (-0.8,2.1) -- (0.8,2.7);
\draw[dashed,->] (-0.8,4.1) -- (0.8,4.7);
\draw[dashed,<-] (-0.8,3.5) -- (0.8,1.3);
\end{scope}

\begin{scope}
\draw[dotted,<->] (-0.7,5) -- (5.6,5);
\draw[dotted,<->] (-3.9,3.8) -- (3.6,3.8);
\draw[dotted,<->] (0.9,3) -- (5.4,3);
\draw[dotted,<->] (-3.9,1.8) -- (3.6,1.8);
\draw[dotted,<->] (1.1,1) -- (5.6,1);
\end{scope}

\end{tikzpicture}
\end{center}
\caption{The hierarchy of subclasses of chordal graphs (on the left) and the corresponding hierarchy of graph parameters (on the right)
obtained by taking the maximum clique minimised over completions into the class.}
\label{fig:chordal-param}
\end{figure}
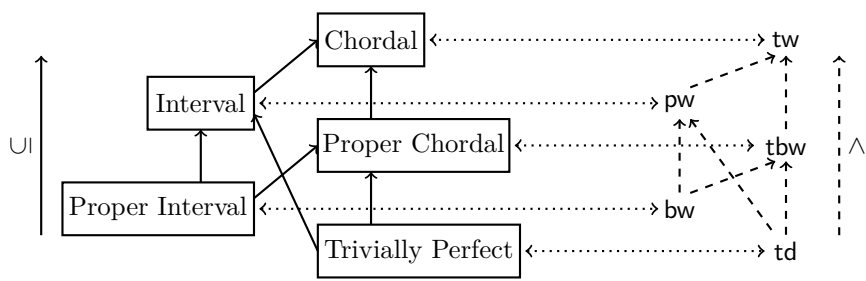

A second alternative definition of treebandwidth is obtained by considering monotone
graph searching games where a team of searchers aims at capturing a fugitive occupying
vertices and moving at infinite speed. The fugitive can be lazy (she moves only when the
searchers move to her position) or agile (she can move whenever she wants to), visible or
invisible to the searchers. It is known that for a graph $G$, the minimum number of
searchers to capture a lazy and invisible (or agile and visible) fugitive is $\tw(G)$,
while it is equal to $\pw(G)$ when the fugitive is agile and invisible
\cite{Searching1986}. Instead of minimising the searchers' number, let us consider
minimising the \emph{occupation time} defined as the maximum length of time a vertex
is occupied by a cop during a monotone\footnote{A search strategy is monotone if it
guarantees that a vertex occupied by a cop can never host the fugitive anymore. There is
no recontaminating.} search strategy. If the fugitive is agile and invisible, the
occupation time is equal to $\bw(G)$~\cite{RosenbergSudborough}. When the fugitive is lazy
and invisible, it is equal to a parameter called
\emph{treespan}~\cite{FominHT05GraphSearching} (see Figure~\ref{fig:search-param}).

\begin{figure}[h]
\centering
\begin{tabular}{|c|c|c|}
\hline
 & Cop number & Occupation time \\
\hline
Agile invisible & Pathwidth & Bandwidth \\
\hline
Lazy invisible & Treewidth & Treespan \\
\hline
Agile visible & Treewidth & Tree-bandwidth  \\
\hline
\end{tabular}
\caption{Table of the parameters corresponding to several variants of searching games.}
\label{fig:search-param}
\end{figure}
 
Interestingly, to our knowledge, occupation time (and thereby treebandwidth and treespan)
is the only known way to distinguish the cost of a monotone search against an agile
and visible fugitive from a monotone search against a lazy and invisible fugitive.
 
Finally, let us state an important feature of treebandwidth:
unlike bandwidth, treebandwidth behaves well with respect to the topological minor
relation. We can then hope that, similarly to topological bandwidth
\cite{ChungSeymourBandwidth}, the parameter treebandwidth admits a simple characterisation
by topological minor obstructions.

\begin{lemma}
If a graph $H$ is a subdivision of a graph $G$, then $\tbw(H) \leq \tbw(G) + 1$.
\end{lemma}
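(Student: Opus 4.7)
The plan is to build, from an optimal tree-layout $T$ of $G$ of bandwidth $k = \tbw(G)$, a tree-layout $T'$ of $H$ of bandwidth at most $k+1$, by inserting the subdivision vertices of $H$ into $T$.

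First I would check the warm-up case of a single elementary subdivision: one edge $xy \in E(G)$, with $x$ an ancestor of $y$ in $T$, is replaced by a two-edge path $x$-$w$-$y$ in $H$. Insert $w$ as the new parent of $y$ in $T$, stretching the tree edge above $y$ into a path of length two. The edges $xw$ and $wy$ then have $T'$-distances $d_T(x,y)\le k$ and $1$, and every surviving $G$-edge has its $T'$-distance either unchanged or increased by exactly one (depending on whether its $T$-path crosses the stretched tree edge above $y$); hence the bandwidth of $T'$ is at most $k+1$.

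For the general case, I would perform all insertions in a single coordinated pass. For each $y\in V(G)$ with parent $p$ in $T$, replace the tree edge $py$ in $T$ by a chain $p, q_1, \ldots, q_{L_y}, y$ whose internal vertices are the subdivision vertices of all subdivided edges $xy$ with $x$ a strict ancestor of $y$ in $T$. The subdivision vertices of a single $xy$ are placed contiguously as a block in the chain, and the relative order of blocks is chosen so that, for every $xy$, the vertex $v_1^{xy}$ adjacent to $x$ lies sufficiently high in the chain (close to $p$, hence to $x$) while $v_{\ell_{xy}}^{xy}$ adjacent to $y$ lies sufficiently low (close to $y$), keeping both endpoint edges within $T'$-distance $k+1$.

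Verifying the bandwidth amounts to bounding three kinds of edges: internal subdivision edges $v_i^{xy}v_{i+1}^{xy}$ are parent--child in $T'$; endpoint edges $xv_1^{xy}$ and $v_{\ell_{xy}}^{xy}y$ are controlled by the block ordering together with the bound $d_T(x,y)\le k$; and surviving $G$-edges of $H$ have $T'$-distance equal to their $T$-distance plus the total length introduced by chains along their $T$-path. The main obstacle I anticipate is this last point: a surviving $G$-edge's $T$-path may cross several of the stretched tree edges, each contributing a chain of potentially large length, so the cumulative expansion must be absorbed by the single unit of slack in $k+1$. The decisive structural fact to exploit is that each chain sits in the tree edge immediately above a single descendant endpoint $y$ and only affects $H$-edges entering the subtree rooted at $y$, which---combined with a careful bandwidth-respecting ordering of the blocks within each chain---should yield the desired bound.
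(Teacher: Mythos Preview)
Your proposal has a genuine gap: the inline-insertion scheme cannot be made to work, and the hand-waved ``decisive structural fact'' at the end does not rescue it.

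Take the tree-layout of $G$ to be the path $a,b,c$ with $E(G)=\{ab,ac\}$, so $\tbw(G)\le 2$. Let $H$ subdivide only the edge $ab$, replacing it by a path $a,u_1,\dots,u_M,b$ with $M$ large. In your construction all of $u_1,\dots,u_M$ are inserted in the chain between $a$ and $b$, giving the layout $a,u_1,\dots,u_M,b,c$. The surviving $G$-edge $ac$ now has tree-distance $M+2$, not $k+1=3$. The problem is not cumulative crossing of several chains (although that is \emph{also} fatal); a single long chain already destroys any edge whose $T$-path passes through it. Your observation that such an edge ``enters the subtree rooted at $y$'' is true but irrelevant: the chain above $b$ stretches the distance to \emph{every} vertex in that subtree, including $c$, by $M$. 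No ordering of the blocks inside the chain can undo this, because the chain's total length is fixed.

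The paper avoids the issue entirely by a different placement: for a subdivided edge $xy$ with $x$ an ancestor of $y$, it attaches the internal subdivision vertices as a \emph{new pendant path hanging below $y$}, rather than inserting them on the existing root-to-$y$ path. Because this new path is a fresh branch, no distance between original vertices of $G$ changes at all, and distinct subdivided edges get independent branches that do not interact. The only thing to check is that the edges of the subdivided path itself have stretch at most $k+1$; this is achieved by ordering the internal vertices $x_2,\dots,x_{\ell-1}$ along the pendant path in the folded pattern $x_2,x_{\ell-1},x_3,x_{\ell-2},\dots$, so that $x_2$ is at distance $d_T(x,y)+1\le k+1$ from $x$, $x_{\ell-1}$ is at distance $2$ from $y$, and consecutive $x_ix_{i+1}$ are at distance at most $2$. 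Your single-subdivision warm-up is correct, but the right generalisation is to branch off, not to splice in.
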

\begin{proof}
Suppose that $H$ is obtained from $G$ by subdividing an edge $xy$ of $G$ into a path
$P=(x=x_1,x_2,\dots,x_\ell=x)$. Let $T$ be a tree-layout of $G$ and assume that $x$ is an
ancestor of $y$ in $T$. Then, in $T$, we attach to $y$ a path $P_{T}$ of length $\ell-2$
and map the internal vertices of $P$ to the node of $P_{T}$ as indicated in
Figure~\ref{fig:subdivision}. Clearly, this construction increases by at most one the
bandwidth of the original tree-layout $T$.
\end{proof}
\begin{figure}[h]
\centering
\includegraphics[scale=.9]{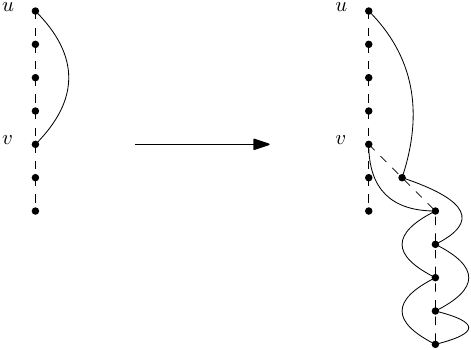}
\caption{Extending a tree-layout to a subdivision.}
\label{fig:subdivision}
\end{figure}

\subparagraph{Treebandwidth and tree-partition-width.} 
It is not immediate that treebandwidth and tree-partition-width define truly different
parameters. Indeed, it is easy to derive a tree-layout of bounded
bandwidth from a tree-partition of bounded width. By rooting the tree-partition
arbitrarily and replacing each bag by an arbitrary linear ordering of its vertices, one
derives $\tbw(G)\leq 2\cdot \tpw(G)$. However, some graphs of treebandwidth $2$ have
unbounded tree-partition-width: consider for example the $1$-subdivision of the
$k$-multiple of a $k$-star (see \cref{fig:obs-tpw}). 
Since tree-partition-width has been characterised asymptotically via forbidden topological
minors~\cite{obs-tpw} (\cref{fig:obs-tpw} depicts the set of topological minor
obstructions), and since treebandwidth also behaves well with respect to topological
minors, it is natural to search for a
characterisation of treebandwidth by topological minor obstructions. We observe that among the
obstructions to tree-partition-width, only the wall and the fan have unbounded
treebandwidth. This leads to the question whether there are other minimal obstructions.

\begin{figure}[h]
\centering
\includegraphics{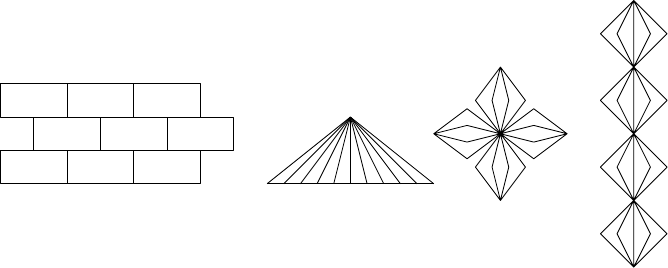}
\caption{The minimal obstructions to tree-partition-width: the $k$-wall, the $k$-fan,
the $k$-multiple of $k$-star, the $k$-multiple of $k$-path. Only the first two are obstructions to treebandwidth.}
\label{fig:obs-tpw}
\end{figure}

\subparagraph{Graphs excluding a $k$-fan.} As announced by \cref{thm:tbw-obs}, there are
no other obstructions. The proof relies on a
structure theorem for graphs excluding the $k$-fan as a topological minor. We show
that these graphs admit a tree decomposition with two special properties: one
limiting the degree of a vertex in each bag of the tree decomposition; the other limiting
what we could call the \emph{span} of the closed neighbourhood of any vertex over the
tree decomposition. Interestingly, while the structure of graphs excluding topological
minors has been investigated, known results are mostly about excluding cliques
\cite{HadwigerGroheMarx,HadwigerDvorak,ChunHungLiuThesis} and excluding bounded degree
graphs \cite{ForbBdDeg,ChunHungLiuThesis}.  
Keeping in mind that on subcubic graphs, the minor relation coincides with the topological
minor relation, the case of excluding a $k$-fan appears as a very interesting basic case to
investigate as it contains a unique vertex of unbounded degree. Motivated by this first
decomposition theorem, we push further our study on graphs excluding topological minors by
considering the case of \emph{dipoles} (graphs formed by two distinguished vertices, the
poles, and many internally vertex-disjoint paths between them).

\paragraph*{Organisation of the paper} 

In \cref{sec:spqr}, we consider the two simple cases of excluding a gem ($4$-fan), and
excluding $k$-fans in planar graphs, both via SPQR trees.
In \cref{sec:structure}, we present a first version of our structure theorems.
They are based on establishing the correct invariants to characterise an excluded fan
(resp. dipole) topological minor.
In \cref{sec:folding}, we present the folding technique and its application to our
structure theorems.
In \cref{sec:coro}, we present how to deduce approximation algorithms from applying
treewidth bounds to the folded structure theorems.
In \cref{sec:colouring}, we give a simple bound on $p$-centered colourings
and discuss its applications.
In \cref{sec:hardness}, we give the definitions of XNLP and XALP, and present the XALP-completeness result.
In \cref{sec:zoo}, we compare graph parameters related to treebandwidth from the
perspective of their topological minor obstructions.
In \cref{sec:def}, we prove the equivalent definitions announced in \cref{prop:equiv-intro}.

\section{Preliminaries}\label{sec:prelim}

We only consider simple graphs unless specified otherwise. Note, however, that any
multigraph can be turned into a simple graph by subdividing multiple edges and loops.

We say that two graph parameters $\p_1,\p_2$ are (polynomially) tied if there exist
(polynomial) functions $f_1,f_2$ such that, for every graph $G,$ $\p_1(G) \leq
f_1(\p_2(G))$ and $\p_2(G) \leq f_1(p_2(G))$. Since this defines an equivalence relation,
we may refer to two tied parameters as \emph{equivalent parameters}.

\subsection{Usual graph notations and terminology}

Given a graph $G$, and a vertex $x$, $N(x)$ denotes the set of neighbours of $x$ in $G$,
$\deg(x)$ the size of that set and $N[x]:= N(x) \cup \{u\}$. We denote the set of connected
components of a graph $G$ by $cc(G)$.
For any integer $k$, we denote by $P_k$ the path on $k$ vertices, $K_k$ the clique of size $k$, and $C_k$ the cycle of size $k$. 
Let $F_k$, the $k$-fan, denote the graph corresponding to $P_k$ with an additional
universal vertex. 
We call $k$-wheel the graph consisting of a cycle $C_k$ with an additional universal
vertex. We also denote by $K_{s,t}$ the complete bipartite graph where one size has size
$s$ and the other size $t$. The $k \times r$ grid corresponds to the graph on vertex set
$[k] \times [r]$ such that two vertices $(i,j)$ and $(i',j')$ are adjacent if
$|i-i'|+|j-j'|=1$. For some integer $k$, the $k$-wall, denoted by $W_k$, is the graph
obtained from the $2k \times k$ grid after removing all the edges of the form $\{
(x,y), (x,y+1) \}$ for $x+y$ odd and removing vertices of degree one. A $u$-$v$ path is a
path whose endpoints are $u$ and $v$. Given two subsets of vertices $U$ and
$V$, a $U$-$V$ path is a path with one endpoint in $U$ and one endpoint in $V$.
A \emph{$U$-$V$ separator} is a vertex set hitting all $U$-$V$ paths.

A \emph{minor model} of a graph $H$ in a graph $G$ is a mapping of each vertex $v$ of $H$ to
a connected subgraph $B_v$ of $G$, called \emph{branch set}, and a mapping of each edge
$uv$ of $H$ to an edge of $G$ incident to $B_u$ and $B_v$, such that the $B_v$ are
pairwise disjoint. $H$ is a minor of $G$ if there is a minor model of $H$ in $G$.
Given a subset of vertices $U$, a $U$-rooted minor model is a minor model whose branch
sets are all hit by $U$, and a $U$-rooted minor of $G$ is a graph that admits a $U$-rooted
minor model of $G$.

An \emph{immersion model} of a graph $H$ in a graph $G$, is an injective mapping of each
vertex $v$ of $H$ to a vertex $b_v$ of $G$, called \emph{branch vertex}, and a mapping of
each edge $uv$ of $H$ to a $u$-$v$ path $P_{uv}$ of $G$, such that the paths are pairwise
edge-disjoint.
An immersion model is a \emph{topological minor model} if the paths $P_{uv}$ are
pairwise internally vertex-disjoint. In this case, the model is a subgraph of $G$ which is a
subdivision of $H$.

A graph $H$ is a \emph{topological minor} (resp. \emph{immersion}) of a graph $G$ if there
is a model of $H$ in $G$. An immersion is \emph{strong} if the paths $P_{uv}$ of its model are
internally disjoint from the branch vertices, otherwise it is called \emph{weak}.
The wall, fan or dipole number of a graph $G$ is the largest integer $k$ such that $G$
contains, respectively, a $k$-wall, a $k$-fan and a $k$-dipole as a topological minor. 

We call tripod the graph consisting of the union of $3$ paths that share one of their
endpoints. Given three vertices $a,b,$ and $c$, an $(a,b,c)$-tripod is a tripod whose $3$ paths
have $a,b,c$ as their private endpoints.  Three vertices in a connected graph are always
connected by a tripod. This implies that minor models of subcubic graphs are topological
minor models.

The \emph{$k$-multiple} of a graph $G$ is the multigraph obtained by replacing each edge
$uv$ of $G$ by $k$ parallel edges between $u$ and $v$. To obtain a simple graph, it
suffices to subdivide each edge. We denote this graph by $G^{(k)}$.

A \emph{separation} is a pair $(A,B)$ of vertex sets, called \emph{sides},
such that $A \cup B = V(G)$, there are no edges between $A \setminus B$ and $B
\setminus A$. The \emph{order} of separation $(A,B)$ is $|A \cap B|$. The
separation is nontrivial if $A \subsetneq B$ and $B \subsetneq A$.
Given two vertex sets $X$ and $Y$, we denote by
$\mu(X,Y)$ the smallest order of a separation $(A,B)$ such that $X \subseteq A$
and $Y \subseteq B$. By Menger's theorem this is also the maximum number of
vertex-disjoint $X$-$Y$ paths. A graph is \emph{$k$-connected} if it has more
than $k$ vertices and no separation of order less than $k$.

Given a tree $T$ and an edge $uv \in E(T)$, we denote by $T^u_{uv}$ the subtree
$T'$ corresponding to the connected component of $T-uv$ containing $u$. Given three
nodes of $T$, their \emph{branching node} is the node which is at the intersection of the
paths between each pair. Assuming $T$ is rooted, the lowest common ancestor of two
nodes is their branching node with the root.

\subsection{SPQR trees}

An SPQR tree of a graph $G$~\cite{MacLanePlanar,SPQR1,SPQR2,SPQR3} is a tree $T$ such that
every node $u$ of $T$ is labelled with a multigraph $G_u$ and every edge $uv$ of $T$ is
mapped to an edge $e_u$ of $G_u$ and an edge $e_v$ of $G_v$, with the constraint that, for
every node $u$, an edge of $G_u$ is mapped to at most one edge of $T$ (these edges of
$G_u$ are called \emph{virtual edges}). A node $u$ of an SPQR tree can be of one of four
types, each defining different types of label multigraph $G_u$:
\begin{itemize}
\item if $u$ is an $\Series$-node, then  $G_u$ is a simple cycle of size at least three;
\item if $u$ is a $\Parallel$-node, then $G_u$ is a dipole (the graph on two vertices with at least three multiple edges between them);
\item if $u$ is a $\Q$-node, then  $G_u$ is the single edge graph;
\item if $u$ is an $\Rigid$-node, then $G_u$ is a simple $3$-connected graph.
\end{itemize}

The SPQR-tree $T$ of a $2$-connected graph $G=(V,E)$ is computed by: recursively finding a
separating pair $\{x,y\}$ of vertices, identifying two subgraphs $G_1=(V_1,E_1)$ and
$G_2=(V_2,E_2)$ of $G$ such that $V=V_1\cup V_2$, $V_1\cap V_2=\{x,y\}$ and $E$ is
partitioned into $E_1$ and $E_2$; then two adjacent nodes $u_1$ and $u_2$ are created and
respectively labelled by $G_1$ and $G_2$, each augmented with the virtual edge $xy$ (observe
that new virtual edges may generate multiple edges). We say that the virtual edge $xy$ is
mapped to the tree edge $u_1u_2$. The decomposition process continues on until the
resulting nodes are $\Series$-nodes, $\Parallel$-nodes or $\Rigid$-nodes, yielding a
canonical SPQR tree of $G$. 
Let us remark that if $G=(V,E)$ is a $2$-connected graph,
then $T$ does not contain any $\Q$-node.  Moreover, no pairs of $\Series$-node or of
$\Parallel$-nodes are adjacent in $T$. By construction, there is a bijection between the
virtual edges and the tree edges of $T$, and for every edge $xy\in E$, there exists a
unique node $u$ of $T$ such that $G_u$ contains the edge $xy$.
Let us state a few more useful observations on SPQR trees. 

\begin{observation} \label{obs_SPQR}
Let $T$ be the SPQR tree of a ($2$-connected) graph $G=(V,E)$. Then 
\begin{enumerate}
\item $(T,\beta)$, with $\beta(t)=V(G_t)$ for $t \in V(T)$, is a tree decomposition of $G$ such that every adhesion has
size two;
\item if $xy$ is a virtual edge of the graph $G_u$ associated to the edge $uv$
of $T$, then $G$ contains an $x,y$-path whose inner vertices are all contained
in the subtree of $T - u$ incident to $uv$ from the vertices of $G$ belonging
to $G_u$;
\item for every node $u$ of $T$, the multigraph $G_u$ is a topological minor of $G$.
\end{enumerate}
\end{observation}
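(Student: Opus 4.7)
The plan is to prove the three items together by induction on the recursive construction of the SPQR tree described above the statement, where at each step a separating pair $\{x,y\}$ is chosen, the graph is decomposed into two subgraphs $G_1, G_2$ sharing exactly $\{x,y\}$, and each subgraph is augmented with a new virtual edge $xy$.

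For item 1, I would verify the three defining properties of a tree decomposition. Coverage is immediate since vertices are never discarded during splitting; every non-virtual edge of $G$ is preserved in exactly one bag $V(G_t)$ by construction; and the subtree-of-bags property follows inductively, because whenever a vertex lies in two bags associated to adjacent nodes it must be one of the two vertices shared at that split. Consequently every adhesion has size exactly two.

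For item 2, let $S := \bigcup_{t} V(G_t)$ taken over all nodes $t$ of the subtree of $T-u$ incident to $uv$. By the recursive construction, $S$ is exactly one side of the separation $\{x,y\}$ that gave birth to the virtual edge in $G_u$, so $S \setminus \{x,y\}$ is a non-empty union of connected components of $G - \{x,y\}$. Since $G$ is $2$-connected, no single vertex separates $G$, so each component of $G - \{x,y\}$ contains a neighbour of both $x$ and $y$. Picking any such component inside $S \setminus \{x,y\}$ and concatenating an $x$-neighbour to a $y$-neighbour by a path inside the component yields an $x$-$y$ path whose internal vertices all lie in $S \setminus \{x,y\}$, which is precisely the set of vertices of $G$ appearing in bags of the required subtree and not in $V(G_u)$.

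For item 3, I would apply item 2 to every virtual edge of $G_u$ simultaneously. The branch vertices are the vertices of $G_u$; non-virtual edges of $G_u$ are already edges of $G$; and each virtual edge is replaced by the path produced by item 2. The main point requiring care is that these replacement paths are pairwise internally vertex-disjoint, turning the union into a subdivision rather than merely an immersion. Distinct virtual edges of $G_u$ correspond to distinct tree edges incident to $u$, and hence to vertex-disjoint subtrees of $T-u$; by item 1, any vertex common to two such subtrees must belong to $V(G_u)$, while the internal vertices of each replacement path lie outside $V(G_u)$. Thus the paths cannot collide, and the resulting subgraph is a subdivision of $G_u$, witnessing that $G_u$ is a topological minor of $G$.
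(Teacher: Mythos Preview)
The paper states this observation without proof, treating it as a standard property of SPQR trees; your argument is essentially the natural unfolding of the recursive construction and is correct.

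A couple of places could be tightened. In item~2 you assert that $S \setminus \{x,y\}$ is non-empty; this is true, but the reason is worth spelling out: the leaf nodes of $T$ are never $\Parallel$-nodes (a $\Parallel$-node has at least two virtual edges and hence at least two tree-neighbours) nor $\Q$-nodes (excluded since $G$ is $2$-connected), so the subtree $T^v_{uv}$ always contributes at least one vertex outside $\{x,y\}$. Equivalently, and more cleanly, one can argue via item~1: the split at $\{x,y\}$ that created the tree edge $uv$ was taken because $\{x,y\}$ is a separating pair, so both sides of the separation are non-trivial, and $S$ coincides with one of these sides by the tree-decomposition property.

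In item~3 your disjointness argument is the right one; you might make explicit that $V(G_u) \cap S = \{x,y\}$ follows from the subtree property in item~1 (a vertex of $\beta(u)$ appearing in a bag of $T^v_{uv}$ must lie in every bag along the $u$--$v$ path, hence in the adhesion), so that ``internal vertices lie outside $V(G_u)$'' and ``internal vertices lie in $S \setminus \{x,y\}$'' really are the same statement.
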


\subsection{Tree decompositions}\label{subsec:treedec}

A \emph{tree decomposition} of a graph $G$ is a pair $(T,\beta)$, where $T$ is a tree and
$\beta: V(T) \to 2^{V(G)}$ is a mapping of subsets of $|V(G)|$ called \emph{bags} to each
node of $T$ satisfying the following conditions:
\begin{enumerate}[label=(T\arabic*)]
\item\label{enum:vertex-treedec} for every vertex $v \in V(G)$, the set $\{t \in V(T) | v \in \beta(t)\}$ induces a
nonempty (connected) subtree of $T$.
\item\label{enum:edge-treedec} for every edge $uv \in E(G)$, there is a node $t \in V(T)$ such that $\{u,v\}
\subseteq \beta(t)$.
\end{enumerate}

The \emph{width} of a tree decomposition is $\max_{t \in V(T)} |\beta(t)|-1$. The
treewidth of a graph $G$ is the minimum width over its tree decompositions, and we denote it
by $\tw(G)$.

It will often be practical to fix a mapping $t:E(G) \to V(T)$, such that for each edge
$uv \in E(G),$ we have $\{u,v\} \subseteq \beta(t(uv))$. Such a mapping always exists by
the defining property \ref{enum:edge-treedec}. In particular, it makes it simpler to take
into account the existence of edges of the graph when parsing the tree decomposition. We
then say that edge $uv$ is \emph{introduced} in (bag) $t(uv)$ or that $t(uv)$ contains the
introduction of $uv$.

We denote by $\alpha:E(T) \to 2^{V(T)}$ the \emph{adhesions} of the decompositions,
defined by $\alpha(st):= \beta(s) \cap \beta(t)$. We may abusively refer to the nodes of a
tree decomposition as bags and to its edges as adhesions. The notation $\beta$ is extended
to subtrees of $T$ by considering the union of the bags contained in the subtree.

The torso at node $t$ is the graph $G_t$ on vertex set $\beta(t)$ with edges corresponding
to $G[\beta(t)]$ and cliques covering adhesions $\alpha(tt')$ for each $tt' \in E(T)$.
Note that the multigraphs that label $\Rigid$-nodes and $\Series$-nodes of an SPQR tree
are torsos.

The following properties can always be enforced by a preprocessing. The first three are
relatively standard and can be enforced in linear time. The last one can be enforced by
rerooting subtrees of the decomposition if two equal adhesions appear non-consecutively in
the tree decomposition. This may require an additional polynomial factor in the maximum
adhesion size (which will usually be bounded).

\begin{enumerate}[label=(T\arabic*)]
\setcounter{enumi}{2}
\item\label{enum:irreducible-treedec} $\forall t,t' \in V(T), \beta(t) \neq \beta(t')$
\item\label{enum:conn-treedec} $\forall t \in V(T), \forall T' \in cc(T-t), G[\beta(T')
\setminus \beta(t)]$ is connected
\item\label{enum:adh-neighbour} $\forall tt' \in E(T), v \in \alpha(tt') \Rightarrow
\forall T' \in cc(T-tt'), N(v) \cap (\beta(T_i) \setminus \alpha(tt')) \neq \varnothing$
\item\label{enum:unique-adh} $\forall s,t,t' \in V(T), \alpha(st)=\alpha(st') \Rightarrow
\alpha(st)=\alpha(st')=\beta(s)$

\end{enumerate}

A tree decomposition is \emph{$k$-lean} if it satisfies:
\begin{enumerate}[label=(T\arabic*)]
\setcounter{enumi}{6}
\item\label{enum:k-lean-adh} $\forall st \in E(T), |\alpha(st)| \leq k$
\item\label{enum:k-lean-menger} for all $t_1,t_2 \in V(T)$, $X_1 \subseteq \beta(t_1)$,
$X_2 \subseteq \beta(t_2)$, if there exists a separation $(A,B)$ with $X_1 \subseteq A$
and $X_2 \subseteq B$ of order at most $\min(k,|X_1|,|X_2|)$, then there exists $e \in
E(T)$ on the path between $t_1$ and $t_2$ in $T$ with $\alpha(e)$ a minimal separator and
$|\alpha(e)| \leq p$.
\end{enumerate}

If a tree decomposition satisfies all of the above conditions, we call it a
\emph{well-formed $k$-lean} tree decomposition. Importantly, such a decomposition exists
for all graphs (unlike tree decompositions of bounded width).

\section{Using SPQR trees for the easy cases of the gem and planar graphs}
\label{sec:spqr}

The proof of \cref{thm:tbw-obs} revolves around understanding the
structure of graphs that do not contain large fan topological minors. We begin
with two introductory cases: the gem, which is the smallest fan which is not
subcubic, and the class of planar graphs. In
both cases, we use SPQR trees which, combined with block-trees, can be seen as
canonical tree decompositions with adhesions of size at most two. Specific
properties of the SPQR tree of a planar graph's $2$-connected components allow
us to approximate the size of a fan topological minor. Using these properties,
we can in turn construct tree-layouts of bounded bandwidth. This is meant as an
introduction to the characterisation of graphs excluding a fan topological minor with
simpler proofs.

\subsection{Forbidding a gem}\label{sec:forb-gem}

We first consider the case of excluding the fan $F_4$, also called the
\emph{gem} graph, for which we provide an exact characterisation using
SPQR trees.

\begin{theorem} \label{th_gem}
A simple graph $G$ excludes $F_4$ as a topological minor if and only if
every $2$-connected component $C$ of $G$ has an SPQR tree $T$ with the following
properties:
\begin{itemize}
\item[(1)] for every $\Rigid$-node $t$ of $T$, $\Delta(G_t)\leq 3$. 
\item[(2)] for every vertex $u$ of $C$, $T$ does not contains two nodes $s$ and $t$ that are $\Parallel$-nodes or $\Rigid$-nodes, and such that $u$ is a vertex of $G_s$ and of $G_t$.
\end{itemize}
\end{theorem}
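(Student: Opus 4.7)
The plan is to prove both directions by reducing to a single $2$-connected component $C$ with its SPQR tree $T$: since $F_4$ is $2$-connected, any topological $F_4$-model in $G$ must live inside a single block. Throughout I would use Observation~\ref{obs_SPQR}: adhesions have size at most $2$, virtual edges at a node expand into real paths in the corresponding subtree, and $G_s$ is always a topological minor of $C$.

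For direction ($\Rightarrow$), I would assume $C$ has no $F_4$ topological minor. To establish condition (1), I would prove the following key lemma: any $3$-connected graph $H$ with $\Delta(H) \geq 4$ contains $F_4$ as a topological minor. Given a vertex $u$ of degree at least $4$, my plan is to observe that $H - u$ is $2$-connected and combine $3$-connectedness of $H$ with a Watkins--Mesner-style or ear-decomposition argument in $H - u$ to extract a path through four neighbours of $u$; together with the four edges $u v_i$ this gives the subdivision. Since $G_s$ is a topological minor of $C$, an $\Rigid$-node $s$ with $\Delta(G_s) \geq 4$ would then immediately contradict the hypothesis. For (2), I would suppose two distinct $\Parallel$- or $\Rigid$-nodes $s, t$ share a vertex $u$ and show that $C$ contains a topological $F_4$. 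At each of $s, t$ the torso-degree of $u$ is at least $3$, so after discarding the single torso-edge of $u$ used by the adhesion along the $s$-$t$ path $\pi$ in $T$, at least two torso-edges remain on each side. Expanding these via Observation~\ref{obs_SPQR}(2) yields four internally vertex-disjoint $u$-paths in $C$ ending at four distinct vertices $p_1, p_2$ on the $s$-side and $p_3, p_4$ on the $t$-side, routed into pairwise disjoint subtrees of $T$. The base $p_1 p_2 p_3 p_4$-path of $F_4$ would then be built as: a local $p_1 p_2$-path inside $G_s$ (via $3$-connectedness of $G_s$ if $s$ is $\Rigid$, or via the opposite pole $v_s$ if $s$ is $\Parallel$), a symmetric $p_3 p_4$-path inside $G_t$, and a $p_2 p_3$-path along the SPQR backbone of $\pi$ (through the non-$u$ vertex of each adhesion on $\pi$, well-defined since $C - u$ is connected).

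For direction ($\Leftarrow$), I would assume (1) and (2) and suppose for contradiction that $C$ contains a subdivision of $F_4$ with apex $\hat u$. The plan is to analyse the subtree $T_{\hat u}$ of bags containing $\hat u$: all its adhesions have the form $\{\hat u, w\}$, and by (1)--(2) it contains at most one $\Parallel$- or $\Rigid$-node, with any $\Rigid$-node having torso-degree at most $3$ at $\hat u$. A node-type case analysis then shows that $\hat u$ admits at most three independent ``fan directions'' in $C$: $\Series$-nodes give cycle-degree $2$, a single $\Rigid$-node gives torso-degree at most $3$, and, crucially, a single $\Parallel$-node---even with many parallel edges---has all torso-edges of $\hat u$ incident to the same opposite pole $v_s$, so any reconnection in $C - \hat u$ must traverse $v_s$ (as witnessed by $K_{2,k}$ containing no $F_4$), preventing four genuinely independent fan directions. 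This contradicts the four internally vertex-disjoint $\hat u$-paths required by the $F_4$-subdivision. The hardest step of the whole argument will be the key lemma used in ($\Rightarrow$)(1): a path through four prescribed vertices in a $2$-connected graph is not automatic in general ($K_{2,4}$ being the classical obstacle), so the argument must leverage the extra connectivity of $H$ to pick well-placed neighbours of $u$ or to reroute inside an ear decomposition of $H - u$. A secondary technicality will be ensuring the internal vertex-disjointness of all seven paths of the $F_4$-subdivision built in ($\Rightarrow$)(2), which I would arrange by routing the four torso-edges of $u$ at $s$ and $t$ into pairwise disjoint SPQR subtrees of $T$.
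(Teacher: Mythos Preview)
Your plan for $(\Rightarrow)(1)$ has a genuine gap. You want to find, in $H-u$, a path through four neighbours of $u$, and you correctly flag $K_{2,4}$ as the obstruction to such linkages in $2$-connected graphs. But that obstruction actually materialises here: take $H=K_{3,4}$ with $u$ on the size-$3$ side; then $H$ is $3$-connected, $\deg_H(u)=4$, yet $H-u=K_{2,4}$ and there is no path through all four vertices of the size-$4$ side. Since $u$ has exactly those four neighbours, no ``well-placed'' choice is available, and no rerouting inside $H-u$ can help. The paper sidesteps this entirely: it first builds a subdivided $F_3$ on $u$ and three neighbours $v_1,v_2,v_3$ (two disjoint $v_2$--$v_1$ and $v_2$--$v_3$ paths avoiding $u$, immediate from $3$-connectivity), then uses $3$-connectivity of $G_t$ itself---not of $G_t-u$---to route two disjoint paths from $v_4$ to the existing $F_3$ avoiding $u$; one of them lands away from $v_2$, and attaching it together with the edge $uv_4$ upgrades the $F_3$ to a subdivided $F_4$. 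Crucially the fourth spoke is allowed to be subdivided, so $v_4$ need not itself sit on the fan's base path---this is exactly what circumvents the $K_{2,4}$ obstacle.

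Your $(\Rightarrow)(2)$ is close in spirit to the paper's gluing argument, though the phrasing ``four distinct vertices $p_1,\dots,p_4$ routed into pairwise disjoint subtrees'' breaks down when $s$ or $t$ is a $\Parallel$-node, since then all torso-edges at $u$ go to the same opposite pole; this is easily patched along the lines you sketch. For $(\Leftarrow)$ your case analysis can be made to work, but the paper's argument is far cleaner and avoids it altogether: once you know $\hat u$ lies in a unique $\Parallel$-node with opposite pole $v$, observe that $v$ is the \emph{only} vertex of $G$ admitting three internally vertex-disjoint paths to $\hat u$, whereas any $F_4$-subdivision provides two such vertices (the images of the two degree-$3$ path vertices). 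This gives an immediate contradiction with no further routing analysis.
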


The first condition of Theorem~\ref{th_gem} shows that excluding $F_4$ imposes
a bound on the maximum degree of the torsos of $\Rigid$-nodes. As we will see
this property will generalise for larger fans in the planar case. The
second property indicates that a given vertex cannot appear in too many
separating pairs. This also generalises to planar graphs. In the case of large fans in
non-planar graphs, we will have to use more complicated properties because the
shape of a fan model in a tree decomposition will be significantly less
constrained. We will, however, show that it is possible to compute tree decompositions
with bounded degrees in bags and bounded spans of vertices.

\begin{proof}
Since the gem is a $2$-connected graph, any subdivision of a gem is
contained in a $2$-connected component of $G$. So let us assume without
loss of generality that $G$ is $2$-connected and let $T$ be an SPQR tree of $G$.
\begin{description}[listparindent=1.5em]
\setlength{\itemsep}{7pt}
\item[($\Rightarrow$)] We first make the following useful claim. Note that the
existence of three neighbours is not an hypothesis since $3$-connectedness
implies degree at least $3$.
\begin{claim}\label{claim:ordered-f3}
For any vertex $u$ in the torso $G_t$ of an $\Rigid$-node $t$, and any three
neighbours $v_1,v_2,v_3$ of $u$ in $G_t$, there is a subdivided $F_3$ subgraph
in $G_t$ with universal vertex $u$, and whose subdivided $P_3$ visits
$v_1,v_2,v_3$ in this order.
\end{claim}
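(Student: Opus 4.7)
The plan is to reduce the claim to a classical statement about $2$-connected graphs. Since $t$ is an $\Rigid$-node, the torso $G_t$ is by definition a simple $3$-connected graph. Hence removing the vertex $u$ yields a graph $G_t - u$ that is $2$-connected: any cut vertex $w$ of $G_t - u$ would give a separator $\{u,w\}$ of $G_t$ of order $2$, contradicting $3$-connectedness. Moreover, $G_t - u$ contains the three distinct vertices $v_1, v_2, v_3$ (which exist as neighbours of $u$ since $3$-connectedness forces minimum degree at least $3$).

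I would then invoke the following classical fact: in any $2$-connected graph $H$, for any three vertices $a, b, c$, there is an $a$-$c$ path in $H$ passing through $b$. A clean way to see this is to add an auxiliary vertex $w$ adjacent only to $a$ and $c$; the resulting graph $H'$ remains $2$-connected, so $w$ and $b$ lie on a common cycle $C$ in $H'$ (a standard consequence of $2$-connectedness). Since $w$ has degree $2$ in $H'$, the cycle $C$ must use both edges $wa$ and $wc$, and $C - w$ is the desired $a$-$c$ path through $b$ in $H$.

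Applying this to $H = G_t - u$ with $a = v_1$, $b = v_2$, $c = v_3$, one obtains a path $P$ in $G_t - u$ from $v_1$ to $v_3$ passing through $v_2$. The subgraph of $G_t$ consisting of $P$ together with the three edges $u v_1$, $u v_2$, $u v_3$ is then a subdivision of $F_3$ whose universal vertex is $u$ and whose subdivided $P_3$ visits $v_1, v_2, v_3$ in this order. The three edges from $u$ are trivially internally disjoint and avoid $P$, while the $v_1$-$v_2$ and $v_2$-$v_3$ subpaths of $P$ are internally disjoint simply because $P$ is a simple path.

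There is no serious obstacle: the argument is a direct combination of the definition of an $\Rigid$-node with elementary $2$-connectivity. The only point worth verifying carefully is that $G_t - u$ is genuinely $2$-connected rather than merely connected, which is immediate from the $3$-connectedness of $G_t$ as observed above.
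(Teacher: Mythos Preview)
Your proof is correct and follows essentially the same approach as the paper. Both arguments reduce to the $2$-connectedness of $G_t - u$ and then produce a $v_1$--$v_3$ path through $v_2$ avoiding $u$; the paper phrases this as two internally disjoint paths from $v_2$ to $\{v_1,v_3\}$ in $G_t$ avoiding $u$ (a direct Menger/fan argument), while you invoke the equivalent ``three vertices on a path'' fact and spell out its proof via an auxiliary vertex --- but the resulting subdivided $F_3$ is the same object built the same way.
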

\begin{subproof}
The graph $G_t$ contains edges $uv_1$ and $uv_3$. Since $G_t$ is $3$-connected,
$G_t$ contains two internally vertex-disjoint paths $P_1,P_2$ from $v_2$ to
respectively $v_1$ and $v_2$, with both $P_1$ and $P_2$ avoiding $u$. Now
observe that the union of the edges $uv_1,uv_2,uv_3$ and the paths $P_1$ and
$P_2$ correspond to the required $F_3$.
\end{subproof}

We show property (1) by contraposition. Suppose that $t$ is an
$\Rigid$-node of $T$ containing a vertex $u$ of degree at least four. Let
$v_1,v_2,v_3,v_4$ be neighbours of $u$ in $G_t$. We apply
\cref{claim:ordered-f3} to $u$ and $v_1,v_2,v_3$ to obtain $H$, a subdivided
$F_3$ on $u,v_1,v_2,v_3$ in $G_t$.
Since $G_t$ is $3$-connected, there exist two vertex-disjoint paths
from $v_4$ to $H$, both avoiding $u$. Being vertex-disjoint, it must be that
one of these two paths, denote it by $P'$, does not end in $v_2$. We conclude that
the union of $H$ and $P'$ is a subdivided $F_4$ contained in $G_t$.
This implies that $G$ also contains a subdivided $F_4$ because $G_t$ is a
topological minor of $G$.

Using \cref{claim:ordered-f3}, we deduce that, for any $\Rigid$-node $t$, for
any pair of adjacent vertices $x,y$ of $G_t$, there is a subdivided $F_3$ in
$G_t$ such that $x$ is the center of the fan, and $y$ is the first vertex of
its path.
Similarly, for any $\Parallel$-node $t$, for any edge $e=xy$ of $G_t$, there is
a subdivided $F_3$ in $G$ (assuming $G$ is simple) whose model has the first
vertex of its path on the other side of the adhesion of $T$ corresponding to
$e$.
It remains to show that, if a vertex $u$ of $G$ is contained in the torsos of two
nodes of type $\Rigid$ or $\Parallel$, then we can combine the models of $F_3$
centered on $u$ obtained from each node.

Without loss of generality, we consider two such nodes $s,t$ that are either adjacent
in $T$ or with exactly one $\Series$-node $r$ between them in $T$. Let
$e_1=uy_1$ and $e_2=uy_2$ be the virtual edges corresponding to adhesions in
the $s$-$t$ path in $T$ which are incident to $s$ and $t$ respectively. In all
cases, we can obtain a $y_1$-$y_2$ path $P^\Series$ that is internally vertex
disjoint from vertices of $G_s$ and $G_t$. Let $Q_s,Q_t$ be the vertex sets of
models of $F_3$ centered on $x$, using $e_1$ and $e_2$ respectively to link to the
first vertex on their path, obtained from $s$ and $t$. 
Let $Q'_s$ and $Q'_t$ correspond to the subgraphs of $G$ given by the models
$Q_u$ and $Q_v$, without their part crossing the adhesions $\{u,y_1\}$ and
$\{u,y_2\}$ respectively. Observe that $Q'_s \cup Q'_t \cup P^\Series$ is a
subdivided $F_4$ centered on $u$, see \cref{fig:glue-F3}. In particular, we
cannot obtain a dipole from this construction because either one of $s$ and $t$
is an $\Rigid$-node, or $P^\Series$ contains an edge.

\item[($\Leftarrow$)] Assume that conditions (1) and (2) hold and suppose that $G$
contains a subgraph $H$ that is a subdivision of $F_4$. Let $u$ be the degree
$4$ vertex of $H$. Since no pair of $\Series$-nodes are adjacent, every vertex
not appearing in any $\Parallel$-node nor $\Rigid$-node has degree $2$ in $G$.
Hence, by (2), $T$ contains exactly one node $t$ that is a $\Parallel$-node or an
$\Rigid$-node such that $u$ is a vertex of $G_t$. By (1), $t$ is not an
$\Rigid$-node as it would contradict $u$ having degree at least $4$. Thereby,
$t$ is a $\Parallel$-node such that the vertex set of $G_t$ is $\{u,v\}$. It
follows that $v$ is the unique vertex of $G$ such that $G$ contains three
vertex-disjoint paths between $u$ and $v$. In turn, this implies a
contradiction since there exist two vertices with this property in the subgraph
$H$ of $G$.\qedhere
\end{description}
\end{proof}

\begin{figure}
\centering
\includegraphics{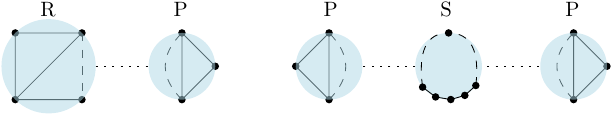}

\caption{The two main cases when constructing a gem that is not fully contained in an $\Rigid$-node.}
\label{fig:glue-F3}
\end{figure}

\subsection{Treebandwidth of planar graphs}

Let us now asymptotically characterise the presence of a large fan topological
minor in a planar graph. As in Theorem~\ref{th_gem}, we retrieve one condition
bounding the maximum degree of the torsos of $\Rigid$-nodes and another one
limiting the \emph{span} of a vertex over the separating pairs.

\begin{theorem}\label{thm:planar-structure}
Let $T$ be an SPQR tree of a $2$-connected planar graph $G=(V,E)$, and let $k \geq 3$. If $G$ does not contain $F_k$ as a topological minor, then:
\begin{itemize}
\item[(1)] for every $\Rigid$-node $t$, $\Delta(G_t)<k$;
\item[(2)] for every vertex $x$ of $G$, the number of separating pairs containing $x$ that are mapped to edges belonging to a common path of $T$ is at most $2k+1$.
\end{itemize}
Conversely, if conditions (1) and (2) hold, then $G$ does not contain $F_{2k^2}$ as a topological minor.
\end{theorem}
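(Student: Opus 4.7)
I argue both directions by contrapositive.

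For the forward direction, part~(1), suppose some $\Rigid$-node torso $G_t$ has a vertex $x$ of degree at least $k$. Since $G_t$ is 3-connected and planar, Whitney's theorem gives it a unique planar embedding, so the neighbours $v_1,\dots,v_d$ of $x$ in $G_t$ have a well-defined cyclic order around $x$. The graph $G_t-x$ is 2-connected planar, so each of its faces is bounded by a simple cycle; in particular, the face vacated by $x$ is bounded by a cycle $C$ that visits $v_1,\dots,v_d$ in this same cyclic order. The arc of $C$ joining $k$ consecutive neighbours, together with $x$ and the edges $xv_i$, is a subdivided $F_k$ in $G_t$, which lifts to a subdivided $F_k$ in $G$ via Observation~\ref{obs_SPQR}(3).

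For part~(2), let $\{x,y_1\},\dots,\{x,y_m\}$ with $m=2k+2$ be separating pairs mapped to edges $e_1,\dots,e_m$ on a common tree path $u_0,u_1,\dots,u_m$ of $T$. No intermediate node $u_i$ is a $\Parallel$-node: such a dipole would equal both $\{x,y_i\}$ and $\{x,y_{i+1}\}$, forcing the separating pairs to coincide. Hence each $u_i$ ($1\le i\le m-1$) is $\Series$ or $\Rigid$ and $G_{u_i}-x$ is connected (a path in the $\Series$-case, a 2-connected graph in the $\Rigid$-case), providing a path $Q_i$ from $y_i$ to $y_{i+1}$ avoiding $x$. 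Expanding virtual edges via Observation~\ref{obs_SPQR}(2), the concatenation $Q=Q_1\cdots Q_{m-1}$ is a path in $G$ from $y_1$ to $y_m$ avoiding $x$, whose interior lies in the subtrees hanging off the main path. For the $k$ spokes I choose $k$ odd-indexed $y_j$'s and route the virtual edge $xy_j$ through the far side subtree of $e_j$ (leftwards for small $j$, rightwards for large $j$), so that the resulting $x$-$y_j$ paths are pairwise internally disjoint and disjoint from the interior of $Q$, yielding a subdivided $F_k$. The main obstacle is precisely this simultaneous disjoint routing: the expansion of each chosen $xy_j$ may traverse subtrees overlapping those of $Q$ or other spokes, and the left/right splitting of the selected indices is what keeps them disjoint.

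For the converse, assume (1) and (2) hold and suppose $G$ contains $F_{2k^2}$ with centre $c$ and rib path $R$. Let $T_c=\{t\in V(T):c\in\beta(t)\}$. Every edge of $T_c$ is an adhesion $\{c,\cdot\}$, hence a separating pair containing $c$, so (2) yields $\mathrm{diam}(T_c)\le 2k+1$. The rib $R$ is a path in $G-c$, and between consecutive bags of $T_c$ it can only cross through the unique non-$c$ vertex of the adhesion. In each bag, the number of real edges incident to $c$ is at most $k-1$ (if $\Rigid$, by (1)), at most $2$ (if $\Series$), or at most $1$ (if $\Parallel$, since $G$ is simple). Combining the bounded diameter of $T_c$, the SPQR alternation (no two adjacent $\Series$- or $\Parallel$-nodes), the above per-bag bounds, and the fact that the long rib $R$ is constrained to thread through a bounded-length portion of $T_c$ (funneling spokes through few bags), one derives an $O(k^2)$ upper bound on the number of internally disjoint $c$-to-$R$ paths, contradicting $2k^2$. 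The delicate point here is analogous to the previous paragraph: $\mathrm{diam}(T_c)\le 2k+1$ alone does not bound $|V(T_c)|$, so the counting must leverage the structural interaction between $T_c$ and the long rib $R$ rather than sum bluntly over bags.
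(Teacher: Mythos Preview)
Your argument for part~(1) is correct and amounts to the same fact the paper proves (that a vertex of degree $d$ in a $3$-connected planar graph is the centre of a subdivided $d$-wheel); you phrase it via the $2$-connectedness of $G_t-x$, the paper via the faces incident to $x$.

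Part~(2) has two real problems. First, your exclusion of $\Parallel$-nodes is unjustified: condition~(2) bounds the number of tree edges on a path whose adhesion contains $x$, not the number of \emph{distinct} adhesions, so a $\Parallel$-node at $u_i$ merely gives $y_i=y_{i+1}$, which is not a contradiction. Second, and more importantly, your spoke routing does not give disjointness. Routing the virtual edge $xy_j$ ``leftwards'' via Observation~\ref{obs_SPQR}(2) only guarantees the interior lies in the subtree of $T$ on the left of $e_j$; for $j<j'$ both routed leftwards these subtrees are nested, so the two $x$-$y_j$ paths can share internal vertices. Choosing odd indices or splitting left/right does not fix this. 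The paper sidesteps the issue by working \emph{locally} at each node $r$ on the path: at a $\Parallel$-node it extracts a spoke $P_i$ from $x$ to $y_i$ using a third parallel branch, at an $\Rigid$-node it uses $3$-connectedness of $G_r$ to extract a spoke plus two rib segments, and at an $\Series$-node just a rib segment. Each of these pieces lives in $\beta(T)\setminus V^r$ for its own node $r$, so they are automatically pairwise internally disjoint; since no two $\Series$-nodes are consecutive, at least $k$ of the $2k+1$ nodes yield a spoke.

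For the converse, you correctly identify the obstacle (bounded diameter of $T_c$ does not bound $|V(T_c)|$) but do not overcome it. The paper's missing ingredient is a claim that \emph{all} branch vertices of degree at least $3$ in the fan model---the centre $c$ and every degree-$3$ rib vertex---are introduced along a single path $P$ of the SPQR tree; this is proved by a short argument exploiting that every adhesion has order $2$. Once everything is on one path $P$, condition~(2) bounds the length of $P\cap T_c$ by $2k+1$ and condition~(1) bounds the contribution of each node by $k-1$, giving $(2k+1)(k-1)<2k^2$ degree-$3$ branch vertices. Your appeal to the rib ``threading through a bounded-length portion of $T_c$'' gestures at this but does not supply the path claim, without which there is no bound.
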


\begin{proof}

\begin{description}[listparindent=1.5em]
\setlength{\itemsep}{7pt}
\item[($\Rightarrow$)]
Let us first prove condition (1). For now, let us assume that $G$ is $3$-connected.

\begin{claim}\label{lem:3conn-planar-wheel}
Every vertex $x$ of a $3$-connected planar graph is the center of a subdivision of a $\deg(v)$-wheel.
\end{claim}

\begin{subproof}
Let $\Sigma$ be a plane embedding of $G$, and $v$ be an arbitrary vertex of
$G$. Let $H$ be the subgraph of $G$ induced by the set of vertices $V_H=\{x\in
V\mid \exists f\mbox{ a face of }\Sigma \mbox{ and } x,v\in\beta(f)\}$. We
prove that $H$ is a subdivision of a wheel. To that aim we consider a face $f$
incident to $v$, that is $\beta(f)=\langle x_1,x_2, \dots,
x_k\rangle$ with $k>3$ and $v=x_1$. We show that only $v$, $x_2$ and $x_k$
may appear in other faces incident to $v$. For the sake of contradiction,
suppose that there exists $x_i$ with $2<i<k$ such that $x_i$ is a vertex on the
boundary on some other face $f'\neq f$ incident to $v$. We consider the boundary $\beta(f')=\langle
x'_1,x'_2,\dots,x'_{k'}\rangle$ with $v=x'_1,x'_j=x_i$ (assume coherent cyclic
orderings of the faces). Since $f$ and $f'$ are faces of $\Sigma$, there is
no path from $x_2$ to $x_k$ that avoids $v$ and $x_i=x'_j$, see
Figure~\ref{fig:wheel}. Indeed, $x_2$ is embedded inside the cycle 
$C=\langle v=x_1,\dots,x_i=x'_j,\dots,x'_{k'} \rangle$ and the only edges
incident to $C$ that are embedded outside of $C$ in $\Sigma$ are incident to
$v$ and $x_i=x'_j$. In other words, the pair $\{v,x_i\}$ is a separator for the
vertices $x_2$ and $x_k$, which contradicts $G$ being $3$-connected.
\end{subproof}

\begin{figure}[h]

\centering
\includegraphics{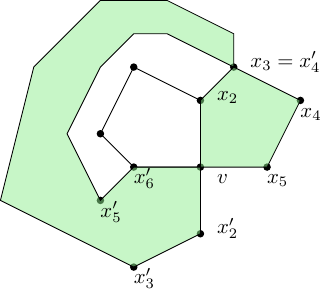}

\caption{Illustration of the contradiction if two faces share two vertices that are not consecutive.}
\label{fig:wheel}
\end{figure}

Since a $k$-wheel contains $F_k$ as a subgraph, we conclude that if $G$ is a
$3$-connected planar graph excluding $F_k$ as a topological minor, then
$\Delta(G)<k$, implying condition (1).

Let us now prove condition (2).
Let $P$ be a path of $T$ containing at least $2k$ nodes such that a vertex $x$ of $G$ belongs to every separating pair corresponding to an edge of $P$.
Let $r$ be a node of $T$, and let $s$ and $t$ be its two neighbours on $P$. We let $V^r$ denote the vertices of $G$ appearing in the nodes of $T_{rs}^s$ or in the nodes of $T_{rt}^t$. We extract a fan from $G$ by considering the different types of nodes on $P$.
\begin{itemize}
\item Suppose that $r$ is a $\Parallel$-node. From \cref{obs_SPQR}, 
if $x,y_i$ are the two vertices of $G_r$, then $G$ contains at least three internally vertex-disjoint $x$-$y_i$ paths and one of them, call
it $P_i$, is internally vertex-disjoint from the vertices of $V^r$.

\item Suppose that $r$ is an $\Series$-node and let $\{x,y_i\}$ and $\{x,y_{i+1}\}$ be the two separating pairs corresponding to
the edges $rs$ and $rt$ of $P$. Then, from \cref{obs_SPQR}, $G$ contains a
$y_i$-$y_{i+1}$ path $Q_i$ that is internally vertex-disjoint from the vertices
of $V^r$.

\item Suppose that $r$ is an $\Rigid$-node and let $\{x,y_i\}$ and $\{x,y_{i+2}\}$ be the two separating pairs corresponding to
the edges $rs$ and $rt$ of $P$. Since $x$ has degree at least $3$ in $G_r$, it has a neighbour $y_{i+1}$ distinct from $y_i$ and $y_{i+2}$.
Since $G_r$ is $3$-connected, using \cref{obs_SPQR} to lift paths from $G_r$ to $G$, $G$ contains three internally vertex-disjoint paths $P_{i+1}$, $Q_i$ and $Q_{i+1}$ 
respectively from $x$ to $y_{i+1}$, $y_i$ to $y_{i+1}$ and from $y_{i+1}$ to $y_{i+2}$, with $P_{i+1},Q_i,Q_{i+1}$ internally vertex-disjoint from the vertices of $V^r$.
\end{itemize}

Since there cannot be two consecutive $\Series$-nodes on $P$ and $P$ contains
at least $2k+1$ nodes, it contains at least $k$ nodes that are $\Parallel$ or
$\Rigid$. From the arguments above, we have identified
at least $k$ vertices $y_1,\dots, y_k$ of $G$ and pairwise internally vertex-disjoint
paths $Q_i$ from $y_i$ to $y_{i+1}$ and $P_i$ from $x$ to $y_i$.
Together, these paths witness the existence in $G$ of a
subdivided $k$-fan\footnote{We can obtain a $(k+2)$-fan by adding an
$x$-$y_1$ path and an $x$-$y_k$ path.}, implying that (2) holds.

\item[($\Leftarrow$)] Conversely, suppose that conditions (1) and (2) hold. 

\begin{claim}
The vertices of degree at least $3$ in an $F_{\ell}$ topological minor in $G$ are all introduced along a path of the SPQR tree.
\end{claim}

\begin{subproof}
Let $x$ denote the vertex of degree $\ell$ of an $F_{\ell}$ model in $G$, and
$Y$ the set of its vertices of degree $3$.
Assume towards a contradiction that the property does not hold, then there is a
node $t$ of the SPQR tree $T$ such that there are at least $3$ subtrees of
$T - t$ that contain (distinct) vertices of $Y \cup \{x\}$ that are not contained in $G_t$.
Each subtree contains at least one vertex of $Y$, as otherwise, there would be a
separator of order $2$ between $x$ and $Y$. One of the subtrees $T'$ contains a
vertex $y$ of $Y$ that appears in the fan model's path between two vertices of
$Y$ not contained in $T'$. We deduce that there are two vertex-disjoint paths
that leave the subtree.
Since they cover the adhesion of size $2$ incident to $t$, and $y$ also has a
vertex-disjoint path to $x$, $x$ must appear in $T'$ and not $t$. However, this
implies that the vertex-disjoint paths from $x$ to the vertices of $Y$ outside
$T'$ should be hit by the same adhesion between $t$ and $T'$, a contradiction.
\end{subproof}

We can now directly use the conditions (1) and (2). Our fan model contains at most
$(2k+1)(k-1)$ vertices of degree $3$.
This means we cannot find a model of $F_\ell$ for $\ell=(2k+1)(k-1)+3 \leq 2k^2$.\qedhere
\end{description}
\end{proof}

Note that, in the above proof, planarity is only used to bound the degree in $\Rigid$-nodes,
and the rest of the arguments do not require planarity. However, this degree bound is
crucial to use the following immediate consequence of the construction of
\cite{tpwDegree,tpwWood}. This construction can be efficiently implemented
\cite{algo-tpw}, see \cref{thm:planar-algo} for details.

\begin{lemma}\label{tpw-root}
Given a graph $G$, there is a tree-partition of width $O(\Delta(G)\tw(G))$.
Moreover, we can enforce any subset of $O(\Delta(G)\tw(G))$ vertices to be in
the root bag of the tree-partition.
\end{lemma}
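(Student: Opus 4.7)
The existence of a tree-partition of width $O(\Delta(G)\tw(G))$ is the classical theorem of Ding--Oporowski~\cite{tpwDegree} and Wood~\cite{tpwWood}, with an efficient algorithmic version in \cite{algo-tpw}. I will invoke this as a black box and focus on the ``moreover'' rooting requirement.

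The plan is to look inside the standard construction. It takes as input a rooted tree decomposition $(T,\beta)$ of $G$ of width $\tw(G)$, and assigns each vertex $v$ to the topmost bag $t(v)$ containing $v$; the parts of the resulting tree-partition are $P_t = \{v : t(v)=t\}$, and vertices lying in the root bag $\beta(r)$ are automatically placed in the root part $P_r$. Consequently, if I can feed the construction a tree decomposition whose root bag already contains $S$, the output tree-partition will have $S \subseteq P_r$. To obtain such a decomposition, I start from any tree decomposition $(T,\beta)$ of $G$ of width $\tw(G)$, root it at an arbitrary node $r$, and for every $s \in S$ add $s$ to each bag on the path in $T$ from $r$ to the nearest bag originally containing $s$. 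The resulting $(T,\beta')$ is a valid tree decomposition with $S \subseteq \beta'(r)$.

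The main obstacle, and where the proof actually needs care, is to argue that this modification does not blow up the final tree-partition width. The tree decomposition $(T,\beta')$ has width at most $\tw(G)+|S|$, so naively plugging it into the black-box theorem would give $O(\Delta(G)\cdot(\tw(G)+|S|)) = O(\Delta(G)^2 \tw(G))$, which is too weak. I avoid this loss by replaying the Ding--Oporowski--Wood width analysis on $(T,\beta')$ rather than invoking it as a black box. Every vertex of $S$ already lives in $\beta'(r)$ and is therefore assigned to the root part, contributing nothing to $P_t$ for $t \neq r$. When bounding $|P_t|$ for such a non-root $t$, I can pretend that the extra copies of $S$-vertices were never added to $\beta'(t)$; what remains below the root is exactly the original width-$\tw(G)$ decomposition, and the original analysis then yields $|P_t| = O(\Delta(G)\tw(G))$ for every non-root part. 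The root part satisfies $|P_r| \le |\beta'(r)| \le \tw(G) + |S| + 1 = O(\Delta(G)\tw(G))$ by our hypothesis on $|S|$, completing the claimed width bound.
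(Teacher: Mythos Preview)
The paper does not give a proof; it simply declares the lemma an ``immediate consequence of the construction of \cite{tpwDegree,tpwWood}''. The reason it is immediate is that Wood's construction is iterative and starts from an arbitrary prescribed seed set that becomes the root part; the size bound on every non-root part is derived from balanced separators of order $O(\tw(G))$ and the degree bound, and never refers to the size of the seed. So one just takes $S$ as the seed.

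Your argument has a genuine gap: the construction you describe is not the Ding--Oporowski--Wood construction and does not produce a tree-partition at all. Assigning each vertex to its topmost bag in a rooted tree decomposition only guarantees that, for every edge $uv$, the nodes $t(u)$ and $t(v)$ are in ancestor--descendant relation, not that they are equal or adjacent. A triangle $abc$ with the three-bag path decomposition $\{a\}$--$\{a,b\}$--$\{a,b,c\}$ already breaks it: $a$ lands in the root part and $c$ in the grandchild part, yet $ac$ is an edge. What you have defined is a tree-layout, not a tree-partition.

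Because your width analysis (``pretend the extra copies were never added'') is phrased in terms of this incorrect construction, it does not stand as written. The high-level instinct---that vertices already present at the root are absorbed there and do not inflate deeper parts---is correct for the \emph{actual} construction, where non-root parts are built from balanced separators whose size depends only on adhesions of the original width-$\tw(G)$ decomposition. But to make your route rigorous you would need to open up the real iterative/separator-based construction rather than the topmost-bag map. Given that Wood's lemma already lets you choose the root set freely, the detour through a modified tree decomposition is unnecessary.
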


By combining the structure of SPQR trees from Theorem~\ref{thm:planar-structure}
with known constructions for tree-partitions of width $O(\Delta(G) \tw(G))$
(see Lemma~\ref{tpw-root}), we manage to prove that the wall number and
the fan number characterise treebandwidth on planar graphs.
More precisely, we obtain the following bound.

\begin{lemma}\label{lem:planar-construction}
Let $G=(V,E)$ be a planar graph. If $\tw(G)\leq w$ and $G$ does not contains a $k$-fan as
topological minor, then $\tbw(G) = O(k\cdot (k+w))$.
\end{lemma}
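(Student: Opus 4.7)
The plan is to exploit the SPQR-tree structure given by Theorem~\ref{thm:planar-structure} together with the tree-partition construction of Lemma~\ref{tpw-root}, assembling local tree-layouts at each SPQR node into a global tree-layout of $G$. First, using a block-tree decomposition, I would reduce to the case where $G$ is $2$-connected at the cost of only an additive constant in the bandwidth, since a tree-layout for each block can be attached below its articulation vertex. So assume $G$ is $2$-connected and let $T$ be its SPQR tree, rooted at an arbitrary node; by Theorem~\ref{thm:planar-structure}, every $\Rigid$-node $t$ satisfies $\Delta(G_t)<k$, and every vertex of $G$ spans at most $2k+1$ consecutive nodes along any root-to-leaf path of $T$.

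For each SPQR node $t$, I would construct a local tree-layout $L_t$ of the torso $G_t$ with the two parent-adhesion vertices placed at the top in ancestor-descendant relation. For an $\Rigid$-node, $G_t$ is a topological minor of $G$ by Observation~\ref{obs_SPQR}(3), hence $\tw(G_t) \leq w$; combined with $\Delta(G_t)<k$, Lemma~\ref{tpw-root} yields a tree-partition of width $O(kw)$ with the parent adhesion forced into the root bag, and linearising each bag gives $L_t$ local bandwidth $O(kw)$. For $\Series$- and $\Parallel$-nodes, the torso is respectively a cycle or a dipole, linearised trivially (with the $\Series$-cycle cut at the parent-adhesion virtual edge). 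A key invariant is that every child-adhesion pair of $t$ is in ancestor-descendant relation in $L_t$; this is guaranteed because each such pair corresponds to a virtual edge of $G_t$, forcing its endpoints into the same or adjacent bags of the tree-partition, where a compatible linearisation can be chosen.

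The global tree-layout is assembled by recursively gluing: for each SPQR edge from $t$ to a child $t'$ with adhesion $\{a,b\}$, attach $L_{t'}$ below $L_t$ by identifying its top two vertices with $a$ and $b$. Every edge $uv$ of $G$ lies in a unique torso $G_t$, so in the global layout $u$ and $v$ remain on a common root-to-leaf path through $L_t$, with distance bounded by the local bandwidth, that is $O(kw)$ for $\Rigid$-nodes and $O(1)$ otherwise. A vertex occurring in several SPQR nodes incurs displacements across consecutive local layouts; property (2) of Theorem~\ref{thm:planar-structure} bounds such a span by $O(k)$, and careful accounting of the adhesion-vertex displacements adds an $O(k^2)$ term, yielding $\tbw(G) = O(k(k+w))$. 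The main obstacle is orchestrating the gluing so that every adhesion pair is simultaneously in ancestor-descendant relation in both the parent and the child local tree-layouts, which forces a careful bag-internal ordering when a single tree-partition bag of an $\Rigid$-node is incident to the adhesions of several children of $t$ in $T$.
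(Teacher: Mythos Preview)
Your approach is essentially the paper's: reduce to $2$-connected via the block-cut tree, walk the SPQR tree, build local tree-layouts (using Lemma~\ref{tpw-root} on $\Rigid$-torsos), and glue along adhesions. The architecture is right.

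There is, however, a genuine gap in the accounting that prevents you from reaching $O(k(k+w))$. You only force the \emph{parent adhesion} $\{x,y\}$ into the root bag of an $\Rigid$-node's tree-partition. Consider a vertex $a$ spanning nodes $t_0,\dots,t_s$ along a root-to-leaf path of $T$ (so each intermediate adhesion is $\{a,y_i\}$). At an intermediate $\Rigid$-node $t_i$, the child layout is glued at the position of $y_{i+1}$ inside $L_{t_i}$. You observe correctly that $y_{i+1}$ is a neighbour of $a$ in $G_{t_i}$, hence lies in the root bag or an adjacent bag of the tree-partition; but that only places $y_{i+1}$ at depth $O(kw)$ in $L_{t_i}$, not $O(k)$. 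Summed over up to $2k+1$ adhesions, the displacement term is $O(k\cdot kw)=O(k^2w)$, not the $O(k^2)$ you claim. Your final bound would then be $O(k^2 w)$ rather than $O(k(k+w))$.

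The paper closes this gap with one extra use of Lemma~\ref{tpw-root}: since $\Delta(G_t)<k$, the set $N[x]\cup N[y]$ has size at most $2k$, so it can be forced into the root bag and placed in the \emph{topmost $2k$ positions} of $L_t$. Every child adhesion containing $a\in\{x,y\}$ then has its other endpoint among these first $2k$ vertices, so each intermediate node contributes at most $2k$ to the displacement, and the sum over $O(k)$ nodes is $O(k^2)$ as required. Adding this one line to your construction recovers the stated bound.
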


\begin{proof}
Let $G$ be a planar graph with no $F_k$ topological minor and such that $\tw(G) \leq w$.

If $G$ is $3$-connected, it has degree less than $k$ (see Theorem~\ref{thm:planar-structure}, condition (1)). 
As such, it admits a tree-partition of width $O(kw)$ from which we may deduce a
tree-layout of bandwidth $O(kw)$ by rooting arbitrarily and replacing each bag by a linear
ordering.

If $G$ is $2$-connected, it admits an SPQR tree $T$. We produce the following layouts for each
type of node:
\begin{itemize}
	\item For an $\Series$-node $t$, we produce a linear layout of bandwidth $2$ of $G_t$ (a
	cycle).
	\item For an $\Rigid$-node $t$, we produce a tree-layout of $G_t$ of bandwidth $O(kw)$ using the
	construction mentioned above. 
	\item For a $\Parallel$-node $t$, we produce any linear layout of its $2$ vertices.
\end{itemize}

We will now glue such layouts along the separating pairs of the SPQR tree. Observe that
separating pairs are adjacent in the torsos of nodes of the SPQR tree. In particular,
their distance in layouts of torsos is bounded. We can always enforce two adjacent
vertices $x,y$ to be at the root of the layout of an $\Series$-node. Finally, in the
$\Rigid$-node case, we can ask for two vertices $x,y$ to be at the root of the
tree-layout, and for the vertices of $N[x] \cup N[y]$ to appear in the topmost $2k$ vertices of
the tree-layout using \cref{tpw-root}.

We proceed as follows. We root $T$, and produce a tree-layout for its root node
as described above. Then, we recursively compute tree-layouts for each subtree of $T$ by
computing a tree-layout of their root node. We ensure that the two
vertices of the adhesion incident to a subtree are placed at the root of the
tree-layout of the subtree.

To glue tree-layouts of adjacent nodes of $T$, we attach the layout of a
child node to its parent node as follows.
Let $x,y$ be the two vertices in common. Without loss of generality, assume that
$y$ is deepest in the layout of the parent. Compute a layout of the child with
$x$ at the root and $y$ as its child.
Remove $x$ from the layout of the child node and then glue the layouts at $y$.

The bandwidth of the produced layout can be bounded by combining condition (2) of Theorem~\ref{thm:planar-structure} and the
properties of tree-layouts computed for each node of $T$. We distinguish two cases. First,
two neighbours $a,b \in V(G)$ may be introduced in the same node of $T$ and their distance in
the tree-layout is at most $O(kw)$. Second, they can be introduced in distinct nodes
$t_a,t_b$ and then the vertex $a$ (introduced first without loss of generality) is
contained in all adhesions between $t_a$ and $t_b$. There are at most $2k$ such adhesions
due to condition (2). Moreover, since each adhesion corresponds to an edge incident to $a$ in
the torsos, the tree-layouts of child nodes along this sequence of adhesions are always
glued within the first $2k$ vertices. This sums up to a distance of $4k^2$ for all nodes
except $t_a$. The distance between the two vertices in the adhesion incident to $t_a$ is
bounded by $O(kw)$. We conclude that the bandwidth is at most $O(k(k+w))$.

If the graph $G$ is not $2$-connected, we can compute the block-cut tree, root
it arbitrarily, and compute a layout recursively starting from the root, by applying the
construction for $2$-connected graphs above to each block. The layout for the root block
can be rooted arbitrarily, and the layouts for other blocks are rooted at the cutvertex
they have in common with their parent block. In particular, this requires us to root the
SPQR tree of the block to a node containing said cutvertex, and to root the layout of this
node on the cutvertex. We then glue the layouts at their common vertex and the width is
just the maximum over the width of computed layouts for the blocks. Indeed, each edge is
contained in a single block.

Finally, if the graph $G$ is not connected, we can link the layouts arbitrarily
and the width is also the maximum width over the connected components.
\end{proof}

This translates into an approximation algorithm for treebandwidth when restricted to the class of planar graphs.

\begin{theorem}\label{thm:planar-algo}
Given a planar graph $G=(V,E)$ with $\tbw(G)\leq k$, a tree-layout of bandwidth
$O(2^{2k})$ can be computed in time $2^{O(k)}n$.
\end{theorem}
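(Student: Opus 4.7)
The plan is to make the structural proof of \cref{lem:planar-construction} constructive, invoking linear-time FPT algorithms for every ingredient: treewidth approximation, SPQR decomposition, and rooted tree-partition computation. In what follows, $k$ is the promised upper bound on $\tbw(G)$ and all constants below hide at most single-exponential factors in $k$.

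\textbf{Step 1 (treewidth certificate).} Since $\tbw(G)\ge \tw(G)$, a graph with $\tbw(G)\le k$ satisfies $\tw(G)\le k$. We first run a linear-time FPT approximation algorithm for treewidth on $G$ (for example Korhonen's $2$-approximation in time $2^{O(k)}n$): either it certifies $\tw(G)>k$, in which case we already know $\tbw(G)>k$ and reject, or it returns a tree decomposition of width $O(k)$. We keep this width bound $w=O(k)$ for the rest of the algorithm.

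\textbf{Step 2 (SPQR and block structure).} We compute the block-cut tree of $G$ and the SPQR tree of each $2$-connected block in linear time. We root the block-cut tree arbitrarily; for each non-root block $B$ we root its SPQR tree at a node whose torso contains the cut-vertex $c$ shared with the parent block, and we will later force $c$ to be the root of the tree-layout we produce for $B$. We then apply the forward direction of \cref{thm:planar-structure} with the fan number bounded by $O(k)$ (deduced from $\tbw(G)\le k$ and the monotonicity behaviour of $\tbw$ under topological minors) to check in linear time that every $\Rigid$-node has torso of maximum degree $O(k)$ and that the separating-pair span of every vertex along any path of the SPQR tree is $O(k)$. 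If any of these checks fails we reject.

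\textbf{Step 3 (per-node layouts).} For every $\Series$-node we output the trivial cyclic linear layout of $G_t$, and for every $\Parallel$-node we output a two-element layout; both can be shaped so that an arbitrary chosen adhesion pair sits at the top. For each $\Rigid$-node $t$, we invoke the algorithmic version of the tree-partition construction of \cref{tpw-root} (cited as \cite{algo-tpw}) to compute, in time $2^{O(k)}|V(G_t)|$, a tree-partition of $G_t$ of width $O(\Delta(G_t)\cdot \tw(G_t))=O(k\cdot w)$ with the designated adhesion vertices forced into the root bag. Replacing each bag by an arbitrary linear order yields a tree-layout of $G_t$ of bandwidth $2^{O(k)}$ rooted at the adhesion to the parent, with $N[x]\cup N[y]$ appearing in the topmost $O(kw)$ vertices, exactly as required by the gluing step of \cref{lem:planar-construction}.

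\textbf{Step 4 (gluing and analysis).} We walk the block-cut tree and, inside each block, the SPQR tree, from the root downward, and glue each child layout onto its parent along their shared adhesion as in the proof of \cref{lem:planar-construction}. The bandwidth analysis from that proof carries over verbatim: two neighbours in the same SPQR node are at distance $2^{O(k)}$, and two neighbours introduced in different nodes are separated by an $O(k)$-length path of adhesions each attached within the topmost $O(k)$ vertices, giving total distance $O(k)\cdot 2^{O(k)}=O(2^{2k})$. The total time over all blocks and all SPQR nodes is linear in $|V(G)|$ with a $2^{O(k)}$ factor.

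The main technical obstacle is Step 3: one must ensure that the tree-partition algorithm of \cite{algo-tpw} can be invoked with the root-bag constraint required for clean gluing and that it actually runs in $2^{O(k)}|V(G_t)|$ time on torsos of width $O(k)$ and degree $O(k)$. All remaining pieces (SPQR computation, condition checks, and the layout assembly itself) are standard linear-time operations, and the arithmetic of constants through the chain of reductions is what ultimately produces the $O(2^{2k})$ bandwidth bound stated in the theorem.
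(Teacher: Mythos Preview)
Your overall architecture (block-cut tree $\to$ SPQR trees $\to$ per-node layouts via \cref{tpw-root} $\to$ gluing as in \cref{lem:planar-construction}) is the same as the paper's. The gap is quantitative but real: in Step~2 you assert that $\tbw(G)\le k$ forces the fan number to be $O(k)$. It does not. The fan $F_\ell$ has $\tbw(F_\ell)=\Theta(\log\ell)$: put the universal vertex at the root and lay out the $P_\ell$ below it using an optimal-depth elimination tree of the path. Hence $\tbw(G)\le k$ only excludes $F_{2^{k}}$, and this is exactly why the paper checks conditions (1) and (2) of \cref{thm:planar-structure} with parameter $k'=2^{k}$, not $k'=O(k)$.

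This error propagates through the rest of your argument. Your rejection test in Step~2 with threshold $O(k)$ is unsound (it can reject instances with $\tbw(G)\le k$). In Step~3 the correct bound is $\Delta(G_t)<2^{k}$, so the tree-partition width is $O(k\cdot 2^{k})$, not $O(k\cdot w)$ with $w=O(k)$; this is also why the paper needs the footnoted remark that the algorithm of \cite{algo-tpw} still runs in $2^{O(k)}n$ when $\Delta=2^{O(k)}$. In Step~4 the number of adhesions along a path containing a fixed vertex is $O(2^{k})$, not $O(k)$. Once you plug in $k'=2^{k}$ everywhere, the bandwidth bound from \cref{lem:planar-construction} becomes $O\bigl(2^{k}(k+2^{k})\bigr)=O(2^{2k})$, which is precisely the exponent in the theorem statement. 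Your separate Step~1 (global treewidth certificate) is harmless but unnecessary; the paper folds that check into the tree-partition call on each $\Rigid$-node.
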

\begin{proof}
Fix a positive integer $k$, the parameter of our instance.

The block-cut tree and the SPQR tree of each block can be computed in linear time for any
graph \cite{SPQR1,SPQR3}. We then check if there is an $F_{2^k}$ topological minor by checking
that conditions (1) and (2) hold for $k'=2^k$. This can be carried out in linear time. If 
such a topological minor is found, then we may conclude that $\tbw(G)>k$.

It is straightforward to compute a layout for $\Series$-nodes and to glue tree-layouts.
However, to compute a tree-layout for $\Rigid$-nodes we invoke involved techniques to
compute tree-partitions of width $O(\Delta(H) \tw(H))$ in time $2^{O(k)}|V(H)|$ from \cite[Lemma
3.10]{algo-tpw}\footnote{The lemma was stated with an assumption of a polynomial bound on
$\Delta$ but it also holds with $\Delta=2^{O(k)}$ because the factor in the
running time which is a polynomial on $\Delta$ is now bounded by $2^{O(k)}$
and can be factored with the $2^{O(k)}$ factor (depending on the treewidth
bound), note however that this affects the base of the exponential in the
parameter in the running time.} (using data structures from
\cite{BodlaenderDDFLP16}). More precisely, we either construct tree-partitions
of width $O(k2^k)$ or conclude that $k < \tw(H) \leq \tw(G) \leq \tbw(G)$.

Finally, if we did not conclude that $\tbw(G) > k$, we constructed a tree-layout of
bandwidth $O(2^{k}(k+2^{k}))$ in time $2^{O(k)}|V(G)|$.
\end{proof}

We only superficially use planarity to obtain degree bounds. On the other hand,
our hardness proof and the NP-hardness proof of \cite{spanheight} produce
highly non-planar graphs. We may ask the following.

\begin{opquestion}
Is there an FPT (or even polynomial) algorithm for \textsc{Treebandwidth} on
the class of planar graphs?
\end{opquestion}


\section{Structure theorems}
\label{sec:structure}

We now move to the proof of the structure theorems for graphs excluding a
$k$-fan or a $k$-dipole as a topological minor. These theorems are based on
structural observations on well-formed $k$-lean tree decompositions. In particular, we
enforce property \ref{enum:unique-adh} by introducing some branching bags between
adhesions on the same vertex set. This corresponds to a weakened generalisation of
$\Parallel$-nodes of an SPQR tree (and of cutvertices). For both
theorems, we separate these structural observations into lemmata that are
abstracted away from their application in the proof of the main theorems.

\subsection{Excluding a fan}

We start by characterising the implications of forbidding large subdivided fans centered
on a fixed vertex $v$. This is done by leveraging recent work on rooted minors. We can reduce our considerations on fans centered on $v$ to rooted path
minors.

\begin{observation}
	There is no subdivided $k$-fan centered on vertex $v$ if and only if
there is no $N(v)$-rooted $P_k$ minor in $G-v$.
\end{observation}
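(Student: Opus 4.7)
The statement is an equivalence, and I would prove both directions by direct construction.

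Easy direction ($\Rightarrow$, contrapositive): assume $G$ contains a subdivided $k$-fan centered on $v$ with branch vertices $v, u_1, \ldots, u_k$, where $P_i$ realises the edge $vu_i$ and $Q_i$ realises $u_iu_{i+1}$, all pairwise internally vertex-disjoint. Define $B_i = (V(P_i) \setminus \{v\}) \cup (V(Q_i) \setminus \{u_i, u_{i+1}\})$ for $i < k$, and $B_k = V(P_k) \setminus \{v\}$. Then (i) the neighbour of $v$ on $P_i$ lies in $B_i \cap N(v)$; (ii) each $B_i$ is connected in $G-v$ by concatenating $P_i \setminus \{v\}$ with the internal vertices of $Q_i$ (attached at $u_i$); (iii) the $B_i$'s are pairwise disjoint by internal disjointness of the model paths; (iv) consecutive $B_i, B_{i+1}$ are joined by the $Q_i$-edge incident to $u_{i+1}$.

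Hard direction ($\Leftarrow$): assume $G - v$ contains an $N(v)$-rooted $P_k$ minor with branch sets $B_1, \ldots, B_k$. Pick $w_i \in B_i \cap N(v)$ and, for each $i < k$, an edge $x_i y_i$ with $x_i \in B_i$, $y_i \in B_{i+1}$. Inside each interior $B_i$ (for $2 \leq i \leq k-1$), I invoke the fact recalled in the preliminaries that three vertices in a connected graph are joined by a tripod: take a $(y_{i-1}, w_i, x_i)$-tripod in $B_i$ with common vertex $d_i$. For the boundary sets, set $d_1 = w_1$, $d_k = w_k$ and fix any $w_1$-$x_1$ path in $B_1$ and any $y_{k-1}$-$w_k$ path in $B_k$. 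The subdivided $F_k$ then has branch vertices $v, d_1, \ldots, d_k$: the $v$-to-$d_i$ path consists of the edge $vw_i$ followed by the $w_i$-to-$d_i$ tripod path in $B_i$; the $d_i$-to-$d_{i+1}$ path concatenates the $d_i$-to-$x_i$ tripod path in $B_i$, the edge $x_iy_i$, and the $y_i$-to-$d_{i+1}$ tripod path in $B_{i+1}$.

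The main obstacle is to verify internal vertex-disjointness of the constructed paths. Within a single interior $B_i$, the three tripod paths are pairwise internally disjoint by definition, and each is used in exactly one role (spoke to $v$, incoming arc from $d_{i-1}$, outgoing arc to $d_{i+1}$), so no two fan-paths share an internal vertex inside $B_i$. Across distinct indices the $B_i$'s are vertex-disjoint, the edges $vw_i$ are distinct edges incident to $v$, and each edge $x_iy_i$ appears in a single fan-path. The key subtlety is that a naive attempt to use $w_i$ itself as the branch vertex would force the two $B_i$-paths entering and leaving $w_i$ to share internal vertices; relocating the branch vertex to the tripod centre $d_i$ (rather than to the neighbour $w_i$ of $v$) is the essential trick, and it works precisely because connectivity of $B_i$ already delivers a tripod on the three relevant terminals.
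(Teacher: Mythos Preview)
The paper states this as an \emph{observation} and gives no proof at all, so there is nothing to compare against; your argument is the natural one and is essentially correct.

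Two small remarks. First, your direction labels are swapped: assuming a subdivided $k$-fan and producing an $N(v)$-rooted $P_k$ minor is the contrapositive of the $\Leftarrow$ direction, not of $\Rightarrow$. Second, in the ``hard'' construction you should briefly handle the degenerate cases where two of $y_{i-1},w_i,x_i$ coincide (then the tripod collapses to a path and you can take $d_i$ to be the repeated vertex); this is routine but worth one sentence so the claim ``pairwise internally disjoint'' is watertight.
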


The size of a $U$-rooted $P_k$ minor can be approximated via the treedepth of
$U$ \cite{quickapexforest}, which is defined recursively as follows:

\begin{itemize}
\item $\td(G,U)$ is $0$ if $V(G)\cap U = \varnothing$.
\item If $G$ is not connected, then $\td(G,U) = \max_{C \in cc(G)} \td(C,U)$.
\item If $G$ is connected, then $\td(G,U)= 1+\min_{x \in V(G)} \td(G-x,U)$.
\end{itemize}

A forest $F$ on a subset of $V(G)$ is said to be an \textit{elimination forest of $U$ in
$G$} if $F$ contains all the vertices of $U$, all the edges of $G[F]$ have an
ancestor-descendant relationship in $F$, and, for every connected component $C$ of $G
\setminus F$, $N(C)$ is contained in a single root-leaf path of $F$. As in the usual
treedepth, the treedepth of $(G,U)$ can also be defined as the minimum height of an
elimination forest of $U$ in $G$. Moreover, if $G$ is connected, then this minimum is obtained on a tree. 

The following bound is proved in \cite{quickapexforest}.
\begin{lemma}\label{lem:fan-to-td}
	If there is no $U$-rooted $P_k$ minor in $G$, then $\td(G,U) \leq \binom{k}{2}$.
\end{lemma}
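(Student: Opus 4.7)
The plan is to proceed by induction on $k$. The base case $k = 1$ is immediate: a $U$-rooted $P_1$ minor is simply a vertex of $U$ in $V(G)$, so its absence forces $U \cap V(G) = \varnothing$ and $\td(G, U) = 0 = \binom{1}{2}$.

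For the inductive step I would work contrapositively: given a connected graph $G$ with $\td(G, U) > \binom{k}{2}$, I construct a $U$-rooted $P_k$ minor (the disconnected case splits over components for both parameters). The identity $\binom{k}{2} = (k-1) + \binom{k-1}{2}$ is the key numerical fact, and it dictates the strategy: extract from an optimal elimination forest a sequence of vertices $x_1, \ldots, x_{k-1}$ along a deepest root-to-leaf branch, chosen so that after removing $x_1, \ldots, x_i$, the component $C_i$ containing the descendants of $x_{i+1}$ still satisfies $\td(C_i, U \setminus \{x_1, \ldots, x_i\}) \geq \td(G, U) - i$. After $k-1$ such eliminations, the remaining deep component $C := C_{k-1}$ has $\td(C, U') > \binom{k-1}{2}$ where $U' = U \setminus \{x_1, \ldots, x_{k-1}\}$, so the inductive hypothesis yields a $U$-rooted $P_{k-1}$ minor in $C$ with branch sets $B_1, \ldots, B_{k-1}$.

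The assembly step is then to prepend a new branch set $B_0$, connected, containing a vertex of $U$ and adjacent in $G$ to $B_1$. A natural candidate is a connected subgraph of $G$ built from the $x_i$'s together with a path routed through shallow components of $G - \{x_1, \ldots, x_{k-1}\}$. The main obstacle is ensuring that $B_0$ is adjacent to an \emph{endpoint} of the chain rather than to a middle branch set: an arbitrary choice of pivots $x_i$ could force the attachment to occur in the middle, and this would not extend the path minor to length $k$. I would address this by strengthening the inductive hypothesis to a rooted form, namely: for every $w \in V(G)$ with $\td(G, U \cup \{w\}) > \binom{k}{2}$, there is a $U$-rooted $P_k$ minor whose first branch set contains $w$. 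With this strengthening, the pivot vertices can be chosen so as to carry a specified interface vertex down to the deep component, and the concatenation at the final step becomes automatic. Making this interface propagate consistently across the $k-1$ peeling steps, while preserving the treedepth lower bound at each stage, is the subtle technical point of the proof.
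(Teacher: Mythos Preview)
First, a point of reference: the paper does not give its own proof of this lemma. It is quoted verbatim from \cite{quickapexforest} (see the sentence ``The following bound is proved in \cite{quickapexforest}'' immediately preceding the statement). So there is no in-paper argument to compare your proposal against; any assessment has to be of your sketch on its own merits.

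Your inductive outline is the natural one, and you correctly isolate the real difficulty: once the inductive hypothesis hands you a $U$-rooted $P_{k-1}$ minor inside the deep component $C_{k-1}$, you must manufacture a new branch set $B_0$ that (a) contains a vertex of $U$, (b) is disjoint from $B_1,\ldots,B_{k-1}$, and (c) is adjacent to an \emph{endpoint} of the chain. Your proposed remedy --- strengthening the hypothesis to ``for every $w$ with $\td(G,U\cup\{w\})>\binom{k}{2}$ there is a $U$-rooted $P_k$ minor whose first branch set contains $w$'' --- does not survive scrutiny as stated. Already the base case $k=1$ fails: take $G$ connected, $U\cap V(G)=\varnothing$, and any $w\in V(G)$; then $\td(G,U\cup\{w\})\geq 1>0=\binom{1}{2}$, yet no $U$-rooted $P_1$ minor exists at all. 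More structurally, even if you repair the statement (say, by asking for a $(U\cup\{w\})$-rooted minor instead), you still have not explained where $B_0$ finds its $U$-vertex. Nothing in your peeling of $x_1,\ldots,x_{k-1}$ guarantees that any $x_i$ lies in $U$, nor that any component of $G-\{x_1,\ldots,x_{k-1}\}$ other than $C_{k-1}$ meets $U$; if all of $U$ sits inside $C_{k-1}$ and the branch sets $B_1,\ldots,B_{k-1}$ happen to cover it, the construction stalls.

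In short, the ``subtle technical point'' you flag is the entire content of the lemma, and the sketch does not yet resolve it. The argument in \cite{quickapexforest} handles exactly this rooting issue; if you want a self-contained proof, you will need either a genuinely different strengthened hypothesis (one that simultaneously controls where the endpoint lands \emph{and} reserves an unused $U$-vertex for the new branch set), or a direct construction of a shallow elimination forest from the assumption that no $U$-rooted $P_k$ minor exists.
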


We prove that the following weak converse holds.
\begin{lemma}\label{lem:td-to-fan}
	If $\td(G,U) \leq t$, then there is no $U$-rooted $P_{2^{t}}$ minor in $G$.
\end{lemma}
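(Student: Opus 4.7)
The plan is to prove the contrapositive by induction on $|V(G)|$, with the base case driven by $t$. When $t=0$, $\td(G,U) \leq 0$ forces $V(G) \cap U = \varnothing$, so no connected subgraph of $G$ is hit by $U$, and in particular there is no $U$-rooted $P_1 = P_{2^0}$ minor.

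For the inductive step, I would assume $\td(G,U) \leq t$ with $t \geq 1$ and, towards a contradiction, suppose that $G$ admits a $U$-rooted $P_m$ minor with $m = 2^t$, witnessed by branch sets $B_1, \ldots, B_m$ together with edges $e_i \in E(G)$ between $B_i$ and $B_{i+1}$. If $G$ is disconnected, the union $\bigcup_i B_i$ is connected (the $B_i$ are connected and linked by the edges $e_i$), so the whole minor lies in one component $C$. Since $\td(C,U) \leq t$ and $|V(C)| < |V(G)|$, the inductive hypothesis applied to $C$ yields the contradiction. If $G$ is connected, the recursive definition of $\td$ provides a vertex $x \in V(G)$ with $\td(G-x, U) \leq t-1$.

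The halving step is the crux. The vertex $x$ lies in at most one branch set $B_j$; if $x$ lies in none, then $B_1, \ldots, B_m$ is already a $U$-rooted $P_m$ minor of $G-x$, and $m \geq 2^{t-1}$ gives an immediate contradiction with $\td(G-x,U) \leq t-1$. Otherwise, discarding $B_j$ splits the remaining branch sets into two consecutive runs of sizes $j-1$ and $m-j$ whose total is $2^t - 1$; hence one run has at least $\lceil(2^t-1)/2\rceil = 2^{t-1}$ branch sets. Since none of these retained branch sets contains $x$, and the connecting edges $e_i$ used within a retained run are incident only to retained branch sets, they all lie in $G-x$. Thus we have exhibited a $U$-rooted $P_{2^{t-1}}$ minor in $G-x$, and applying the inductive hypothesis to $G-x$ (whose vertex set is strictly smaller) contradicts $\td(G-x,U) \leq t-1$.

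The only point that requires care is the validity of the reduced minor in $G-x$: we must observe both that the surviving branch sets are connected subgraphs of $G-x$ (they were already disjoint from $x$) and that the path edges between them survive (they are incident only to retained branch sets, hence avoid $x$). Everything else is a clean double recursion, so I do not anticipate genuine technical difficulty beyond verifying the bookkeeping of indices.
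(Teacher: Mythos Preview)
Your proof is correct and follows essentially the same approach as the paper's: both use the witness vertex $x$ from the recursive definition of $\td(G,U)$ to split a putative long $U$-rooted path minor into two halves, each living in $G-x$. The only cosmetic difference is that the paper inducts on the value of $\td(G,U)$ and phrases the bound as ``any $U$-rooted path minor has at most $2^t-1$ branch sets'', while you induct on $|V(G)|$ and argue by contradiction via halving; the underlying argument is identical.
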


\begin{proof}
We prove the result by induction on the value of $\td(G,U)$. If $\td(G,U)=0$, then there
is no $U$-rooted $P_1$ minor in $G$ as $V(G)\cap U = \varnothing$.

Suppose now that $\td(G,U)=t+1$ and that the result holds for any $t' \leq t$. Without
loss of generality, we may assume that $G$ is connected since any $P_\ell$ minor model is
confined to a connected component. Then by definition of $\td(G,U)$, there exists a vertex
$v$ such that $\td(G,U)=1+\td(G-v,U)$. The $U$-rooted path minors that do not use $v$ for
their model, are also $U$-rooted minors of $G-v$ so we can upper bound their size by
$2^{t}-1$. If a $U$-rooted path minor of $G$ uses $v$, it is used in only one branch set
$X$. Hence, after removing the branch set $X$, we obtain two $U$-rooted path minors of
$G-v$. Their size is upper bounded by $2^t-1$, so our path minor of $G$ has size at most
$1+2(2^t-1)=2^{t+1}-1$.
\end{proof}

This allows us to conclude that the fan number is asymptotically equivalent
to a parameter we call \emph{neighbourhood treedepth}.

\begin{theorem}\label{thm:neighbourhood-td}
	The size of the largest fan topological minor in $G$ is tied to 
	$$\max_{v \in V(G)} \td(G,N[v]).$$
\end{theorem}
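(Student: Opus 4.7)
The plan is to exploit the Observation preceding Lemma~\ref{lem:fan-to-td}, which recasts the presence of a $k$-fan centered on $v$ in $G$ as the existence of an $N(v)$-rooted $P_k$ minor in $G-v$, and to combine it with Lemmas~\ref{lem:fan-to-td} and~\ref{lem:td-to-fan}. These two lemmas already tie the size of the largest $U$-rooted path minor to $\td(G,U)$, so the only genuine work is to bridge the gap between the two slightly different treedepth quantities $\td(G,N[v])$ and $\td(G-v,N(v))$ that appear, respectively, in the parameter and in the Observation.

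As a preliminary step, I would record two elementary comparisons. On the one hand, restricting to the connected component of $v$ (which fully contains $N[v]$) and choosing $v$ as the first eliminated vertex in the recursive definition of rooted treedepth gives $\td(G,N[v]) \leq 1 + \td(G-v,N(v))$. On the other hand, any $N(v)$-rooted minor model in $G-v$ is automatically an $N[v]$-rooted minor model of $G$, since $N(v) \subseteq N[v]$ and the model does not use $v$.

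The two directions of the theorem then become symmetric applications of the two lemmas. To bound $t^*(G) := \max_{v} \td(G,N[v])$ from above in terms of the fan number $k^*(G)$, I would fix a vertex $v$ attaining $t^*$: the absence of a $(k^*+1)$-fan centered on $v$ forbids an $N(v)$-rooted $P_{k^*+1}$ minor in $G-v$, so Lemma~\ref{lem:fan-to-td} applied to $(G-v,N(v))$ gives $\td(G-v,N(v)) \leq \binom{k^*+1}{2}$, whence $t^* \leq 1 + \binom{k^*+1}{2}$. For the reverse direction, any $k$-fan centered on a vertex $v$ produces an $N(v)$-rooted $P_k$ minor in $G-v$, which I would immediately view as an $N[v]$-rooted $P_k$ minor of $G$ via the second comparison; the contrapositive of Lemma~\ref{lem:td-to-fan} applied to $(G,N[v])$ then forces $k < 2^{t^*}$.

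I do not expect any serious obstacle here, since both lemmas are already available and the observation takes care of the translation between fans and rooted path minors. The only subtlety is reconciling the open neighbourhood $N(v)$ used in the reduction with the closed neighbourhood $N[v]$ appearing in the parameter, which is precisely the role of the two elementary comparisons recorded in the preliminary step.
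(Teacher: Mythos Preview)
Your proposal is correct and follows exactly the route the paper intends: the theorem is presented there as an immediate corollary of the Observation together with Lemmas~\ref{lem:fan-to-td} and~\ref{lem:td-to-fan}, without writing out the bridging details. Your two elementary comparisons reconciling $\td(G,N[v])$ with $\td(G-v,N(v))$ are precisely the small step the paper leaves implicit.
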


We now move our attention to $k$-lean tree decompositions. Our structure theorem is a
consequence of the two following key lemmata about the treedepth of sets and tree
decompositions. 

\begin{lemma}\label{lem:bound-to-td}
Given a tree decomposition $(T,\beta)$ of $G$ with adhesions of order at most
$k$ and a set $U$ of vertices such that, for each bag $\beta(t)$ of size more
than $k$, 
\begin{itemize}
\item the number of subtrees $T'$ of $T-t$ with
$(\beta(T')\setminus\beta(t))\cap U \neq \varnothing$ is bounded by $a$,
\item $|\beta(t) \cap U|$ is bounded by $b$, 
\end{itemize}
and for each path $P$ of $T$, there is a subset $\mathcal{B}$ of at most $c$ bags such that
$\beta(P) \cap U \subseteq \beta(\mathcal{B})$, and such that bags incident to subtrees
$T'$ of $T-P$ such that $(\beta(T')\setminus\beta(P))\cap U \neq \varnothing$ are in
$\mathcal{B}$.

Then, $\td(G,U) \leq c(b+ak)$
\end{lemma}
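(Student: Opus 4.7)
The plan is to induct on $|V(T)|$, building an elimination forest of $(G,U)$ of height at most $c(b+ak)$. Throughout I implicitly assume $a\geq 1$ so that $k\leq ak\leq b+ak$, which is needed to uniformly absorb the small-bag case.

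For the inductive step, I first pick a path $P$ in $T$ and let $\mathcal{B}_P$ be the associated set of at most $c$ bags given by the hypothesis. For each $t\in\mathcal{B}_P$ I define an elimination pack $X_t\subseteq V(G)$: if $|\beta(t)|\leq k$ I set $X_t:=\beta(t)$; otherwise I set $X_t:=(\beta(t)\cap U)\cup\bigcup_{T'}\alpha(tu_{T'})$, where $T'$ ranges over subtrees of $T-t$ rooted at a neighbour $u_{T'}$ of $t$ such that $(\beta(T')\setminus\beta(t))\cap U\neq\varnothing$. Since there are at most $a$ such $T'$ when $t$ is big and each adhesion has size at most $k$, this gives $|X_t|\leq b+ak$, and so $X:=\bigcup_{t\in\mathcal{B}_P}X_t$ satisfies $|X|\leq c(b+ak)$.

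The key structural claim is that every connected component of $G-X$ that meets $U$ is trapped inside some subtree $T'$ of $T-P$ attached at a bag $t\in\mathcal{B}_P$ with $(\beta(T')\setminus\beta(P))\cap U\neq\varnothing$, separated from the rest of $G-X$ by the adhesion $\alpha(tu_{T'})\subseteq X$. Indeed, for any subtree attached at $t\notin\mathcal{B}_P$ the hypothesis forces $\beta(T')\cap U\subseteq\beta(P)\cap U\subseteq\beta(\mathcal{B}_P)\cap U\subseteq X$, and the vertices of $\beta(P)\cap U$ themselves lie in $X$ by construction. The restricted decomposition $(T',\beta|_{T'})$ inherits all the lemma's hypotheses (the big-bag bounds can only shrink after removing vertices of $X$, and any path $P'$ of $T'$ is a path of $T$ whose cover $\mathcal{B}$ can be transported into $V(T')$ using the fact that the bags containing any vertex of $\beta(P')\cap U$ form a subtree of $T$ that meets $T'$), so induction applies to each relevant $T'$.

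The elimination forest of $(G,U)$ is then assembled by stacking the vertices of $X$ along a single path at the top and attaching, in parallel below, the inductively built forests for each $T'$. The main obstacle is to reach exactly $c(b+ak)$ rather than the inflated $2c(b+ak)$ bound produced by the naive recursion. The fix is to refine the induction on the parameter $c$ itself: one chooses $P$ to realise the minimum $|\mathcal{B}|=c$, picks $t^\star\in\mathcal{B}_P$ extremal along $P$, and then shows -- via a path-extension argument concatenating a witnessing path $P'$ of $T'$ with the segment of $P$ lying past $t^\star$ and invoking the hypothesis on the resulting path of $T$ -- that every relevant $T'$ satisfies the lemma's hypothesis with parameter $c-1$. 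The delicate point in this extension argument is the additivity of covers across the two sides of the concatenation up to a single shared bag at the attachment, which uses the extremality of $t^\star$ to guarantee that the $P$-side still carries strictly positive demand and thereby absorbs one unit of the budget $c$.
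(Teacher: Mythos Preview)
Your definition of the packs $X_t$ and the structural claim that every $U$-meeting component of $G-X$ is confined to a single subtree hanging off a bag of $\mathcal{B}_P$ are both correct, and they match exactly the sets $\beta_U(t)$ that the paper uses. The trouble is entirely in how you assemble these pieces into an elimination forest of the right height.

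The induction you describe does not close. Your own remark that the naive recursion yields $2c(b+ak)$ already signals this, and the proposed fix --- drop to a single extremal bag $t^\star$ and argue that every subtree of $T-t^\star$ satisfies the hypothesis with parameter $c-1$ --- has a genuine gap. Your path-extension argument only makes sense for paths $P'\subseteq T'$ that reach the attachment node $u_{T'}$; for an arbitrary path $P'$ inside $T'$ there is no way to concatenate it with a segment of $P$ through $t^\star$ and obtain a path of $T$. Even when $P'$ does reach $u_{T'}$, the cover $\mathcal{B}''$ of the concatenated path need not split as you want: nothing forces $\mathcal{B}''$ to place a bag on the $P$-side, because the cover guaranteed by the hypothesis for the concatenation is not the same object as $\mathcal{B}_P$, and ``extremality of $t^\star$ in $\mathcal{B}_P$'' says nothing about $\mathcal{B}''$. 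Finally, even if some bag of $\mathcal{B}''$ lies outside $T'$, restricting $\mathcal{B}''$ to $T'$ need not yield a valid cover of $P'$ inside $T'$, since the subtree-incidence requirement changes when the ambient tree shrinks.

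The paper avoids all of this by not recursing at all. It takes a single global object: a minimum set $\overline{\mathcal{B}}$ of bags that covers $U$ and is closed under taking branching nodes, and forms the contracted tree $T_{\overline{\mathcal{B}}}$ on those bags. The elimination tree is then built to \emph{mirror the shape of} $T_{\overline{\mathcal{B}}}$: root it, replace each node $t$ by a linear order starting with $\beta_U(t)$ (your $X_t$), and attach children at the bottom of that segment. The depth bound is immediate once one observes that any root-to-leaf path of $T_{\overline{\mathcal{B}}}$ is a path of $T$ and therefore, by minimality together with the hypothesis, carries at most $c$ nodes of $\overline{\mathcal{B}}$; each contributes at most $b+ak$. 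The point is that by keeping the tree shape rather than linearising at each level, one never pays the stacking cost that forces your recursion to overshoot.
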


\begin{proof}

Let $\overline{\mathcal{B}}$ be a minimum set of bags closed by branching node (i.e. the
bag of the branching node of each triple of bags in the set is also in the set) that
covers $U$. In particular, by minimality, the number of bags of $\overline{\mathcal{B}}$
on any path $P$ of $T$ is bounded by $c$. Let $T_{\overline{\mathcal{B}}}$ be the tree on
the set of nodes whose bags are in $\overline{\mathcal{B}}$ obtained from $T$ by removing
subtrees of $T-V(T_{\overline{\mathcal{B}}})$ that are incident to only one node
of $T_{\overline{\mathcal{B}}}$, and replacing subtrees of
$T-V(T_{\overline{\mathcal{B}}})$ incident to two nodes $s,t$ by an edge $st \in
E(T_{\overline{\mathcal{B}}})$. We construct an elimination tree for $U$ from
$T_{\overline{\mathcal{B}}}$. Pick an arbitrary root and then replace each bag $\beta(t)$ of
$T_{\overline{\mathcal{B}}}$ by a linear ordering which begins with the following
set $\beta_U(t)$ of vertices: vertices of $U \cap \beta(t)$ and vertices of $\alpha(tt')$ for each $tt' \in E(T)$ such that
the subtree $T'=T^{t'}_{tt'}$ satisfies $(\beta(T')\setminus\beta(t))\cap U \neq
\varnothing$.

Our assumptions allow us to bound the number of vertices in $\beta_U(t)$ by $b+ak$.
We remove from the linear ordering of $\beta(t)$ any vertex that appears in the bag of its
parent in $T_{\overline{\mathcal{B}}}$.
We obtain an elimination tree from these linear orderings by attaching the linear ordering
of $\beta(t)$ to the last vertex of $\beta_U(s)$ where $s$ is its parent. The depth of
vertices of $U$ in this elimination tree is bounded by $c(b+ak)$. It is a valid
elimination tree since connected components of vertices not placed in some $\beta_U(t)$ are
either adjacent to a subset of some $\beta_U(t)$ or to a subset of $\beta_U(s) \cup
\beta_U(t)$ for some $st \in E(T_{\overline{\mathcal{B}}})$. In all cases such vertices
are on a common root-to-leaf path of the constructed elimination tree.
\end{proof}

\begin{lemma}\label{lem:td-to-bound}
Given a well-formed $k$-lean tree decomposition $(T,\beta)$ of $G$ and a set $U$ of vertices.
If $\td(G,U) \leq k$, then
\begin{itemize}
\item for all $t \in V(T)$, either $\beta(t)$ has size at most $k$, or $|\beta(t) \cap U|$
and the number of subtrees $T'$ of $T-t$ such that $(\beta(T')\setminus\beta(t))\cap U
\neq \varnothing$ are bounded by a function of $k$; and
\item for all paths $P$ of $T$, $|\beta(P) \cap U|$ and the number of bags $\beta(t)$ of
$P$ incident to subtrees $T'$ of $T-P$ such that $(\beta(T')\setminus\beta(t))\cap U \neq
\varnothing,$ are bounded by a function of $k$.
\end{itemize}
\end{lemma}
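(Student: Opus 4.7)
The plan is to prove the contrapositive: if one of the two bounds in the conclusion fails with a value sufficiently large in $k$, I build a $U$-rooted $P_{2^k}$ minor in $G$, which by \cref{lem:td-to-fan} contradicts $\td(G,U)\le k$. The argument splits in two, corresponding to the two bullets in the statement.

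For the bag setting I first apply the $k$-leanness property~\ref{enum:k-lean-menger} with $t_1=t_2=t$: the path from $t$ to itself in $T$ is empty, so no separation of a pair $X_1,X_2\subseteq\beta(t)$ of order at most $\min(k,|X_1|,|X_2|)$ can exist. By Menger this says exactly that $\beta(t)$ is $(k{+}1)$-linked as a subset of $V(G)$: any two subsets $X_1,X_2\subseteq\beta(t)$ are joined by more than $\min(k,|X_1|,|X_2|)$ internally vertex-disjoint paths in $G$. Together with $|\beta(t)|>k$, this handles the two ways the first bullet can fail. If $|\beta(t)\cap U|$ exceeds a chosen function of $k$, I iteratively pick pairs of not-yet-used $U$-vertices in $\beta(t)$ and invoke the linkedness to connect them by internally vertex-disjoint paths in $G$, concatenating these into a long rooted path minor. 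If instead too many subtrees $T'_i$ of $T-t$ carry a $U$-vertex $u_i\notin\beta(t)$, I use~\ref{enum:conn-treedec} to build, inside each $T'_i$, a feeler from $u_i$ to an adhesion vertex $v_i\in\beta(t)$; these feelers are pairwise vertex-disjoint because the subtrees are, and the $(k{+}1)$-linkedness of $\beta(t)$ in $G$ then glues them together into the desired rooted $P_{2^k}$ minor.

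For the path setting I apply the same template along the backbone $P$. A large value of $|\beta(P)\cap U|$ forces, by pigeonhole together with irreducibility~\ref{enum:irreducible-treedec}, either a single bag of $P$ of size more than $k$ with many $U$-vertices (reducing to the bag case above) or a long alternation of small bags each contributing a $U$-vertex, which can be threaded into a $P_{2^k}$ minor using the adhesions of $P$ and leanness applied to bag pairs along $P$. A large number of side subtrees on $P$ with $U$-content is treated by collecting one seed per subtree, extracting feelers via~\ref{enum:conn-treedec} to the incident bag of $P$, and stitching consecutive feelers by short segments of $\beta(P)$ obtained from the Menger consequence of leanness along $P$.

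The main obstacle throughout is maintaining vertex-disjointness of the constructed branch sets: each adhesion has size at most $k$, so only boundedly many feelers can legitimately enter it, and naive routing overloads them. The standard remedy I will follow is to start from a number of witnesses that is a small tower in $k$ and prune by pigeonhole on which adhesion vertex each feeler uses (and on the order in which the chosen subtrees appear along $P$), keeping a sub-family of feelers whose routes are pairwise compatible. The resulting loss in constants is absorbed into the \emph{function of $k$} bounds claimed in the conclusion.
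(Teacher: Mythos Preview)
Your contrapositive-via-rooted-path-minor strategy is the right template for the \emph{second} bullet and matches the paper there. But two things deserve comment.

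\textbf{First bullet: different and simpler in the paper.} For a bag $\beta(t)$ of size $>k$, the paper does \emph{not} build a rooted path minor. It uses the elimination forest of depth $\le k$ directly: if $|\beta(t)\cap U|\ge k+1$, two of those $U$-vertices are not in ancestor--descendant relation, and the root-path above their lowest common ancestor is a separator of size $\le k-1$ between them, contradicting \ref{enum:k-lean-menger} with $t_1=t_2=t$. For the subtree count, it extracts $k+1$ vertex-disjoint $\beta(t)$--$U$ paths (one per subtree via distinct adhesions and \ref{enum:conn-treedec}), and then walks down the elimination forest, at each step choosing the unique child whose subtree still contains an unhit endpoint (uniqueness comes from the linkedness of $\beta(t)$); after $k$ steps this yields a size-$k$ set hitting $k+1$ disjoint paths, a contradiction. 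Your ``iteratively pick pairs and connect by internally vertex-disjoint paths'' does not give disjointness across iterations: linkedness produces disjoint paths between two fixed sets, not paths disjoint from everything built so far. This is fixable, but the elimination-forest argument is both shorter and yields the explicit bound $2^{k+1}$ without any tower.

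\textbf{Second bullet: a genuine gap.} Your pigeonhole-on-adhesion-vertices remedy does not control the interaction of \emph{many} adhesions along $P$ whose sizes vary in $\{1,\dots,k\}$. Pruning independently at each adhesion multiplies losses along $P$, and $P$ has unbounded length, so you cannot afford a constant-factor loss per edge of $P$. The paper's device is an induction on the adhesion size $s$: define a \emph{section of width $s$} as a maximal subpath of $P$ all of whose interior adhesions have size $>s$, and prove that only $f(s,k)$ sections of width $s$ can introduce a vertex of $X:=(\beta(P)\cap U)\cup\{\text{feeler endpoints}\}$. For $s=1$ the distinct size-$1$ adhesions give disjoint tripods threading the introduced $X$-vertices into an $X$-rooted path minor, bounding $f(1,k)$ by $q=2^k$. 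For the step $s\to s+1$, inside a single width-$s$ section you use \ref{enum:k-lean-menger} to get $s+1$ vertex-disjoint paths $Q_1,\dots,Q_{s+1}$ from its first to its last adhesion; each width-$(s{+}1)$ section introducing some $x\in X$ attaches $x$ to one of these $Q_i$ by a private stub (via \ref{enum:conn-treedec}), and since any vertex of $Q_i$ can serve as an attachment point for at most two sections, more than $2q(s+1)$ such sections would again yield an $X$-rooted $P_q$ minor along a single $Q_i$. This layered induction is what replaces your single pigeonhole and is the missing idea in your sketch; it gives $f(k,k)=2^{O(k^2)}$.
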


\begin{proof}
	To show that the first property of the lemma is satisfied, assume that $t \in V(T)$ is
	such that $\beta(t)$ has size more than $k$. Since $\td(G,U) \leq k$, it means that
	there exists an elimination forest of depth at most $k$ for $U$ in $G$. In particular,
	any pair of vertices of $U$ not in ancestor-descendant relationship can be separated
	in $G$ with at most $k-1$ vertices by simply taking the path from one of their lowest
	common ancestor to the root of its tree in the forest. Now if $|\beta(t) \cap U| \geq
	k+1$, then we can separate two of these vertices with at most $k-1$ vertices, which
	contradicts the definition of $k$-lean. 

	Let us now show by contradiction that the number of subtrees $T'$ of $T-t$ such that
	$(\beta(T')\setminus\beta(t))\cap U \neq \varnothing$ is bounded by $2^{k+1}$.
	Since all the adhesions are distinct (by property \ref{enum:unique-adh} and
	$|\beta(t)|>k$), it means we can find $k+1$ vertex-disjoint $\beta(t)$-$U$ paths
	that are internally disjoint with $\beta(t)$. Let $X$ be the set of endpoints
	of these paths in $\beta(t)$, and $U'$ the set of endpoints of these paths in $U$. We
	will now use the elimination tree to find a separator of $U'$. Consider a
	separator $S$ obtained by iteratively extending a root path in the elimination
	tree to the child such that the subtree below contains at least one of the elements
	of $U'$ whose path to $X$ is not hit by $S$. This subtree is unique since there cannot
	be two elements of $X$ separated by $S$ with $|S| \leq k-1$ due to $(T,\beta)$ being
	$k$-lean. Thus $S$ reaches a leaf, and we constructed a set $S$ of size $k$ that hits
	all $X$-$U'$ paths, but these paths are vertex-disjoint and there are $k+1$ of them.
	
	Let us now show that the second property of the lemma is satisfied. To do so, let us fix a path $P$ of $T$. For convenience, we consider $P$ to be rooted at one of its endpoints, and, if a
bag $\beta(t)$ is the closest bag to the root containing some vertex $u$, that
$u$ is \emph{introduced} in $\beta(t)$.
First, we note that we can obtain a path that is internally vertex-disjoint
with vertices of bags of $P$ for each adhesion incident to $P$ that is branching to
neighbours of $v$ not in $\beta(P)$ using \ref{enum:conn-treedec}. Furthermore,
we may choose any vertex of the adhesion as the endpoint in $\beta(P)$ of such
a path. Let $A$ be the set of endpoint vertices of such paths in $\beta(P)$. If
there are at least $2^{f(k)}$ adhesions with distinct sets, each leading to neighbours of $v$,
we can ensure $|A| \geq f(k)$ for any function $f$. Let $X := A \cup (\beta(P) \cap U)$, note
that, since we know the first property of the lemma is satisfied, if $|X|$ is not bounded by a function of $k$, then the number of bags of
$P$ introducing vertices of $X$ is not bounded by a function of $k$. Remember also that by
property \ref{enum:unique-adh}, it almost holds that each adhesion in $P$ has a distinct vertex set. 

For every $s \leq k$, a \textit{section} of width $s$ is a subpath $P'$ of $P$ such that: 
\begin{itemize}
	\item Either the first element of $P'$ corresponds to the first element of $P$; or the first adhesion of $P'$ has size $s$; and
	\item either the last element of $P'$ corresponds to the last element of $P$; or the last adhesion of $P'$ has size $s$; and 
	\item Every other adhesion between elements of $P'$ has size greater than $s$
\end{itemize}
Consider $\mathcal{A}^s$ the set of adhesions of size $s$ in $P$, the sections of width $s$ are precisely the set of subpaths contained between consecutive adhesions of size $s$, or the extremities of the paths. 
We will show by induction on $s \leq k$, that there are at most $f(s,k)$ sections of $P$
of width $s$ that introduce vertices of $X$ for some function $f$. Let $q=2^k$, since
$\td(G,U) \leq k$, we know by \cref{lem:td-to-fan} that there is no $X$-rooted $P_q$ minor
in $G$.

Let us first show that $f(1,k) \leq q$. Indeed, the adhesions are distinct and it is
straightforward to connect an introduced vertex $x$ of $X$ to the two vertices $a,b$ in
adhesions of $\mathcal{A}^1$ that surround its introduction by an
$(x,a,b)$-tripod. The tripods are internally vertex-disjoint and their union is
easily seen to be an $X$-rooted path minor model which we know cannot be on $q$
vertices.

To understand the sections of width $s +1$, let us observe that the elements of
$\mathcal{A}^{s}$ induce a partition of the sections of width $s+1$ into blocks, which
corresponds to union of sections of width $s+1$ between two consecutive elements of
$\mathcal{A}^{s}$. Since the union of sections of width $s+1$ inside one block corresponds
to a section of width $s$, we know that only $f(s,k)$ different blocks introduce a vertex.
Therefore, if we can show that each block has a bounded number of sections of width $s+1$
introducing a vertex of $X$, we would be done. To do so, let us fix a block $\mathcal{B}$.
By Menger's theorem and the property of $k$-lean decomposition, we can find a collection
$Q_1, \dots, Q_{s+1}$ of $s+1$ paths linking the first adhesion of the block $\mathcal{B}$
to the last one. In particular, for each section $S$ of $\mathcal{B}$ this induces a set
of paths linking the vertices of the first and last adhesion. Some of the vertices of
these adhesions might be the same, in which case the path can be a single vertex, however
since adhesions are distinct\footnote{If they are not, the section is a bag of $T$ equal
to its two adhesions on $P$ and we may find a vertex-disjoint path from $U$ to any vertex
of the bag by \ref{enum:unique-adh} and \ref{enum:conn-treedec}. Furthermore, we are
guaranteed that the next adhesion on $P$ will be distinct.}, 
some of these paths must be non trivial. For a section $S$, if we denote by $\alpha(S)$
the vertices that belong to the first and last adhesion of $S$, we deduce from
\ref{enum:conn-treedec} that any vertex of $\beta(S) \setminus \alpha(S)$ can be linked to
some vertex of the first or last adhesion by a path disjoint from the $\alpha(S)$. In
particular, if a section introduces an element $x$ of $X$, then it can be attached to one
of the non trivial subpaths of the $Q_1, \dots, Q_{s+1}$ by a path $Q(x)$ internally
disjoint from the first and last adhesion of $S$. For any section $S$ of $\mathcal{B}$
that introduces an element of $X$, let us fix $x(S)$ any such element, and define $Q(S)$
as the path $Q(x(S))$ linking $x(S)$ to one of the non trivial subpaths of $Q_1, \dots,
Q_{s+1}$ inside $S$. Remark that if the extremity $y$ of $Q(S)$ belongs to the first
(resp. last) adhesion of $S$, then this is the first (resp. last) section that contains
$y$ in its adhesion. Therefore, for any element $y$ belonging to one of the paths $Q_1,
\dots, Q_{s+1}$, there exists at most two sections $S$ such that $y$ is the extremity of
$Q(S)$. From this, we deduce that if there exists more than $2q(s+1)$ sections of $S$ that
introduces an element of $X$, then $2q$ of the paths $Q(S)$ must have an endpoint on the
same path, and we can extract an $X$-rooted minor of length $q$, which is a contradiction.
Finally this proves that $f(s+1, k) \leq f(s,k) \cdot 2q(s+1)$.

We can conclude that $f(k,k) = 2^{O(k^2)}$ by combining the previous inequalities.
Now sections that are not separated by adhesions of size $t$ are bags of $P$
for which we can simply apply the first property of the lemma.
\end{proof}

Since the above lemma holds simultaneously for several sets $U$ and a fixed decomposition
$T$, in particular by considering the family of sets $\mathcal{U}=\{N[v]:v\in V(G)\}$, we
deduce a parameter equivalent to the largest fan topological minor based on observing
simple properties of well-formed $k$-lean tree decompositions. In the statement
below, we replace precise bounds by fixed functions $f_1,f_2$ independent of
the considered graph.

\begin{theorem}\label{thm:dec-fan}
    If there is no $F_k$ topological minor in $G$, then there exists
	$(a,b,c)=f_1(k)$ such that, for any $a' \geq a,$ in any well-formed $a'$-lean tree
	decomposition $(T,\beta)$ of $G$: 
	\begin{enumerate}
	\item each vertex $v$ has at most $b$ neighbours in the torso of each bag, and
	\item for every path $P$ of $T$, there are at most $c$ bags on $P$ that either contain
	the introduction of an edge incident to $v$, or are incident to an edge of $T-P$
	that leads to a bag containing the introduction to an edge incident to $v$.
	\end{enumerate}
	
	Conversely, if there exists $(a,b,c)$ satisfying the properties above, then
	there is no $F_{k'}$ topological minor in $G$ for some $k'=f_2(a,b,c)$. 
\end{theorem}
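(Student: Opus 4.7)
The plan is to reduce both directions of \cref{thm:dec-fan} to neighbourhood treedepth computations via \cref{thm:neighbourhood-td}, and then apply \cref{lem:td-to-bound} (for the forward direction) and \cref{lem:bound-to-td} combined with \cref{lem:td-to-fan} (for the converse) to translate between treedepth bounds and the structural invariants of a well-formed $a'$-lean tree decomposition.

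\textbf{Forward direction.} Starting from a graph $G$ with no $F_k$ topological minor, I will invoke \cref{thm:neighbourhood-td} to obtain a function $h$ with $\td(G, N[v]) \leq h(k)$ for every vertex $v$, set $a := h(k)$, and then apply \cref{lem:td-to-bound} with $U := N[v]$ to any well-formed $a'$-lean tree decomposition $(T,\beta)$ with $a' \geq a$. The lemma yields, for each bag of size exceeding $a'$, uniform bounds on $|\beta(t) \cap N[v]|$ and on the number of subtrees of $T - t$ reaching $N[v] \setminus \beta(t)$, together with analogous bounds on every path. The remaining step is the dictionary from these bounds to the statement's invariants: the neighbours of $v$ in the torso at $\beta(t)$ split into actual neighbours in $\beta(t)$ (bounded by $|\beta(t) \cap N[v]|$) and virtual neighbours from adhesions incident to $t$ containing $v$; each such adhesion contributes at most $a' - 1$ virtual neighbours and, by \ref{enum:adh-neighbour}, forces a neighbour of $v$ in the corresponding subtree, so the adhesion count is controlled by the subtree count. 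Bags of size at most $a'$ trivially enjoy a bound of $2a'$. Along a path $P$, each bag introducing an edge $uv$ contains $u \in N(v) \cap \beta(P)$, and each bag on $P$ merely leading to such an introduction in a subtree off $P$ coincides with a branch bag counted by \cref{lem:td-to-bound}.

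\textbf{Converse direction.} Fixing a vertex $v$, I will apply \cref{lem:bound-to-td} to $U := N[v]$ with lean parameter $a$. The torso-neighbour invariant immediately bounds $|\beta(t) \cap N[v]| \leq b+1$ and bounds the number of subtrees of $T - t$ reaching $N[v] \setminus \beta(t)$: if $v \in \beta(t)$, each such subtree forces a distinct adhesion $\alpha(tt')$ containing $v$ and hence at least one new virtual torso neighbour of $v$ at $t$, giving at most $b$ such subtrees; if $v \notin \beta(t)$, the connected-subtree axiom leaves only one such subtree. The path invariant supplies the covering set $\mathcal{B}$: each $u \in N(v) \cap \beta(P)$ is covered either by a special bag introducing $uv$ on $P$, or (when $uv$ is introduced in a subtree off $P$) by the special branching bag of $P$ whose adhesion into that subtree contains both $u$ and $v$; one extra bag covers $v$ itself if needed. \cref{lem:bound-to-td} then gives $\td(G, N[v]) \leq f(a,b,c)$ for some explicit function $f$, and combining \cref{lem:td-to-fan} with \cref{thm:neighbourhood-td} produces a uniform bound $k' = f_2(a,b,c)$ on the size of any fan topological minor of $G$.

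\textbf{Main obstacle.} The principal technical work lies in the bidirectional dictionary between the torso/edge-introduction formulation of \cref{thm:dec-fan} and the bag-intersection/subtree-containment formulation of the two lemmas. Once the translations are set up using \ref{enum:adh-neighbour} and the standard edge-introduction convention, both directions become routine applications of the treedepth lemmas; minor care is required at small bags (where \cref{lem:td-to-bound}'s hypothesis is vacuous) and to guarantee that $v$ itself is included in the path covering set $\mathcal{B}$.
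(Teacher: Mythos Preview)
Your proposal is correct and follows essentially the same route as the paper: both directions go through neighbourhood treedepth via \cref{thm:neighbourhood-td}, invoke \cref{lem:td-to-bound} for the forward direction and \cref{lem:bound-to-td} for the converse, and translate between the torso/edge-introduction language of the theorem and the bag-intersection/subtree language of the lemmata (the paper packages this translation as \cref{obs:bound-to-torso}). One small imprecision in your converse: distinct adhesions containing $v$ need not each contribute a \emph{new} virtual torso neighbour (e.g.\ $\{v,x,y\}$ and $\{v,x\}$ are distinct but the second adds nothing), so the number of relevant subtrees is bounded by the number of subsets of $N_{G_t}[v]$, i.e.\ by a function of $b$, rather than by $b$ itself---this does not affect the argument, since only some bound depending on $(a,b,c)$ is required.
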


The statement of the theorem is slightly simplified compared to the earlier lemmata thanks to
the following consequence of properties \ref{enum:vertex-treedec},\ref{enum:edge-treedec},
and \ref{enum:adh-neighbour}.

\begin{observation}\label{obs:bound-to-torso}
In a well-formed $a'$-lean tree decomposition $T$, the set of neighbours of $v$ in the torso $G_t$
at $t \in V(T)$ is the union of $N_G(v) \cap \beta(t)$ and the adhesions $\alpha(tt')$
containing $v$. These adhesions are exactly the adhesions leading to subtrees $T'$
introducing a neighbour $w$ of $v$ in $G$ (i.e. $w \in \beta(T') \setminus \beta(t)$).
Such subtrees must contain the introduction of the edge $vw$.
\end{observation}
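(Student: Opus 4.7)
The observation packages three distinct claims: (i) an identification of $N_{G_t}(v)$ as $N_G(v)\cap \beta(t)$ together with the union of adhesions $\alpha(tt')$ through $v$; (ii) a bijection between the adhesions at $t$ containing $v$ and the subtrees $T'$ of $T-t$ into which some edge of $G$ incident to $v$ escapes; (iii) the fact that such a subtree must contain the introduction of that edge. Each claim is a short consequence of the basic tree-decomposition axioms combined with the standard cleanup properties of a well-formed decomposition, so the plan is essentially to read them off.

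For (i), the plan is simply to unfold the definition of the torso $G_t$: $G_t$ has vertex set $\beta(t)$ and its edge set is $E(G[\beta(t)])$ together with a clique on each adhesion $\alpha(tt')$ for $tt'\in E(T)$. A vertex $u\neq v$ in $\beta(t)$ is then a neighbour of $v$ in $G_t$ exactly when either $uv\in E(G)$, or there is some neighbour edge $tt'\in E(T)$ with $\{u,v\}\subseteq\alpha(tt')$. This immediately yields the stated decomposition of $N_{G_t}(v)$.

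For (ii), I would argue both directions. Forward: if $v\in\alpha(tt')$, property \ref{enum:adh-neighbour} supplies a neighbour $w$ of $v$ in $\beta(T')\setminus\alpha(tt')$. I then need to upgrade this to $w\notin \beta(t)$: if $w$ were in $\beta(t)$ as well, connectivity \ref{enum:vertex-treedec} of the vertex-bag subtree of $w$ would force $w\in\beta(t')\cap\beta(t)=\alpha(tt')$, contradicting the choice of $w$. Backward: suppose a subtree $T'$ of $T-t$, attached at $t$ via the edge $tt'$, contains a neighbour $w$ of $v$ with $w\notin\beta(t)$. Then by \ref{enum:edge-treedec} the edge $vw$ lies in some bag $\beta(s)$; since $w\in\beta(s)$ and $w\notin\beta(t)$, the connectivity of the vertex-bag subtree of $w$ forces $s\in T'$, and similarly, since $v\in\beta(s)$ and $v\in\beta(t)$, the connectivity of the vertex-bag subtree of $v$ forces $v$ to lie in every bag on the $s$-to-$t$ path in $T$; in particular $v\in\beta(t')$, hence $v\in\alpha(tt')$.

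For (iii), the bag $\beta(s)$ produced in the backward direction above, which by \ref{enum:edge-treedec} contains both endpoints of $vw$, is exactly a bag in $T'$ that introduces $vw$. I do not see any real obstacle in this proof; the only mild subtlety is remembering to invoke the vertex-bag-connectivity axiom \ref{enum:vertex-treedec} twice in opposite directions (once to push $w$ out of $\beta(t)$ into $\beta(T')\setminus\beta(t)$, and once to pull $v$ into $\beta(t')$ from a bag $\beta(s)$ deeper in $T'$), which is what allows the adhesion $\alpha(tt')$ to mediate between the local torso structure at $t$ and the global edges of $G$ that escape into $T'$.
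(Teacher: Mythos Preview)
Your proposal is correct and matches the paper's approach: the paper does not spell out a proof but simply attributes the observation to properties \ref{enum:vertex-treedec}, \ref{enum:edge-treedec}, and \ref{enum:adh-neighbour}, which are exactly the ingredients you unpack. One cosmetic point for part (iii): the paper's notion of ``introduction of $vw$'' is the fixed bag $t(vw)$, not just any bag containing both endpoints, but your argument already shows every bag containing $w$ lies in $T'$ (since $w\notin\beta(t)$ and the $w$-subtree is connected), so in particular $t(vw)\in T'$.
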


\begin{proof}[Proof of \cref{thm:dec-fan}]
\begin{description}[listparindent=1.5em]
\setlength{\itemsep}{7pt}
\item[($\Rightarrow$)] By \cref{thm:neighbourhood-td}, there is a $k'$ function of $k$
such that for all $v\in V(G),$ we have $\td(G,N[v]) \leq k'.$ Let $(T,\beta)$ be a well-formed
$a'$-lean tree decomposition, for $a' \geq a= k'$. We apply \cref{lem:td-to-bound} to
$(T,\beta)$ and the sets $N[v]$ for each $v \in V(G)$, which imply the desired properties.
We explicit how to obtain the properties claimed in the theorem's statement. If a bag $\beta(t)$ is
of size at most $k'$, then the torso $G_t$ has bounded size and its vertices have bounded
degree. Otherwise, we apply \cref{obs:bound-to-torso} to translate the bounds from
\cref{lem:td-to-bound}. We now fix a path $P$ of $T$. Since $|\beta(P)\cap N[v]|$ is
bounded by a function of $k$, in particular, there is a bounded number of introductions of
edges of $G$ incident to $v$ on the path $P$. There is also a bounded number of bags
$\beta(t)$ incident to subtrees $T'$ of $T-P$ that contain the introduction of edges $vw$
of $G$ incident to $v$ because such
subtrees must satisfy $(\beta(T')\setminus \beta(t)) \cap N(v) \neq \varnothing$ by property
\ref{enum:edge-treedec} applied to $vw$.

\item[($\Leftarrow$)] Consider a well-formed $a$-lean tree decomposition $(T,\beta)$ with
the given properties, and apply \cref{lem:bound-to-td} (using \cref{obs:bound-to-torso} to
translate the torso bound). We deduce that $\max_{v \in V(G)} \td(G,N[v])$ is bounded by a
function of $a,b,c$. We conclude by using \cref{thm:neighbourhood-td}.
\qedhere
\end{description}
\end{proof}

\subsection{Excluding a dipole}

Once again, we abstract the structure we are looking for by considering
how sets of vertices are arranged in a well-formed $k$-lean decomposition.

\begin{observation}\label{obs:dipole-to-sep}
	Given two vertices $u,v \in V(G)$, the size of the largest dipole centered
	on $u,v$ is tied to $\mu(N(u),N(v))$.
\end{observation}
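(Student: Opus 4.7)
The plan is to establish that the maximum $k$ for which $G$ contains a $k$-dipole centered on $u,v$ as a topological minor and the value $\mu(N(u),N(v))$ differ by at most an additive constant (in fact at most two), which is strictly stronger than being tied. Both bounds are direct applications of Menger's theorem as invoked in the definition of $\mu$ in the preliminaries, together with a short accounting for paths that may use $u$ or $v$ as intermediate vertices.

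First I would show that a $k$-dipole topological minor centered on $u,v$ forces $\mu(N(u),N(v)) \geq k-1$. Such a model provides $k$ internally vertex-disjoint $u$-$v$ paths $Q_1,\dots,Q_k$ in $G$. Since $G$ is simple, at most one $Q_i$ coincides with the single edge $uv$; every other $Q_i$ has length at least two, so its second vertex lies in $N(u)$ and its second-to-last vertex lies in $N(v)$, and cutting out $u$ and $v$ leaves an $N(u)$-$N(v)$ path (possibly of length zero, when $Q_i$ has length two and its middle vertex lies in $N(u) \cap N(v)$). These trimmed paths are pairwise vertex-disjoint because the $Q_i$ are internally disjoint, and Menger's theorem gives the claimed bound.

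Conversely, I would show that $\mu(N(u),N(v)) \geq k$ produces a $(k-2)$-dipole topological minor. Menger's theorem yields $k$ pairwise vertex-disjoint $N(u)$-$N(v)$ paths $P_1,\dots,P_k$ with endpoints $a_i \in N(u)$ and $b_i \in N(v)$. Prepending the edge $ua_i$ and appending $b_iv$ turns each $P_i$ into a $u$-$v$ walk that is a simple path exactly when $P_i$ avoids both $u$ and $v$. Since the $P_i$ are pairwise vertex-disjoint, $u$ can be an internal vertex of at most one $P_i$ and likewise $v$, so discarding these at most two offending paths leaves $k-2$ internally vertex-disjoint $u$-$v$ paths, that is, a topological minor model of a $(k-2)$-dipole centered on $u,v$.

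The only subtlety, and the sole source of the additive slack, is the boundary case $uv \in E(G)$, in which $u \in N(v)$ and $v \in N(u)$ and the edge $uv$ itself appears as a trivial $N(u)$-$N(v)$ path (and as a possible path in a dipole model). The arguments above already absorb this into the $O(1)$ constants, so no separate case analysis is needed, and the two parameters are tied up to an additive constant.
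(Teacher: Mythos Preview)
Your proposal is correct and is exactly the argument the paper has in mind: the observation is stated without proof, and the intended justification is precisely the Menger-type accounting you give, trimming or extending internally disjoint $u$--$v$ paths by the endpoints to pass between dipole models and vertex-disjoint $N(u)$--$N(v)$ paths, with an $O(1)$ additive loss for paths through $u$ or $v$. Your write-up simply makes explicit what the paper leaves to the reader.
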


Our structure theorem is the consequence of the following key lemmata. We
define the weight $w(t,U)$ of a set $U$ on a bag $\beta(t)$ to be the sum of
$|\beta(t) \cap U|$ and the minimum number of adhesions incident to $t$ that
hit all $\beta(t)$-$(U\setminus\beta(t))$ paths.

\begin{lemma}\label{lem:bound-to-sep}
Given a tree decomposition $(T,\beta)$ of $G$ with adhesions of order at most
$k$ and two sets $U_1,U_2$ of vertices such that for all $t \in V(T)$, either
$w(t,U_1) \leq a$ or $w(t,U_2) \leq a$, and there exists a set of $c$ bags
$\mathcal{B}$ such that $\beta(\mathcal{B})$ hits all $U_1$-$U_2$ paths.
Then, $\mu(U_1,U_2) \leq cak$.
\end{lemma}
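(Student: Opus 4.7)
The plan is to build an explicit $U_1$-$U_2$ vertex separator $S$ of size at most $cak$ by aggregating, for each bag $t \in \mathcal{B}$, a local ``small'' separator, and then conclude by Menger's theorem that $\mu(U_1,U_2)\leq cak$.

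For each $t \in \mathcal{B}$, I would choose an index $i(t)\in\{1,2\}$ with $w(t,U_{i(t)})\leq a$, which exists by hypothesis, together with a witnessing minimum family $A_t$ of adhesions incident to $t$ hitting every $\beta(t)$-$(U_{i(t)}\setminus\beta(t))$ path, so that $|\beta(t)\cap U_{i(t)}|+|A_t|\leq a$. I would then set
$$S_t := \bigl(\beta(t)\cap U_{i(t)}\bigr) \cup \bigcup_{\alpha \in A_t}\alpha,$$
and take $S := \bigcup_{t\in\mathcal{B}} S_t$. Since each adhesion has at most $k$ vertices, a one-line calculation gives $|S_t|\leq ak$, hence $|S|\leq cak$.

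The key step is to verify that $S$ meets every $U_1$-$U_2$ path of $G$. Given such a path $P$ with $U_j$-endpoint $u_j$ for $j\in\{1,2\}$, the hypothesis on $\mathcal{B}$ provides some $t\in\mathcal{B}$ with $P\cap\beta(t)\neq\varnothing$. By symmetry assume $i(t)=1$. If $u_1\in\beta(t)$ then $u_1\in\beta(t)\cap U_1\subseteq S_t$ and we are done. Otherwise $u_1\in U_1\setminus\beta(t)$, and for any $v\in P\cap\beta(t)$ the subpath of $P$ from $v$ to $u_1$ is a $\beta(t)$-$(U_1\setminus\beta(t))$ path in $G$ and must therefore meet some adhesion $\alpha\in A_t$ by the defining property of $A_t$, yielding a vertex of $P$ in $S_t\subseteq S$.

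I do not expect the proof to hide any serious difficulty; the lemma is essentially a bookkeeping statement translating the local weight assumption into a global separator. The only conceptual point worth stating cleanly is that the ``hitting adhesions'' formulation in the definition of $w(t,U)$ is exactly what is needed to produce vertices of $\beta(t)$ on the relevant subpaths in $G$; this relies on nothing more than the standard adhesion-separation property of tree decompositions, together with the observation that every chosen adhesion is a subset of $\beta(t)$ so its vertices are automatically collected into $S_t$.
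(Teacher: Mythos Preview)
Your proof is correct and follows essentially the same approach as the paper's: for each bag in $\mathcal{B}$ pick the side $U_{i(t)}$ with small weight, take $S_t$ to be $(\beta(t)\cap U_{i(t)})$ together with the witnessing adhesions, and argue that $S=\bigcup_t S_t$ separates $U_1$ from $U_2$. Your verification that any $U_1$-$U_2$ path is hit (by considering the subpath from a vertex of $P\cap\beta(t)$ to the $U_{i(t)}$-endpoint) is in fact slightly more explicit than the paper's, which simply asserts that $A^t$ either contains an endpoint of $P$ or an adhesion leading to one.
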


\begin{proof}
For each bag in $\beta(t) \in \mathcal{B}$, let $A_i^t$ denote the union of
$\beta(t) \cap U_i$ and $\alpha(tt')$, for each $tt'$ in a minimum set $S$ of
adhesions such that $\alpha(S)$ hits all $\beta(t)$-$(U_i\setminus \beta(t))$
paths. By assumption, one of $A_1^t$ and $A_2^t$ is of size at most $ak$. Let
$A^t$ be the smallest of $A_1^t$ and $A_2^t$.

We now argue that the set $A$ which is the union of $A^t$ over bags $\beta(t)
\in \mathcal{B}$, is a $U_1$-$U_2$ separator. Assume towards a contradiction
that there is a $U_1$-$U_2$ path $P$ in $G-A$. Then, by our assumptions, $P$ is
hit by some $\beta(t) \in \mathcal{B}$. Moreover, it must be hit by $A^t
\subseteq A$ since $A^t$ will, either contain an endpoint of $P$, or an
adhesion $\alpha(tt')$ leading to a bag containing an endpoint of $P$, a
contradiction. We conclude since $|A| \leq cak$.
\end{proof}

\begin{lemma}\label{lem:sep-to-bound}
Let $(T,\beta)$ be a well-formed $k$-lean tree decomposition $(T,\beta)$ of $G$, and $U_1,U_2$ be two
sets of vertices such that $\mu(U_1,U_2) \leq c \leq k$. There exists
a set of $c$ bags $\mathcal{B}$ such that $\beta(\mathcal{B})$ hits all
$U_1$-$U_2$ paths, and for every $t \in V(T)$, either $w(t,U_1) \leq 2c$ or $w(t,U_2) \leq 2c$.
\end{lemma}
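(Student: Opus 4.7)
My plan is to establish the two conclusions separately.

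For the cover claim, since $\mu(U_1,U_2) \leq c$, Menger's theorem provides a $U_1$-$U_2$ vertex separator $S$ of size at most $c$. Using property \ref{enum:vertex-treedec}, each $s \in S$ lies in some bag $\beta(t_s)$. The collection $\mathcal{B} := \{t_s : s \in S\}$ thus has at most $c$ bags and $\beta(\mathcal{B}) \supseteq S$ hits every $U_1$-$U_2$ path. This argument uses only the tree decomposition axioms.

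For the weight dichotomy I argue by contradiction: suppose some bag $\beta(t)$ satisfies both $w(t,U_1) > 2c$ and $w(t,U_2) > 2c$. The plan is to derive more than $c$ vertex-disjoint $U_1$-$U_2$ paths, contradicting $\mu(U_1,U_2) \leq c$. The strategy is to extract, for each $i \in \{1,2\}$, a family of more than $2c$ vertex-disjoint $\beta(t)$-$U_i$ paths in $G$, then stitch the two families together at $\beta(t)$.

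The key intermediate step is proving that $w(t,U) > s$ forces more than $s$ vertex-disjoint $\beta(t)$-$U$ paths in $G$; by Menger, this amounts to showing $|Y| \geq w(t,U)$ for every $\beta(t)$-$U$ vertex separator $Y$. Writing $X := \beta(t) \cap U \subseteq Y$, I would map each $y \in Y \setminus X$ to an incident adhesion: if $y \in \beta(t)$, an adhesion $\alpha(tt')$ containing $y$ (using the $k$-lean property to ensure such an assignment yields a cover); if $y \notin \beta(t)$, the unique adhesion $\alpha(tt_{j(y)})$ separating the subtree containing $y$ from $t$. Properties \ref{enum:conn-treedec} and \ref{enum:adh-neighbour} together with the minimal-separator characterisation of adhesions in a $k$-lean decomposition ensure that this assignment yields an adhesion cover of size at most $|Y \setminus X|$, so the adhesion count in $w(t,U)$ is at most $|Y \setminus X|$ and thus $w(t,U) \leq |Y|$.

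Applying this to $U_1$ and $U_2$ produces path families with endpoint sets $B_1, B_2 \subseteq \beta(t)$ of size greater than $2c$. Concatenation at shared endpoints yields $U_1$-$U_2$ paths: by inclusion-exclusion there are more than $|B_1|+|B_2|-|\beta(t)| > 4c - |\beta(t)|$ such concatenations, which exceeds $c$ whenever $|\beta(t)| < 3c$. The main obstacle is the regime $|\beta(t)| \geq 3c$: here I would invoke $k$-leanness to reroute within $\beta(t)$ and through other subtrees of $T - t$, exploiting the connectivity guarantees to match up to $c+1$ pairs across $B_1$ and $B_2$ and obtain the required disjoint $U_1$-$U_2$ paths, contradicting the assumption $\mu(U_1,U_2) \leq c$.
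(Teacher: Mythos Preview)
Your first part (choosing one bag per vertex of a minimum $U_1$-$U_2$ separator $S$) is exactly what the paper does.

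For the weight dichotomy, however, your approach has two genuine gaps.

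\textbf{The bound $w(t,U)\le |Y|$ is false.} Your key intermediate step claims that $w(t,U)>2c$ forces more than $2c$ vertex-disjoint $\beta(t)$--$U$ paths, i.e.\ that every $\beta(t)$--$U$ separator $Y$ satisfies $|Y|\ge w(t,U)$. This fails whenever $\beta(t)\cap U$ overlaps the adhesions in the minimum cover: if $\beta(t)\cap U=\{v_1,\dots,v_{c+1}\}$ and the only adhesions leading to $U\setminus\beta(t)$ are the singletons $\{v_1\},\dots,\{v_c\}$, then $w(t,U)=(c+1)+c=2c+1$, yet $\{v_1,\dots,v_{c+1}\}$ is already a $\beta(t)$--$U$ separator of size $c+1$. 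What is true (and what the paper uses) is only that from $w(t,U_i)>2c$ one can extract a set $X_i\subseteq\beta(t)$ with $|X_i|>c$ together with $|X_i|$ vertex-disjoint $X_i$--$U_i$ paths, obtained by a simple case split on whether $|\beta(t)\cap U_i|>c$ or the adhesion count exceeds $c$ (using private vertices of a minimal adhesion cover and \ref{enum:conn-treedec}).

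\textbf{Concatenation does not yield disjoint $U_1$--$U_2$ paths.} Even granting large families $\mathcal{P}_1,\mathcal{P}_2$, a path in $\mathcal{P}_1$ and a path in $\mathcal{P}_2$ may share internal vertices (they may run through the same subtree of $T-t$). So matching endpoints in $\beta(t)$ by inclusion--exclusion does not produce vertex-disjoint $U_1$--$U_2$ paths, and your ``reroute via $k$-leanness'' sketch for large bags does not address this disjointness issue at all.

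The paper sidesteps both problems by \emph{not} constructing many disjoint $U_1$--$U_2$ paths. Instead it works directly with the fixed separator $S$ of size at most $c$: since each family $\mathcal{P}_i$ has more than $c$ pairwise disjoint paths, some $P_i\in\mathcal{P}_i$ avoids $S$ entirely. Let $x_i\in\beta(t)$ be its endpoint. Then \ref{enum:k-lean-menger} applied at the single node $t$ (so the path in $T$ is trivial and the conclusion is vacuous) forbids any separation of order at most $c\le k$ between suitable subsets of $\beta(t)$, forcing $x_1$ and $x_2$ to lie in the same component of $G-S$. Hence $U_1$ and $U_2$ are connected in $G-S$, contradicting the choice of $S$. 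This argument needs only $|X_i|>c$, uses $k$-leanness exactly once, and requires no case distinction on $|\beta(t)|$.
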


\begin{proof}
Let $S$ be a minimum $U_1$-$U_2$ separator, by our assumptions $|S| \leq c$.
For each vertex $x \in S$, there is a bag $\beta(t)$ containing $x$ by
\ref{enum:vertex-treedec}. We may pick any such bag for each vertex $x \in S$
and obtain a set of at most $c$ bags $\mathcal{B}$.

Consider a fixed bag $\beta(t)$, and assume towards a contradiction that
$w(t,U_1)>2c$ and $w(t,U_2)>2c$. For each $U_i$, we deduce a set $X_i \subseteq
\beta(t)$ such that $|X_i|>c$ and for which there is a set of $|X_i|$
vertex-disjoint paths $\mathcal{P}_i$ from $X_i$ to $U_i$. Indeed, if $|\beta(t)
\cap U_i|>c$, then we choose trivial paths on $\beta(t) \cap U_i$. Otherwise,
there is a minimal set of more than $c$ adhesions hitting all $\beta(t)$-$(U_i
\setminus \beta(t))$ paths. By minimality, each adhesion has a private vertex.
Let $X_i$ be the set of such private vertices. We deduce from
\ref{enum:conn-treedec} applied to each adhesion the set $\mathcal{P}_i$ of
$|X_i|$ vertex-disjoint $X_i$-$U_i$ paths. Finally, since $|X_1|,|X_2|>c$, at
least one path in each of $\mathcal{P}_1,\mathcal{P}_2$ is not hit by $S$. Let
$x_1 \in X_1,x_2\in X_2$ be the endpoints in $\beta(t)$ of such paths.
By an application of \ref{enum:k-lean-menger}, $x_1$ and $x_2$ must be in the
same connected component of $G-S$ since there is no separation of $B$ of order
at most $k\geq c$. We conclude that $U_1$ and $U_2$ are not separated by $S$, a
contradiction.
\end{proof}

By applying the lemmata simultaneously for all pairs of sets in $\{(N[u],N[v])
: u,v \in V(G) \}$, we obtain a parameter equivalent to the largest dipole
topological minor based on observing simple properties of well-formed $k$-lean
tree decompositions.

\begin{theorem}\label{thm:dec-dipole}
If there is no $D_k$ topological minor in $G$, then there exists $(a,b,c)=f_1(k)$
such that, for any $a' \geq a$, for any well-formed $a'$-lean tree decomposition
$(T,\beta)$ of $G$, each bag $\beta(t)$ of $T$ has at most one vertex $v$ with $w(t,N(v))
> b$, and, for each pair of vertices $u,v \in V(G)$, there is a set $\mathcal{B}_{u,v}$ of
at most $c$ bags of $T$ such that $\beta(\mathcal{B}_{u,v})$ hits all $N(u)$-$N(v)$ paths.

Conversely, if there exists $(a,b,c)$ with the properties above, then there is no $D_{k'}$
topological minor for $k'=f_2(a,b,c)$.
\end{theorem}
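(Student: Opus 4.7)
The plan is to mirror the proof of \cref{thm:dec-fan}: apply the two key lemmata \cref{lem:sep-to-bound,lem:bound-to-sep} simultaneously to the family of pairs $\{(N(u),N(v)) : u,v \in V(G), u \neq v\}$, using \cref{obs:dipole-to-sep} as the bridge between the dipole topological minor order and the separator parameter $\mu(N(u),N(v))$.

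\textbf{Forward direction.} Suppose $G$ excludes $D_k$ as a topological minor. By \cref{obs:dipole-to-sep}, there is a constant $c$ depending only on $k$ such that $\mu(N(u),N(v)) \leq c$ for every pair of distinct vertices $u,v$. Set $a := c$ and fix any well-formed $a'$-lean tree decomposition $(T,\beta)$ with $a' \geq a$. For each pair $u \neq v$, apply \cref{lem:sep-to-bound} with $U_1 = N(u)$ and $U_2 = N(v)$ (the hypothesis $\mu(U_1, U_2) \leq a'$ holds since $a' \geq c$). This directly yields the desired set $\mathcal{B}_{u,v}$ of at most $c$ bags hitting every $N(u)$-$N(v)$ path. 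Setting $b := 2c$, I claim that at most one vertex per bag $\beta(t)$ satisfies $w(t,N(v)) > b$: if two distinct vertices $u,v$ both had $w(t,N(u))>b$ and $w(t,N(v))>b$, this would contradict the second conclusion of \cref{lem:sep-to-bound} applied to the pair $(u,v)$, namely that $w(t,N(u)) \leq 2c$ or $w(t,N(v)) \leq 2c$.

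\textbf{Converse direction.} Suppose a well-formed $a$-lean tree decomposition $(T,\beta)$ satisfies the two conclusions of the theorem. Fix any pair of distinct vertices $u,v$. By the ``at most one bad vertex per bag'' property, at least one of $w(t,N(u))$ and $w(t,N(v))$ is bounded by $b$ for every bag $\beta(t)$. Together with the hitting set $\mathcal{B}_{u,v}$, the hypotheses of \cref{lem:bound-to-sep} are satisfied (the adhesion bound on the decomposition being $a$), so $\mu(N(u),N(v)) \leq c\cdot b\cdot a$. Applying \cref{obs:dipole-to-sep} pair by pair, the largest dipole topological minor is bounded by some $k' = f_2(a,b,c)$.

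The proof is essentially a direct dualisation of the arguments used for \cref{thm:dec-fan}, and no real technical obstacle is expected. The main conceptual step, and the place a careless reader might slip, is the translation between ``at most one bad vertex per bag'' (the form produced by the simultaneous application of \cref{lem:sep-to-bound}) and ``at least one of $u,v$ is good'' (the form demanded by \cref{lem:bound-to-sep} applied to a specific pair); these two formulations are trivially equivalent when restricted to pairs of distinct vertices, but this equivalence is exactly what makes the approach work.
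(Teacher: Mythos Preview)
Your proposal is correct and matches the paper's proof essentially line for line: the forward direction invokes \cref{obs:dipole-to-sep} to bound $\mu(N(u),N(v))$ and then applies \cref{lem:sep-to-bound} to every pair, while the converse feeds the two properties into \cref{lem:bound-to-sep} and translates back via \cref{obs:dipole-to-sep}. Your explicit remark that ``at most one bad vertex per bag'' and ``for every pair at least one is good'' are equivalent is precisely the pivot the paper uses (implicitly) to go from the per-pair conclusion of \cref{lem:sep-to-bound} to the per-bag statement and back.
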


\begin{proof}
\begin{description}[listparindent=1.5em]
\setlength{\itemsep}{7pt}
\item[($\Rightarrow$)] By \cref{obs:dipole-to-sep}, for all pairs of vertices $u,v \in
V(G),$ $\mu(N(u),N(v))$ is bounded by $k'$, itself bounded by a function of $k$. Let
$(T,\beta)$ be a well-formed $a'$-lean tree decomposition, with $a' \geq k'$. We apply
\cref{lem:sep-to-bound} to $(T,\beta)$ and pairs of sets $(N(u),N(v))$ for $u,v \in V(G)$.
In particular, for a fixed $t \in V(T),$ and $u,v \in \beta(t)$, only one of $w(t,N(u))$
and $w(t,N(v))$ is greater than $b$, with $b$ bounded by a function of $k$. We deduce that
there is at most one vertex $v$ of $\beta(t)$ such that $w(t,N(v))>b$. The existence the
desired sets of at most $c$ bags $\mathcal{B}_{u,v}$ follows from the application of
\cref{lem:sep-to-bound} to $(N(u),N(v))$.

\item[($\Leftarrow$)] Let $(T,\beta)$ be a well-formed $a$-lean tree decomposition with
the given properties. By \cref{lem:bound-to-sep} and \cref{obs:dipole-to-sep}, the dipole
number is bounded by a function of $a,b,$ and $c$.
\qedhere
\end{description}
\end{proof}

\section{Folding decompositions}\label{sec:folding}

We will now apply a strategy we call ``folding'' a tree decomposition, which is
inspired by constructions for tree decompositions of logarithmic depth
\cite{contractingtreedec} and constructions of tree-partitions or domino tree
decompositions
\cite{tpwDegree,DominoTw1,DominoTw2}. Given some initial tree decomposition
$T$, we will recursively consider an adhesion of bounded size for which we want
to bring the neighbours of its vertices closer in the decomposition. The main
goal is to do so without ruining the properties of the initial decomposition.
Intuitively, this is done by building a new tree decomposition with slightly
larger adhesions on top of a ``folded'' $T$, see Figure~\ref{fig:folding}. 

\begin{figure}[h]
\centering
\includegraphics{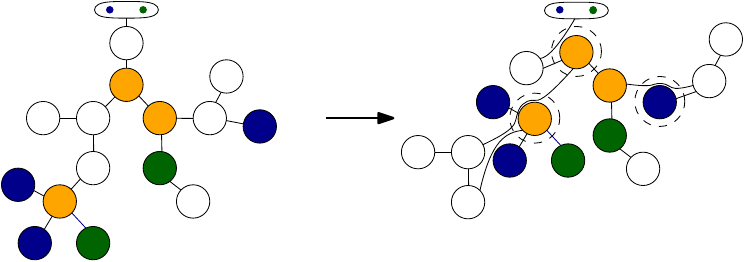}
\caption{An illustration of folding for the case of the fan. We consider a
fixed adhesion containing a blue vertex and a green vertex, and the subtree of
the tree decomposition below this adhesion. In blue (resp. green), the bags
introducing neighbours of the blue (resp. green) vertex. In orange, the branching
nodes of the set of nodes coloured in blue or green. Our construction contracts
subtrees of uncoloured nodes and pushes them away from the adhesion. Dashed
circles correspond to new bags which contain the adhesions that are pushed
down.}
\label{fig:folding}
\end{figure}

The previous structure theorems can be interpreted as follows. First, each bag
satisfies constraints that bound the size of a model of the forbidden structure
that would be centered on it. Second, we have sets of bags hitting all large
models of the forbidden structure, for which we either bound the ``topological
diameter'' for the fan or simply the size for the dipole. 

While the previous structure theorems could be abstracted in terms of arbitrary
sets of vertices, we will now rely on the fact that we consider neighbourhoods
and that the bags containing a vertex span its neighbourhood. In particular,
the number of neighbourhoods crossing an adhesion of $T$ is bounded by the
number of vertices of the adhesion.

Combining the properties raised in the above two paragraphs, we can bound the
complexity of the folding below a fixed adhesion. The folding pulls a set of
bags closer to an adhesion and is applied recursively to the subtrees of bags
that were not pulled. An important observation is that the size of adhesions
in recursive calls remains bounded. We now state the structure theorems obtained after the
folding.

\begin{theorem}\label{thm:folded-fan}
If there is no $F_k$ topological minor in $G$, then there exists a tree
decomposition $(T,\beta)$ of $G$ such that adhesion size is at most $a$, each vertex
$v$ has at most $b$ neighbours in the torso at each bag, and the bags containing $v$
induce a subtree of $T$ of diameter at most $c$.

Conversely, if such a tree decomposition exists, there is no $F_{k'}$ topological minor in
$G$ for some $k'=f(a,b,c)$.
\end{theorem}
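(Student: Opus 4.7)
The converse follows directly from the toolkit already assembled. The plan is to apply \cref{lem:bound-to-td} to $U = N[v]$ for each $v \in V(G)$ using the given decomposition. The bounded-adhesion and bounded torso-degree hypotheses translate immediately into the per-bag hypotheses of the lemma, since $N(v) \subseteq \beta(T_v)$ (where $T_v$ denotes the subtree of bags containing $v$) and therefore any neighbour of $v$ appearing in a bag that does not contain $v$ must lie in the unique adhesion separating that bag from $T_v$. For a path $P$ of $T$, the intersection $T_v \cap P$ is a subpath with at most $c+1$ bags by the diameter hypothesis, and every vertex of $N[v]$ appearing on $P$ or reachable through a subtree of $T - P$ is housed in, or incident to, this subpath together with a constant number of interface bags. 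This yields a candidate $\mathcal{B}$ of size $O(c)$ fitting the path hypothesis of \cref{lem:bound-to-td}. Plugging in the resulting bounds gives $\td(G, N[v])$ bounded by a function of $(a,b,c)$, and \cref{thm:neighbourhood-td} then produces the required $k'$ bounding the fan topological minor.

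For the forward direction, I would start from a well-formed $a'$-lean tree decomposition $(T_0,\beta_0)$ provided by \cref{thm:dec-fan}, with $a'$ large enough that both the torso-degree bound $b$ and the path-span bound $c$ hold for every vertex of $G$. Such a decomposition already satisfies the target property (1) and a weak analogue of (3), but the subtree $T_v$ of bags containing $v$ may still have arbitrary diameter because bags containing $v$ and no $G$-edge incident to $v$ can be stretched far apart in $T_0$. I would then apply the folding strategy sketched at the start of \cref{sec:folding}: process adhesions top-down and, below a fixed adhesion $\alpha$ of the current tree, identify the \emph{relevant} bags, namely those introducing a $G$-edge incident to some vertex of $\alpha$, together with their pairwise branching nodes. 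By the path-span property applied to each of the at most $a'$ vertices of $\alpha$, these relevant nodes form a subtree of $T_0$ of size bounded by a function of $(a',b,c)$. I would glue a folded copy of this subtree directly below $\alpha$ and push each maximal subtree of irrelevant nodes further down by inserting a fresh bag whose content is the old adhesion to that subtree together with $\alpha$, so that every vertex keeps a connected subtree and the recursion can proceed below each pushed subtree through a new adhesion of bounded size.

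Preserving the tree-decomposition axioms through this operation is the main technical step. Axiom \ref{enum:edge-treedec} is immediate since every edge of $G$ is introduced in some retained bag. For axiom \ref{enum:vertex-treedec}, a vertex of $\alpha$ appearing both in a relevant bag pulled up and in an irrelevant subtree pushed down is handled by design, because the fresh bag inserted above the irrelevant subtree contains $\alpha$ and therefore keeps the vertex's occurrences connected. The torso-degree bound transfers because each newly created bag is of the form (old adhesion below) $\cup$ (adhesion above), so its torso is the union of at most two cliques of bounded size; the kept relevant bags have the same torso as in $T_0$ up to a bounded number of extra adhesion edges. The diameter of $T_v$ after folding is controlled because every bag containing $v$ now lives inside the bounded-size folded piece placed just below the adhesion through which $v$ first enters the subtree.

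The main obstacle I anticipate is the bookkeeping through the recursion: I must show that the output parameters $(a,b,c)$ depend only on $k$ (via the initial bounds from \cref{thm:dec-fan}) and do not blow up with the depth of the folding. This is controlled by the observation highlighted at the start of \cref{sec:folding}, namely that the number of distinct neighbourhoods crossing any adhesion is bounded by the adhesion's size. Each recursive call is triggered below a fresh adhesion of size at most a function of $a'$, and the relevant subtree to pull up at that adhesion is constrained by the path-span bound applied to its vertices. Consequently, the size of the folded piece, the new adhesion size, and the new torso degree are all bounded independently of the recursion depth, yielding the uniform dependence of $(a,b,c)$ on $k$ claimed by the theorem.
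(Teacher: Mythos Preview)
Your overall plan matches the paper's: the converse via \cref{lem:bound-to-td} and \cref{thm:neighbourhood-td}, and the forward direction by folding a well-formed lean decomposition from \cref{thm:dec-fan}. The converse sketch is fine (the paper only adds the minor preprocessing of enforcing \ref{enum:unique-adh} so that adhesions at a bag can be counted via the torso-degree bound).

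There is, however, a genuine gap in the forward direction. You claim that after folding below an adhesion $\alpha$, ``every bag containing $v$ now lives inside the bounded-size folded piece placed just below the adhesion through which $v$ first enters the subtree.'' This is false. The folded piece $\widehat{T}$ consists only of bags \emph{introducing an edge} incident to some vertex of $\alpha$ (plus their branching nodes). A vertex $v\in\alpha$ may sit in many further bags of the pushed-down subtrees without any edge incident to $v$ being introduced there; \ref{enum:vertex-treedec} only guarantees that the bags containing $v$ form a connected subtree of $T_0$, with no a~priori diameter bound. If those bags are left untouched, $v$ re-enters the recursion through the new adhesion and its span grows with the recursion depth. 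The paper's construction hinges on the step you omit: once $\widehat{T}$ is built, one \emph{removes} the vertices of $A$ from every bag of the subtrees on which one recurses. This is legal precisely because all edges incident to $A$ already appear in $\widehat{T}$ (so \ref{enum:edge-treedec} survives), and it is what forces each vertex to live in at most two consecutive recursion levels, yielding the diameter bound $O(ac)$.

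Two smaller points. First, the relevant bags below $\alpha$ form a subtree of bounded \emph{height} (at most $|A|\cdot c$ root-to-leaf), not bounded size: the path-span condition of \cref{thm:dec-fan} controls root-to-leaf paths of $\widehat{T}$, but its branching degree is unconstrained. This does not hurt the diameter argument, but your phrasing relies on a bound that does not hold. Second, the irrelevant components of $T'\setminus\widehat{T}$ lying on a contracted path of $\widehat{T}$ are incident to \emph{two} bags of $\widehat{T}$, not one; this is why the paper carries the invariant ``subtree with up to two adhesions, total size $\le 2a$'' through the recursion, and why the inserted bag must absorb both adhesions.
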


\begin{proof}
First, we show that the properties on the tree decomposition $(T,\beta)$ in the statement
imply a bound on the fan number. Without loss of generality, we may assume $(T,\beta)$
satisfies \ref{enum:unique-adh}. Then $(T,\beta)$ satisfies the assumptions of
\cref{lem:bound-to-td} on $N[v]$ for each $v \in V(G)$. Indeed,
since adhesions incident to $t \in V(T)$ with $v \in \beta(t)$ leading to neighbours of
$v$ must contain $v$ (properties \ref{enum:vertex-treedec} and \ref{enum:edge-treedec}),
have size bounded by $a$, and are contained in $N_{G_t}[v]$, either $\beta(t)$ is of
bounded size, or the adhesions are distinct by \ref{enum:unique-adh} so we may bound their
number. We deduce a bound on the treedepth of $N[v]$ for each $v \in V(G)$, and apply
\cref{thm:neighbourhood-td} to conclude.

We now show how to construct a decomposition with the desired properties.
We start from a decomposition $T^0$ as given by \cref{thm:dec-fan}.
We initially pick any adhesion $A_0$ of $T^0$.
Then we recursively reorganise $T^0$ into a decomposition $T$ with the desired
property as follows:
We always consider a set $A$ of vertices and a subtree $T'$ of $T^0$ with up to $2$ adhesions separating
it from the rest of $T^0$ and such that no vertex has been placed in a
bag of $T$ yet except for the vertices in $A$, which have bounded
span in $T$ and such that there exists a bag containing the entire set $A$. Moreover, $A$ is a subset of these $2$ adhesions.  

The pair $(T',A)$ is processed as follows.
Let $\mathcal{B}$ be the set of bags obtained as follows. We fix an arbitrary root $r$ of
$T'$ and let $\mathcal{B}_1$ denote the set of bags containing the introduction of every
edge incident to $A$. Let $T_1$ denote the subtree of $T'$ obtained by taking for every
element of $\mathcal{B}_1$ the unique path to the root. Let $\mathcal{B}$ denote the union
of $\mathcal{B}_1$ and the set $\mathcal{B}_2$ of bags of degree greater than $2$ in
$T_1$. If we denote by $\widehat{T}$ the tree obtained from $T_1$ by contracting vertices
of degree $2$ (so keeping only $\mathcal{B}$), we can show that $\widehat{T}$ has height
$2ac$. Indeed, if we fix any root-to-leaf path of $T_2$, then any vertex on that path
corresponds either to the introduction of an edge incident to a vertex of $A$ or to a
branch leading to a bag containing such an edge. Since for every element of $A$ this
corresponds to at most $c$ bags by the second property of \cref{thm:dec-fan} and $A$ has
size at most $2a$, the total length of the path is bounded by $2ac$. For any adjacent pair
of bags $B_1, B_2$ of $\widehat{T}$, let $T_{B_1, B_2}$ denote the connected component of
$T' \setminus  \mathcal{B}$ whose neighbourhood in $T'$ is contained in $B_1 \cup B_2$.
Note that by definition of $\mathcal{B}$, this covers all the connected components of $T'
\setminus  \mathcal{B}$ whose neighbourhood is contained in more than one bag of
$\mathcal{B}$. We will now add $A$ to every bag of $\widehat{T}$ and add $\widehat{T}$
below the bag of the part already processed that contains $A$. Moreover, for every
adjacent pair of bags $B_1, B_2$ of $\widehat{T}$ such that $B_1$ is a child of $B_2$,
then we add to $B_1$ all the elements of the adhesions of $T_{B_1, B_2}$ with $B_1$ and
$B_2$. Then, for each $T_{B_1, B_2}$ we remove $A$ from its bags, and recurse on it with
the union of its adhesions to $B_1$ and $B_2$ (from which we remove $A$). For other
subtrees $T_B$ of $T' \setminus \mathcal{B}$, their adhesion to $\widehat{T}$ is contained
in a single bag $B$. We recurse on $T_B$ with its adhesion to $B$. We do not need to
remove $A$ from the bags in $T_B$ by \ref{enum:adh-neighbour} and the definition of
$\mathcal{B}$. Since we covered the cases of all the connected components of $T' \setminus
\mathcal{B},$ we will indeed process the entire subtree $T'$. 

The fact that we remove $A$ when recursing is crucial to the height of elements containing
a single vertex $x$ at the end. Indeed, the construction ensures that every vertex only appears at two levels of recursion, and since at each level it spans a subtree of diameter $2ca$, we have a total diameter of $6ca$. The reason why we can remove $A$ when recursing while keeping a proper tree decomposition comes from the fact that $\mathcal{B}$ contains a bag per edge incident to $A$ and thus already the part of the decomposition that corresponds to $\widehat{T}$ contains all these edges. 
\end{proof}

From the construction, we immediately deduce the following corollary.

\begin{observation}\label{obs:folded-bag}
A bag of $\widehat{T}$ (or $T$) consists of a subset of the union of a bag of $T^0$ and
at most three\footnote{We can slightly reduce the size of bags by adding only one of the adhesions
in $A$ to bags of $\widehat{T}$ and rooting $T'$ at its node incident to the other
adhesion. We then have bags of $\widehat{T}$ that are subsets of a bag and two adhesions
of $T^0$.} adhesions of $T^0$.
\end{observation}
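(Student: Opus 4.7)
The plan is to trace through the construction of $\widehat{T}$ in the proof of \cref{thm:folded-fan} and carefully account for the vertices landing in each bag. Fix a bag $B$ of $\widehat{T}$ produced while processing a pair $(T', A)$. By construction, $B$ corresponds to a node of $\mathcal{B} \subseteq V(T')$, so it starts out as $\beta_0(B)$, the bag of $T^0$ at that node.

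The construction then enlarges $B$ in two ways. First, $A$ is added to every bag of $\widehat{T}$; by the recursion invariant, $A$ is a subset of at most two adhesions of $T^0$ (the adhesions separating $T'$ from the rest of $T^0$). Second, if $B$ has a parent $B'$ in $\widehat{T}$, the construction adds to $B$ the $T^0$-adhesions of the contracted subtree $T_{B, B'}$ with $B$ and with $B'$.

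The key observation, and really the only subtlety, is that the adhesion of $T_{B, B'}$ with $B$ is by definition an adhesion of $T^0$ incident to the node $B$, so it is already a subset of $\beta_0(B)$ and contributes no new vertices. Only the adhesion of $T_{B, B'}$ with $B'$ is genuinely new, adding a single further adhesion of $T^0$. Altogether $B$ is contained in the union of $\beta_0(B)$, the at most two adhesions making up $A$, and one additional adhesion from the parent side, matching the bound of one bag plus three adhesions of $T^0$. The same accounting applies recursively to all bags of $T$, and the footnote's optimisation (keeping only one of the adhesions of $A$ in bags of $\widehat{T}$ by rooting $T'$ at the node incident to the other) improves the bound to two adhesions of $T^0$.
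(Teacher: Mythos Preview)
Your proposal is correct and is precisely the unpacking of what the paper leaves implicit: the paper gives no separate proof of this observation, stating only that it follows ``from the construction,'' and your bag-by-bag accounting (original bag of $T^0$, at most two adhesions forming $A$, plus the single adhesion toward the parent $B'$ in $\widehat{T}$, with the adhesion toward $B$ itself already contained in $\beta_0(B)$) is exactly the intended reading of that construction.
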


We can slightly modify our construction to obtain the following statement. We do not
discuss the running time required to implement this.
\begin{observation}
We can ensure that $T$ has diameter $O(f(k)\log(n))$ as well as the properties of
\cref{thm:folded-fan}. 
\end{observation}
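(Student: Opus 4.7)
The plan is to combine the folding construction from the proof of \cref{thm:folded-fan} with a centroid-based strategy so that the recursion has logarithmic depth. In the original proof, at each recursive step the pair $(T', A)$ is processed by choosing an \emph{arbitrary} root $r$ of $T'$, building the contracted tree $\widehat{T}$ on the bags of $\mathcal{B}$ (of height $O(ac)$), and then recursing on the subtrees $T_{B_1,B_2}$ and $T_B$ of $T' \setminus \mathcal{B}$. The key modification is to pick $r$ not arbitrarily, but as a centroid node of $T'$, i.e. a node whose removal leaves every subtree of $T'$ containing at most $|V(T')|/2$ bags. Such a node always exists in any tree.

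With this choice, each recursive subtree fed back into the construction is at most half the size of its parent, so the recursion has depth at most $\log_2 |V(T^0)| = O(\log n)$ (using property \ref{enum:irreducible-treedec} to bound $|V(T^0)|$ by $n$). Each recursion level contributes a gadget $\widehat{T}$ of height $O(ac)$ to the final tree $T$, so root-to-leaf paths of $T$ have length $O(ac \log n)$, yielding diameter $O(f(k) \log n)$.

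The main thing to check is that this balancing does not break the bounded-span property of \cref{thm:folded-fan}. Recall that the original argument showed each vertex $v$ of $G$ appears in at most two consecutive levels of recursion, since $v$ is removed from subtrees the moment it enters a processed adhesion $A$. The choice of root at each level does not affect this argument: the set $\mathcal{B}_1$ is still determined by edges incident to $A$, and the subtrees recursed upon still have $A$ removed from their bags. Hence $v$ still spans at most two recursion levels, each contributing $O(ac)$ to its span, so its span remains $O(f(k))$ as required, and bag torsos, adhesion sizes, and the local structure of $\widehat{T}$ are unaffected by where we pick its root.

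The only mild obstacle is a bookkeeping one: the centroid guarantee is stated in terms of number of bags, but the diameter and span bounds are stated in terms of the bag/adhesion geometry. This is handled by noting that the contraction from $T_1$ to $\widehat{T}$ in the original construction already depends only on $\mathcal{B}$, not on the concrete shape of $T'$; rerooting $T'$ to its centroid simply relabels which components fall under $T_{B_1,B_2}$ versus $T_B$, and all the quantitative bounds of \cref{thm:folded-fan} go through unchanged. Combining the depth bound $O(\log n)$ with the per-level height bound $O(f(k))$ then gives the claimed diameter $O(f(k) \log n)$.
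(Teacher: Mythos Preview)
Your idea is in the right spirit—force logarithmic recursion depth via a centroid—but there is a genuine gap in the argument. Merely choosing the root $r$ of $T'$ to be a centroid does not guarantee that the recursive subtrees are halved. The subtrees on which you recurse are the connected components of $T' \setminus \mathcal{B}$, and halving is only ensured if the centroid actually lies in $\mathcal{B}$. But $r$ is only guaranteed to lie in $T_1$; when $r \notin \mathcal{B}_1$ and $r$ has degree at most $2$ in $T_1$ (which happens whenever $\mathcal{B}_1$ is contained in at most two subtrees of $T'-r$), $r$ is contracted away and is absent from $\mathcal{B}$. Concretely, take $T'$ a long path whose only bags in $\mathcal{B}_1$ are its two endpoints, and root at the middle vertex $r$: then $T_1 = T'$, $r$ has degree $2$ and is contracted, $\mathcal{B}$ consists of just the two endpoints, and the unique component of $T' \setminus \mathcal{B}$ has size $|V(T')|-2$. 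No halving occurs and the recursion depth is not controlled.

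The paper's fix is to \emph{add} a balanced separator bag $B'$ of $T'$ to $\mathcal{B}_1$, rather than merely use it as the root. This forces the centroid into $\mathcal{B}$, so every component of $T' \setminus \mathcal{B}$ sits inside some subtree of $T'-B'$ and hence has at most half the nodes. The price is that $\widehat{T}$ acquires one extra leaf and at most one extra branching node, increasing its height by an additive constant, so the adhesion, torso-degree and span bounds from \cref{thm:folded-fan} are preserved. Your argument becomes correct with this one-line change.
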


\begin{proof}
We reuse the notations of the proof of \cref{thm:folded-fan}.
To ensure that $T$ will have diameter $O(f(k)\log(n))$, we can simply add to $\mathcal{B}_1$ a
balanced separator $B'$ of $T'$.
This implies each subtree on which we recurse has size reduced by a constant multiplicative
factor. This trick increases the diameter of $\widehat{T}$ by a constant additive factor
while ensuring that the recursion depth is $O(\log(n))$.
\end{proof}

We also apply the folding technique to graphs excluding a dipole as a topological
minor to obtain the following.

\begin{theorem}\label{thm:folded-dipole}
If there is no $D_k$ topological minor in $G$, then there exists a tree
decomposition $(T,\beta)$ of $G$ such that adhesion size is at most $a$, each bag
$\beta(t)$ has at most one vertex $v$ with more than $b$ neighbours in $\beta(t)$ or
more than $c$ adhesions containing $v$ incident to $t$, and for each pair of
vertices $u,v$ there are at most $d$ bags $B$ such that $\{u,v\}\subseteq B$.

Conversely, if such a tree decomposition exists, there is no $D_{k'}$ topological minor in
$G$ for some $k'=f(a,b,c,d)$.
\end{theorem}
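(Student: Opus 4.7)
For the converse direction, I would assume a decomposition $(T,\beta)$ with the stated properties and verify that it satisfies the hypotheses of \cref{thm:dec-dipole} with slightly weaker constants. The bound on vertices with many neighbours in a bag or many incident adhesions directly controls the weight $w(t,N(v))$ defined in \cref{sec:structure}, so at most one $v \in \beta(t)$ has $w(t,N(v)) > b'$ for some $b'$ depending on $a,b,c$. For each pair $u,v \in V(G)$, I construct the required hitting set $\mathcal{B}_{u,v}$ of \cref{thm:dec-dipole} as follows: by property \ref{enum:edge-treedec} applied to edges incident to $u$ and to $v$, the sets $N[u]$ and $N[v]$ lie respectively inside the subtrees of $T$ spanned by bags containing $u$ and by bags containing $v$. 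If these subtrees are disjoint, a single adhesion on the path between them already separates $N(u)$ from $N(v)$; otherwise their intersection is a subtree of at most $d$ bags, which together with its incident adhesions yields the required set of size $O(d)$. Then \cref{thm:dec-dipole} gives the desired bound on the dipole topological minor number.

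For the forward direction, I would imitate the folding construction of \cref{thm:folded-fan}, starting from a decomposition $T^0$ provided by \cref{thm:dec-dipole}. The recursion processes pairs $(T',A)$ where $T'$ is a subtree of the remaining part of $T^0$, separated from the rest by at most two adhesions whose union is $A$ (so $|A| \leq 2a$), and $A$ has already been placed in a bag of the partially built tree $T$. Within $T'$, I identify a set $\mathcal{B}$ of important bags: for every pair $\{u,v\} \subseteq A$ I add the at most $c$ bags of a hitting set $\mathcal{B}_{u,v}$ given by \cref{thm:dec-dipole}; I also include bags incident to adhesions of $T'$ to ensure every incoming edge incident to $A$ is captured; and I close $\mathcal{B}$ under branching nodes. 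Since $|A|$ and the number of pairs $\binom{|A|}{2}$ are bounded, $|\mathcal{B}|$ is bounded by some function of $k$. Contracting degree-two nodes between bags of $\mathcal{B}$ gives a tree $\widehat{T}$ of bounded size. I then add $A$ to every bag of $\widehat{T}$, glue $\widehat{T}$ below the bag of $T$ containing $A$, and recurse on each component of $T' \setminus \mathcal{B}$ with its at most two incident adhesions (with $A$ removed from these before recursing, using the analogue of \cref{obs:folded-bag} to control bag sizes).

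The main obstacle is to bound the pairwise co-occurrence $d$: each pair $\{x,y\}$ of vertices must appear together in only $O(1)$ bags of the final decomposition. A vertex $x$ appears in a new bag of $\widehat{T}$ only when $x \in A$ at some recursive call, and once the recursion enters a subtree of $T^0$ through an adhesion not containing $x$, the vertex $x$ leaves $A$ forever along that branch. Hence $x$ can belong to $A$ in at most a bounded number of recursive calls affecting any fixed region of $T^0$, contributing a bounded number of bags of $\widehat{T}$ at each level. For a pair $\{x,y\}$ to co-occur in a bag, either both come from the same $T^0$ bag (of which there are boundedly many by the bag-diameter property implicit in $T^0$), or both occur together inside an $A$-set, which happens a bounded number of times because only the hitting sets $\mathcal{B}_{u,v}$ involving pairs intersecting $\{x,y\}$ are relevant. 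The remaining bounds $a,b,c$ of the statement are inherited from \cref{thm:dec-dipole} and the observation, as in \cref{obs:folded-bag}, that every new bag is a subset of a $T^0$-bag augmented with at most three $T^0$-adhesions, so the at-most-one-heavy-vertex property is preserved up to constants depending only on $a$ (the adhesion size).
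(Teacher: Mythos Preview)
Your converse direction is essentially correct in content, but you should invoke \cref{lem:bound-to-sep} and \cref{obs:dipole-to-sep} directly rather than \cref{thm:dec-dipole}: the latter is a statement about well-formed $k$-lean decompositions, and the $(T,\beta)$ you are given need not be one. The paper's version is also simpler than your case split: it just observes that the set of bags containing both $u$ and $v$ already hits every $N(u)$--$N(v)$ path (your disjoint-subtrees case handles the situation where this set is empty, which the paper leaves implicit).

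In the forward direction your folding template matches the paper's, but there is a genuine gap in the recursion step. You propose to remove \emph{all} of $A$ before recursing on a component $T_i$. This is only legitimate if no vertex of $A$ has neighbours inside $\beta(T_i)$ that are not already introduced in $\widehat{T}$; trying to force this by ``capturing every incoming edge incident to $A$'' would add one bag per such edge to $\mathcal{B}$, and a vertex of $A$ may have unbounded degree. The key observation you are missing---and which drives the whole construction in the paper---is that because $\mathcal{B}_X$ contains, for every pair $u,v\in X$, a hitting set for $N(u)$--$N(v)$ paths, each component $T_i$ of $T'\setminus\mathcal{B}_X$ can contain neighbours of \emph{at most one} vertex of $X$. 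One therefore removes from the bags of $T_i$ all vertices of $X$ \emph{except that one}, and only this surviving vertex is carried into the recursion. This is exactly what makes the co-occurrence bound work: a pair $\{x,y\}$ can appear together in $X$ only at the first $\widehat{T}$ where both enter, and in a bounded number of immediately recursed subtrees; deeper down, at most one of them persists.

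Relatedly, your argument for bounding $d$ appeals to a ``bag-diameter property implicit in $T^0$''. There is no such property: \cref{thm:dec-dipole} bounds only the size of a hitting set $\mathcal{B}_{u,v}$, not the diameter of the subtree of bags containing any single vertex (that is the fan case). The bound on pairwise co-occurrence must come from the ``at most one survives'' mechanism above together with the bounded size of each $\widehat{T}$.
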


\begin{proof}
First, we show that the properties on the tree decomposition $(T,\beta)$ in the statement
imply a bound on the dipole number. Without loss of generality, we may assume $(T,\beta)$
satisfies \ref{enum:unique-adh} and \ref{enum:adh-neighbour}. Then $(T,\beta)$ satisfies
the properties of \cref{lem:bound-to-sep} on $N(u),N(v)$ for each pair $u,v \in V(G)$. Indeed,
for all $v \in \beta(t)$ except one, the adhesions incident to $t \in V(T)$ leading to
neighbours of $v$ must contain $v$ (properties \ref{enum:vertex-treedec} and
\ref{enum:edge-treedec}), but there are at most $c$ such adhesions.
Together with the bound on neighbours of $v$ in $\beta(t),$ this bounds $w(t,N(v))$.
Observe that the set of bags containing $\{u,v\}$ hits all $N(u)$-$N(v)$ paths and
contains at most $d$ bags. We deduce that for all pairs $u,v \in V(G),$ $\mu(N(u),N(v))$
is bounded by a function of $a,b,c,d$, and conclude with \cref{obs:dipole-to-sep}.

We now show how to construct a decomposition with the desired properties, starting from
$T^0$ a well-formed $a$-lean tree decomposition guaranteed by \cref{thm:dec-dipole}.
To obtain the final decomposition $T$, we fold $T^0$ by recursively separating
the neighbours of the adhesions incident to a subtree. We start by picking an
arbitrary adhesion $A$ of $T^0$, and start folding the subtrees $T_i$ of $T$
separated by $A$.

We now assume that we are given a subtree $T'$ of $T^0$ that is separated from
the rest of $T^0$ by at most $2$ adhesions, and that the only vertices in
$\beta(T')$ that are already placed in $T$ are the vertices in these adhesions.
These adhesions amount to a set of vertices $X$ of at most $2a$ vertices,
for each pair of vertices $u,v \in X$, we can obtain a set of $c$ bags
that separate $N(u)$ from $N(v)$ in $T'$ using \cref{lem:sep-to-bound}. 
Observe that the total number of bags is bounded by a function of $a$ and $c$. 
Let $\mathcal{B}_X$
denote this set of bags. We obtain a tree $\widehat{T}$ for $\mathcal{B}_X$ by
adding bags corresponding to branching nodes and connecting bags that are connected by
direct paths in $\mathcal{T^0}$. We root $\widehat{T}$ arbitrarily, then we add $X$ to the
bags of $\mathcal{B}_X$, and for each pair of bags adjacent in $\widehat{T}$ that
are not adjacent in $T^0$, we add the missing adhesion to $T_i$ in the deepest
node of $\widehat{T}$ adjacent to $T_i$ in $T^0$ (similarly to
\cref{thm:folded-fan}). Let $T_i$ be the subtrees of $T'$ separated by
$\mathcal{B}_X$, for each of them, we may remove all vertices of $X$ except
one. We then recurse on each $T_i$.

In $T$, the number of bags containing a pair of vertices is bounded by a function of $c,a$
and $k$. Indeed, we can bound the number of subtrees on which we recurse that may contain
a fixed pair. There can be $O(ca^2)$ subtrees of $T^0$ that are incident to two
adhesions of $T^0$, and there can be at most $k$ subtrees incident to a single adhesion
of $T^0$ by \ref{enum:conn-treedec}. Hence, a pair of vertices will always appear only in
a first $\widehat{T}$ and then in the subtrees $\widehat{T}$ of a bounded number of
recursions. We can conclude from the bound on the size of $\widehat{T}$. Other properties
of $T$ are inherited from $T^0$ and the fact that we only add a constant amount of
adhesions to each bag due to the folding.
\end{proof}

We make the observation that in the case of domino treewidth, it is possible to
have each vertex in at most two bags. This leads to the following question.
\begin{opquestion}
Can we always achieve $d \leq 2$ in the statement of \cref{thm:folded-dipole}? What if we also bound the treewidth of $G$?
\end{opquestion}

\section{Implementation}

We detail how to implement the constructions of \cref{thm:folded-fan} and
\cref{thm:folded-dipole} in this section.

\subsection{Excluded fan}\label{subsec:implem-fan}

In this subsection, we discuss how to efficiently check that the structure
theorem for a forbidden $k$-fan holds and then compute a folded tree
decomposition $\widetilde{T}$. It follows from known results \cite{Bollobas1998,Komlos1996}, that if $G$
excludes a $k$-fan, then $m = O(k^{2}\cdot n)$. We may safely conclude that
a graph $G$ not satisfying this bound does not exclude the $k$-fan as a
topological minor. We now assume that this bound holds.

The first ingredient is the efficient computation of a $k'$-lean tree
decomposition by Korhonen \cite{KorhonenkLean}. We then produce a well-formed
$k'$-lean tree decomposition with a standard linear time processing. We now
consider a fixed instance with a value $k$ and a graph $G$. We define $k'=
\binom{k}{2}+1$ following \cref{lem:fan-to-td}.

\begin{theorem}
There is an algorithm that computes a $k'$-lean tree decomposition of
$G$ in time $k^{O(k^4)}\cdot n$.
\end{theorem}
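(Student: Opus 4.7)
The plan is to invoke the recent algorithmic result of Korhonen~\cite{KorhonenkLean} essentially as a black box. Korhonen's theorem, given a graph $G$ and a parameter $p$, produces a $p$-lean tree decomposition of $G$ in FPT linear time $f(p) \cdot n$, where $f(p)$ is an explicit computable function arising from the recursive width-approximation machinery underlying his construction. Our proof is a direct application of this algorithm with the parameter $p := k' = \binom{k}{2} + 1$, the value that \cref{lem:fan-to-td} selected to control the treedepth of neighbourhoods in any graph excluding a $k$-fan.

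To obtain the claimed running time, I would simply track how Korhonen's $f$ scales when we substitute our value of $k'$. His analysis gives a bound of the form $f(p) = p^{O(p^{2})}$. Plugging in $k' = O(k^{2})$ yields
\[
f(k') \;=\; (k^{2})^{O((k^{2})^{2})} \;=\; k^{O(k^{4})},
\]
which matches the bound stated in the theorem. The dependence on $n$ remains linear since the outer loop of Korhonen's algorithm is linear in $n$ and we treat $k$ (hence $k'$) as a parameter.

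There is no real mathematical obstacle: the construction of $k$-lean tree decompositions has been isolated as a self-contained theorem, and the value $k'$ we need is bounded purely in terms of the parameter $k$. The only modest subtlety is to keep straight which version of leanness is being produced: Korhonen's guarantee gives properties \ref{enum:k-lean-adh} and \ref{enum:k-lean-menger}, but we will soon need a \emph{well-formed} $k'$-lean decomposition, satisfying the additional standard conditions \ref{enum:irreducible-treedec}--\ref{enum:unique-adh}. Those, however, are not part of the present statement; as noted in the preliminaries they can be enforced by a standard preprocessing in linear time (or polynomial in the bounded quantity $k'$ for \ref{enum:unique-adh}), which fits within the $k^{O(k^{4})} \cdot n$ budget and will be invoked in the subsequent step of the implementation.
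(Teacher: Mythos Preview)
Your proposal is correct and matches the paper's approach exactly: the theorem is simply a restatement of Korhonen's result \cite{KorhonenkLean} with the parameter $p=k'=\binom{k}{2}+1$ substituted in, and the paper gives no proof beyond that citation. Your running-time calculation $(k')^{O((k')^2)}=k^{O(k^4)}$ is the intended derivation, and your remark about postponing the well-formedness conditions to a subsequent preprocessing step is precisely how the paper handles it.
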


Let $T$ be a well-formed $k'$-lean tree decomposition.
We assume $T$ to be rooted, the root can be chosen arbitrarily. 

We will first describe how to compute an auxiliary tree $T^x$ for every vertex
$x\in V(G)$, such that nodes of $T^x$ are a subset of nodes of $T$ whose bags
cover $N[x]$ with the addition of their branching nodes in $T$. Then for each
adhesion of $T$, only the auxiliary trees $T^x$ for $x$ in the adhesion have
nodes on both sides of the adhesion. Once these trees are computed, it is
possible to efficiently compute the subtree $\widehat{T}$ from
\cref{thm:folded-fan}. Indeed, it can be computed by combining relevant parts
of subtrees $T^x$. This will require some subroutines to compute relevant nodes in $T$ in
constant time.

\begin{lemma}\label{lem:data-structures}
Given a rooted tree $T$, there is a data structure which can be computed in time
$O(|V(T)|)$, and can answer the following queries in time $O(1)$: 
\begin{itemize}
\item for $a \in V(T)$, give the depth of $a$ in $T$.
\item for $a,b \in V(T)$, check if $a$ and $b$ are in ancestor-descendant relationship. 
\item for $a,b \in V(T)$, give the lowest common ancestor of $a$ and $b$.
\item for $a,b,c \in V(T)$, give the branching node of $a,b,c$.
\end{itemize}
\end{lemma}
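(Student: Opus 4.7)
The plan is to combine three classical tree data structures into a single linear-time preprocessing phase. A single depth-first traversal of $T$ suffices to compute and store, for every node $a \in V(T)$, its depth $d(a)$ together with its DFS entry and exit times $\tau_{\mathrm{in}}(a), \tau_{\mathrm{out}}(a)$. This takes $O(|V(T)|)$ time and immediately answers depth queries by table lookup. Ancestor-descendant queries also reduce to table lookups via the standard equivalence: $a$ is an ancestor of $b$ if and only if $\tau_{\mathrm{in}}(a) \leq \tau_{\mathrm{in}}(b)$ and $\tau_{\mathrm{out}}(b) \leq \tau_{\mathrm{out}}(a)$.

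For lowest common ancestor queries, I would invoke the classical $\langle O(n), O(1)\rangle$ LCA data structure. Concretely, the Bender--Farach-Colton reduction expresses $\operatorname{lca}(a,b)$ as a range minimum query on the depth sequence recorded along the Euler tour of $T$. Because that sequence satisfies the $\pm 1$ property between consecutive entries, a block decomposition combined with a sparse table provides $O(|V(T)|)$ preprocessing and $O(1)$ query time, which I would simply cite.

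The branching node of three nodes $a,b,c$ is then obtained from three LCA queries followed by three depth comparisons. The key observation is that the branching node is the \emph{deepest} of $\operatorname{lca}(a,b)$, $\operatorname{lca}(b,c)$, $\operatorname{lca}(a,c)$. A short case analysis on where $a,b,c$ lie relative to the children of their common ancestor $r^* = \operatorname{lca}(a,b,c)$ shows that either the three pairs split among three distinct subtrees of $r^*$ (in which case all three LCAs coincide with $r^*$, which is the branching node), or two of them, say $a,b$, fall into a common subtree rooted at a child of $r^*$ (in which case $\operatorname{lca}(b,c) = \operatorname{lca}(a,c) = r^*$ while $\operatorname{lca}(a,b)$ is strictly deeper and coincides with the branching node). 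Returning the deepest of the three LCAs therefore yields the branching node in $O(1)$.

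The only nonroutine step is the $\langle O(n), O(1)\rangle$ LCA machinery, which I would dispatch by citing Bender and Farach-Colton; the remaining queries reduce straightforwardly to the DFS numbering or to a constant number of LCA and depth comparisons. I expect the exposition to focus essentially on stating the branching-node characterisation clearly, since that is the only part specific to this paper.
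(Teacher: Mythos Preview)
Your proposal is correct and follows essentially the same approach as the paper: DFS entry/exit times for depth and ancestor queries, a cited $\langle O(n),O(1)\rangle$ LCA structure (the paper cites Alstrup et~al.\ rather than Bender--Farach-Colton, but either works), and the branching node recovered as the deepest of the three pairwise LCAs. Your case analysis for the branching-node characterisation is slightly more detailed than the paper's one-line assertion, but the argument is the same.
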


\begin{proof}
By performing a DFS from the root of $T$, we can store three values per node
$u$ of $T$ corresponding to the depth $d[u]$, the time of discovery $\tb[u]$,
and the time of end of exploration $\te[u]$. Using these values, one can test
if two nodes are comparable (for the ancestor-descendant relation) in time
$O(1)$. Indeed, a node $u$ is an ancestor of a node $v$ if and only if the
interval $[\tb[u],\te[u]]$ contains the interval $[\tb[u],\te[u]]$. The DFS can
be performed in time $O(|V(T)|)$. The exploration order also provides a
canonical ordering on $V(T)$, and nodes can be compared with respect to this
ordering in time $O(1)$.

We also use a data structure to compute lowest common ancestors of pairs of
nodes in time $O(1)$. Such a data structure can be computed in time
$O(|V(T)|)$, see e.g. \cite{Alstrup2004}.

Now given a triple of nodes $a,b,c$, the branching node is always the furthest
from the root among lowest common ancestors of pairs of $a,b,c$.
\end{proof}

For each vertex $x \in V(G)$, we define $t(x)$ to be the node $t$ of $T$ such
that $x \in \beta(t)$ which is closest to the root.
For each edge $xy$ of $G$, we map it to a bag $t(xy)$ which is the furthest
from the root among $t(x)$ and $t(y)$. This is well defined since, by
\ref{enum:vertex-treedec} and \ref{enum:edge-treedec}, the subtrees induced by
bags containing $x$ and $y$ intersect. In particular, $t(x)$ and $t(y)$ are
comparable. We can compute $t(x)$ for every $x\in V(G)$, and $t(xy)$ for every
edge $xy \in E(G)$ in time $k^{O(1)}\cdot n$. We store the computed values in tables, and,
for each $t \in V(T)$, store a list $I[t]$ of the edges $xy$ introduced by $t$,
i.e. such that $t=t(xy)$.

We represent the trees $T^x$ with a pointer data structure where each node of
$T^x$ has a list of pointers to its incident edges and edges have pointers to
their incident vertices. Each adhesion of $T$ will store
a pointer to the edge of any $T^x$ that crosses it, and we also store in a global table
the pointers to the root of $T^x$ for each $x \in V(G)$. Note that adhesions $\alpha(st)$
have size at most $k'$, and that a tree $T^x$ will cross $uv$ if and only if $x \in
\alpha(st)$.

\begin{lemma}
In time $k^{O(1)}\cdot n$, we can compute the trees $T^x$ and check the
conditions of \cref{thm:dec-fan}. 
\end{lemma}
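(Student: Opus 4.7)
The plan is to build each $T^x$ as the Steiner subtree of $T$ spanned by the terminal bags $\{t(xy) : y \in N(x)\}$, using the precomputed tables $t(\cdot)$ and $I[\cdot]$ together with the constant-time depth, ancestor, LCA, and branching-node queries from \cref{lem:data-structures}, and then to verify conditions~(1) and~(2) of \cref{thm:dec-fan} by one local sweep of the bags and one linear-time traversal of each $T^x$. The terminal bags of every $T^x$ are collected in DFS order as a by-product of a single DFS traversal of $T$ that, at each bag $t$, processes the edges in $I[t]$---appending $t$ to the terminal list of both endpoints of each introduced edge---so that each terminal list arrives sorted by DFS discovery time $\tb[\cdot]$ without any per-vertex sort. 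For each $x$, I then apply the standard ``scan-and-insert-LCAs'' construction on this sorted list to produce the skeleton of $T^x$ (its terminals together with their branching nodes in $T$) in time $O(\deg_G(x))$, and, along the way, annotate every adhesion of $T$ with a pointer to the skeleton edge of each $T^x$ crossing it. Since $\sum_x \deg_G(x) = 2m = O(k^2 n)$ by the edge bound on $F_k$-topological-minor-free graphs, the total time to build all the $T^x$'s is $k^{O(1)}\cdot n$.

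Condition~(1) is checked by a purely local sweep. At every bag $\beta(t)$, of size at most $k' = \binom{k}{2} + 1$, and every $v \in \beta(t)$, \cref{obs:bound-to-torso} lets me read off the torso-neighbourhood of $v$ as the union of $N_G(v) \cap \beta(t)$ (obtained by scanning $I[t]$) and the adhesions $\alpha(tt')$ incident to $t$ containing $v$ (obtained by scanning the edges of $T$ at $t$ and looking up $v$ in each adhesion in time $O(k')$). For a well-formed $k'$-lean decomposition with $|V(T)| = O(n)$, this sums to $k^{O(1)}\cdot n$ overall, and the instance is rejected within budget as soon as any torso-neighbourhood exceeds the threshold $b = f_1(k)$ supplied by \cref{thm:dec-fan}.

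Condition~(2) is checked by observing that, for any path $P$ of $T$, the bags on $P$ flagged by (a) or (b) are exactly the projections onto $P$ of the terminals of $T^v$; consequently the maximum count over all $P$ can be read off $T^v$ in isolation---it equals the number of skeleton nodes on a longest path of $T^v$---and is computed by a standard two-pass linear-time diameter DP on $T^v$ in time $O(|V(T^v)|)$. Summing $|V(T^v)|$ over $v$ is again $O(k^{O(1)}\cdot n)$, so the instance is rejected in budget if this quantity exceeds the threshold $c = f_1(k)$ for any $v$.

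The main obstacle is avoiding any factor linear in $|V(T)|$ during the construction of the $T^x$'s: this is handled by collecting all terminal lists in one global DFS pass rather than per-vertex sorting, by answering every ancestor, LCA, and branching-node query in $O(1)$ via \cref{lem:data-structures}, and by maintaining the adhesion-to-edge pointers introduced above, which will also serve the folding step of \cref{thm:folded-fan} in the next subsection.
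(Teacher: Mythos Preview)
Your construction of the trees $T^x$ via a single DFS to collect terminals in order, followed by the standard ``scan-and-insert-LCAs'' virtual-tree construction, is a perfectly valid alternative to the paper's bottom-up sweep. The paper instead builds each $T^x$ incrementally: at a node $t$ it keeps, for each $x\in\beta(t)$, a pointer $D[x]$ to the current root of the partial $T^x$ below $t$, merging and creating a representative for $t$ whenever two child subtrees or an edge in $I[t]$ witness that $t$ is a terminal or branching node. Both approaches compute the same object in $O(m)=O(k^2 n)$ total work; yours is arguably cleaner conceptually, while the paper's has the advantage that the adhesion pointers you need later (``annotate every adhesion of $T$ with a pointer to the skeleton edge of each $T^x$ crossing it'') fall out for free as the values $D[x]$ for $x\in\alpha(tp)$. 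You assert this annotation happens ``along the way'' but do not say how; with the virtual-tree construction it is not immediate, and a naive walk over the $T$-path underlying each $T^x$-edge would cost $\sum_x(\text{number of bags containing }x)$, which is not obviously $k^{O(1)}n$.

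Your check of condition~(2) via the diameter of $T^v$ is correct and is exactly what the paper does.

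Your check of condition~(1), however, rests on a false premise: you write ``at every bag $\beta(t)$, of size at most $k'=\binom{k}{2}+1$''. In a $k'$-lean tree decomposition only the \emph{adhesions} are bounded by $k'$; bags can be arbitrarily large (think of a clique, or more relevantly a large grid, which excludes $F_4$ yet has unbounded treewidth). Consequently, iterating over every $v\in\beta(t)$ and, for each, scanning all adhesions incident to $t$ is not $k^{O(1)}n$ in general. The paper avoids this by never enumerating $\beta(t)$: it only touches the counter $C[x]$ for those $x$ that appear in a child adhesion or as an endpoint of an edge in $I[t]$ (any other $x\in\beta(t)$ has torso degree $0$), and the total number of such touches over all $t$ is $O(k'|E(T)|+m)=k^{O(1)}n$. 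Your argument can be repaired along exactly these lines, but as written the time bound for condition~(1) does not hold.
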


\begin{proof}
We proceed bottom-up on $T$. For each node $t$ of $T$, we will compute: the
subtrees $T_t^x$ of the $T^x$ induced by the nodes of the subtree of $T$ rooted at $t$;
and a list containing, for each vertex $x$ in the adhesion $\alpha(tp)$ between $t$ and
its parent $p$, a pointer to the root of $T_t^x$. 

Let $t$ be a node of $T$, first compute the subtrees of $T^x$ for each of its
children in $T$. Then, we have two dictionaries $D,C$ indexed by vertices of
$\beta(t)$, with all entries of $C$ initially storing value $0$, and entries of
$D$ which are initially empty. We first iterate on the lists of pointers given
by each adhesion between $t$ and its children $s_i$. For each pointer \texttt{p}
pointing to the root of subtree $T^x_{s_i}$ of $T^x$, we increment $C[x]$ and update the
entry $D[x]$ with the following cases:
\begin{itemize}
\item if $D[x]$ was empty, we store \texttt{p} in $D[x]$;
\item if $D[x]$ contains a pointer \texttt{p'} to a representative of node $s
\neq t$, create a representative of node $t$ for $T^x$, link it to nodes pointed
by \texttt{p} and \texttt{p'}, and store a pointer to the representative of $t$
in $D[x]$; and
\item if $D[x]$ contains a representative of node $t$, link it to the node
pointed by \texttt{p}.
\end{itemize}

We then iterate on the edges in $I[t]$, and, for each edge $xy$, we increment
$C[x]$ and $C[y]$, and if $D[x]$ or $D[y]$ does not already point to a
representative of $t$, we create such a representative, and link to the
previous representative that was pointed to (if there is one).

To check that condition 1 of \cref{thm:dec-fan} hold, it suffices
to check that either $\beta(t)$ has bounded size, or that $C[x]$ is bounded for every $x
\in \beta(t)$.

The list of pointers for the adhesion between $t$ and $p$ simply consists of
the values of $D[x]$ for each vertex $x \in \alpha(tp)$.

Once the procedure described above has been applied to every node $t$ in $T$,
the trees $T^x$ are computed. We compute their diameter to check the
condition 2 of \cref{thm:dec-fan}. We also update the pointers to the trees
$T^x$ stored in the adhesions of $T$ so that they point to edges of $T^x$.

All the above computation can be performed in time $k^{O(1)}\cdot n$. Indeed,
$\sum_{x \in V(G)} |V(T^x)| = O(n + m)$, the rest of the data structures are of
size proportional to bags and adhesions which have total size $k^{O(1)}\cdot
n$, and are accessed $O(n + m)$ times in constant time.
\end{proof}

We move to the implementation of the folding procedure. If the above procedure
found that the conditions of \cref{thm:dec-fan} are not satisfied, we can
reject. We now assume that they are satisfied.

\begin{lemma}
The construction of \cref{thm:folded-fan} can be implemented in time $k^{O(1)}\cdot n$.
\end{lemma}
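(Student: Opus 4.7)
The plan is to simulate the recursive folding construction of the proof of Theorem~\ref{thm:folded-fan} directly on the well-formed $k'$-lean decomposition $T$ (playing the role of $T^0$), using the auxiliary trees $T^x$ and the constant-time branching/LCA queries of Lemma~\ref{lem:data-structures}. Each recursive call processes a pair $(T', A)$, where $T'$ is a subtree of $T$ delimited by at most two adhesions and $A$ is a subset of their union; at the top level $T' = T$ and $A$ is any adhesion. The call constructs the set $\mathcal{B}$ and the skeleton $\widehat{T}$ from the proof of Theorem~\ref{thm:folded-fan}, attaches the resulting bags to the output decomposition, and recurses on the connected components of $T' \setminus \mathcal{B}$ with their boundary adhesions (minus $A$) as the new $A'$.

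The central step is to enumerate, at a given call, the set $\mathcal{B}_1 = \bigcup_{x \in A} V(T^x) \cap V(T')$ of bags introducing an edge incident to $A$. For each $x \in A$, we traverse $T^x$ starting from the $T^x$-pointer stored at the boundary adhesion of $T'$ that contains $x$, pruning whenever the visited node falls outside $V(T')$, which is tested in $O(1)$ using the DFS intervals of $T$. Having collected $\mathcal{B}_1$, we sort it by DFS order of $T$ and compute the branching nodes $\mathcal{B}_2$ in constant time per consecutive pair, so we obtain $\widehat{T}$ in time $O\!\left(\sum_{x \in A} |V(T^x) \cap V(T')|\right)$. The subtrees of $T' \setminus \mathcal{B}$ together with their boundary adhesions in $T$ are retrieved in constant time per adhesion from the parent/child pointers of $T$ at the bags of $\mathcal{B}$, and for each vertex $x$ of the new $A'$, the required pointer to the root of the relevant portion of $T^x$ inside the corresponding subtree is obtained in constant time from the lists of $T^x$-pointers attached to the boundary adhesions during the construction of the $T^x$.

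The main obstacle is the amortisation of the total work over recursive calls. The key fact is that, by construction, a vertex $x$ is removed from the boundary $A'$ of every descendant call once it has belonged to some $A$, while non-descendant calls operate on vertex-disjoint subtrees of $T$; consequently the portions of $T^x$ visited across distinct calls with $x \in A$ are pairwise disjoint subtrees of $T^x$. Therefore each node of $T^x$ is examined $O(1)$ times in total, so the enumeration work sums to $O\!\left(\sum_{x} |V(T^x)|\right) = O(n + m) = k^{O(1)} n$. The remaining per-call cost is linear in $|\mathcal{B}| + |A|$ and sums to $O(|V(T)|)$ since the sets $\mathcal{B}$ of different calls partition the nodes of $T$ that become bags of some $\widehat{T}$. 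Combining these bounds with the preprocessing cost of the preceding lemmata yields the announced $k^{O(1)} n$ running time.
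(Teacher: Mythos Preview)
Your proposal is correct and follows essentially the same approach as the paper: build each $\widehat{T}$ from the portions of the $T^x$ lying inside $T'$ via LCA/branching-node queries, and amortise the total work by the disjointness of those portions across calls (which the paper phrases as ``they correspond to disjoint subtrees of $T$''). The only cosmetic difference is that you assemble $\widehat{T}$ by the standard virtual-tree trick (DFS order, then LCAs of consecutive pairs) whereas the paper performs a top-down parallel sweep of the $T^x$; to stay within your stated $O\!\left(\sum_{x\in A}|V(T^x)\cap V(T')|\right)$ bound you should \emph{merge} the already DFS-ordered node lists of the $|A|=k^{O(1)}$ trees $T^x$ rather than sort from scratch.
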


\begin{proof}
We describe the construction of $\widetilde{T}$, the folded decomposition obtained from $T$.
The steps of the construction are recursively described by pointing to at most two adhesions $e,e'$ of $T$. 
If two adhesions are given, we must fold the subtree $T'$ of $T-\{e,e'\}$ that is incident to $e$ and $e'$.
Otherwise, if only the adhesion $e \in E(T)$ is given, we must fold the subtree $T'$
of $T-e$ that does not contain the root of $T$. We also maintain the set of
vertices $X$ that appear in the adhesions of $T$ but have been already placed
in $\widetilde{T}$ with all their neighbours. Following the notations of
\cref{thm:folded-fan}, let $A$ denote the vertices in the adhesions
that separate $T'$ from the rest of $T$. By \ref{enum:vertex-treedec}, $X$ is a
subset of $A$, and, by definition of $k'$, we have $|X| \leq |A| \leq k^{O(1)}$.

From $e,e'$ we get pointers to edges of $T^x$ that cross $e,e'$ for vertices $x
\in A \setminus X$. Let $\mathcal{T}$ denote this set of trees.
The tree $\widehat{T}$ from \cref{thm:folded-fan} can be constructed as the
tree on the union of nodes of $T$ that have representatives in the trees of
$\mathcal{T}$ and their branching nodes. The tree $\widehat{T}$ is computed
recursively in time $k^{O(1)}\left(\sum_{T^x \in \mathcal{T}} |T^x|\right)$ as
follows. 

\subparagraph{Computing $\widehat{T}$.} We explore in parallel the subtrees induced by $T'$ in $\mathcal{T}$,
i.e. we only explore the representatives of nodes $t$ of $T^x \in \mathcal{T}$
such that $t \in V(T')$. For each subtree in $\mathcal{T}$, we start the
exploration from their node of $T'$ closest to the root. These nodes are obtained either
via the pointers stored at the adhesions or from the table of pointers to the roots of
each $T^x$.
At each step, we compute the node $t$ of $T$ that is lowest common ancestor of the
currently considered nodes in $\mathcal{T}$. This node $t$ is
chosen as the root of the subtree of $\widehat{T}$. Then, we move the currently
considered node to its children in trees of $\mathcal{T}$ for which the chosen
root was the current node. This gives a set of nodes $W$ in $\mathcal{T}$ which we
partition based on the subtree of $T'$ rooted at $t$ in which they belong. We
can assume that nodes from each tree of $\mathcal{T}$ are already given following
the canonical order on $V(T)$.

To compute the partition, we first merge the ordered lists of nodes from each
tree of $\mathcal{T}$ into an ordered list of $S$ which follows the canonical
order on $V(T)$. This takes time $k^{O(1)}\cdot|S|$. From the ordered list of
$W$, the partition can be deduced from lowest common ancestor queries between
nodes that are consecutive in the ordered list of $S$. Indeed, since all nodes
of $S$ are descendants of $t$, two nodes of $S$ are in the same subtree if and
only if their lowest common ancestor is not $t$. Furthermore, due to their
ordering, elements in a part are already consecutive. We recursively compute a
subtree of $\widehat{T}$ for each part of the partition of $S$.

Once $\widehat{T}$ is computed, we add $A \setminus X$ and remove $X$ from its bags, and then recurse
on the subtrees $T_i$ of $T'$ that are between nodes of $\widehat{T}$. Let
$A_i$ be the set of vertices in adhesions incident to $T_i$. We set $X_i$ to be
$(X\cup A) \cap A_i,$ the vertices that have to be removed from bags of $T_i$.

The overall computation takes time $k^{O(1)}\cdot|V(T)| + k^{O(1)} \cdot n = k^{O(1)} \cdot
n$. Indeed, the computation of the trees $\widehat{T}$ takes only time
$k^{O(1)}\cdot |V(T)|$ because they correspond to disjoint subtrees of $T$.
Updating the vertices in bags takes time $k^{O(1)} \cdot n$.
\end{proof}

We conclude with the following theorem.

\begin{theorem}
There is an algorithm, running in time $k^{O(k^4)}\cdot n$, which either certifies that
$G$ contains a $k$-fan, or produces the tree decomposition from
\cref{thm:folded-fan}.
\end{theorem}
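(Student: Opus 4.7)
The plan is simply to chain together the linear-time ingredients developed earlier in this subsection, with Korhonen's $k'$-lean decomposition algorithm providing the dominant $k^{O(k^{4})}\cdot n$ factor. Setting $k'=\binom{k}{2}+1$ as dictated by \cref{lem:fan-to-td}, the first step is an edge-count screening: if $|E(G)|$ exceeds the Bollob\'as--Koml\'os bound $O(k^{2}\cdot n)$ cited at the start of this subsection, then $G$ already contains $F_{k}$ as a topological minor and we can immediately certify. Otherwise $m=O(k^{2}n)$, and every subsequent step stays near-linear in $n$.

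With the edge-count bound in hand, I would invoke Korhonen's algorithm to produce a $k'$-lean tree decomposition in time $k^{O(k^{4})}\cdot n$, then apply the standard linear preprocessing to obtain a well-formed $k'$-lean decomposition $T$ satisfying \ref{enum:irreducible-treedec}--\ref{enum:unique-adh}. Next, I build the data structures of \cref{lem:data-structures}, the introduction maps $t(x),t(xy)$, and the auxiliary trees $T^{x}$ in time $k^{O(1)}\cdot n$ using the bottom-up sweep already described. During this sweep I simultaneously verify the two conditions of \cref{thm:dec-fan} for the parameters $(a,b,c)=f_{1}(k)$: the bounded torso degree is read off from the counter $C[x]$ at each bag via \cref{obs:bound-to-torso}, and the bounded diameter of each $T^{x}$ is computed by a single pass per auxiliary tree.

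If any of these checks fails, the converse direction of \cref{thm:dec-fan} guarantees that $G$ contains $F_{k}$ as a topological minor, so I output that certification. If all checks pass, I run the folding procedure in time $k^{O(1)}\cdot n$, using the stored pointers between adhesions of $T$ and edges of the $T^{x}$ to extract the trees $\widehat{T}$ efficiently, and output the decomposition promised by \cref{thm:folded-fan}. Summing the contributions yields total running time $k^{O(k^{4})}\cdot n+k^{O(1)}\cdot n=k^{O(k^{4})}\cdot n$.

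The main obstacle I anticipate is the careful accounting in the middle step: one must argue that checking the two conditions of \cref{thm:dec-fan} truly fits into $k^{O(1)}\cdot n$ time. This relies on the observation that $\sum_{x\in V(G)}|V(T^{x})|=O(n+m)$ because each $T^{x}$ is built from bags covering $N[x]$ together with their branching nodes, and on the constant-time lowest-common-ancestor and branching-node queries supplied by \cref{lem:data-structures}. Once this bookkeeping is in place, the rest of the argument is just an assembly of already-proven subroutines.
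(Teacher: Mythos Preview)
Your proposal is correct and follows exactly the approach the paper builds up to: the theorem is stated as a conclusion after the subroutines (edge-count bound, Korhonen's $k'$-lean algorithm, construction and diameter-check of the auxiliary trees $T^x$, and the folding implementation) have each been established in $k^{O(1)}\cdot n$ time, so the assembly you describe is precisely what the paper intends. One minor terminological slip: when a check fails you are invoking the \emph{contrapositive} of the forward direction of \cref{thm:dec-fan} (no $F_k$ $\Rightarrow$ conditions hold), not its converse.
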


\subsection{Excluded dipole}\label{subsec:implem-dipole}

In this subsection, we discuss how to efficiently check that the structure
theorem for a forbidden $k$-dipole holds and then compute a folded tree
decomposition. Most of the techniques are introduced in \cref{subsec:implem-fan}, we
describe briefly modifications to cover the case of forbidden dipoles. We may
still apply the results of \cite{Bollobas1998,Komlos1996} to bound $m$ by
$O(k^{2}\cdot n)$.

In this case, we first obtain a well-formed $(k-1)$-lean tree decomposition $T$ using the
algorithm of Korhonen \cite{KorhonenkLean}. We then compute trees $T^{xy}$ for each pair
$\{x,y\}$ of vertices that share more than one bag. $T^{xy}$ must consist of bags of $T$ that
hit all $x$-$y$ paths of $G$. We build the trees $T^{xy}$ similarly to our construction of
the trees $T^{x}$ in the case of the construction for forbidden $k$-fans. However, in the
case of dipoles, we also store the following information for each subtree $T'$ of $T$.
Let $T''$ be the subtree of $T$ rooted at the root of the partial $T^{xy}$ induced by
$T'$, we store at the root of $T'$, whether $T'-T''$ has introduced a neighbour of $x$ or
$y$. We add the root of $T'$ as a bag of $T^{xy}$ if, either it is a branching node (i.e.
at least two subtrees have partial subtrees of $T^{xy}$), or both $x$ and $y$ have
introduced a neighbour in $T'-T''$.

To check the conditions of the structure theorem, we check in each bag that each vertex,
except at most one, has bounded degree in the bag and appears in a bounded number of
adhesions incident to the bag. Note that it suffices to consider the degree in terms of
edges introduced in the bag since, for each other edge $xy$, there is an adhesion, containing
both $x$ and $y$, incident to the bag. We also check that each computed tree $T^{xy}$ has
bounded size. We stress that for pairs $\{x,y\}$ with no computed tree $T^{xy}$, there is at
most one bag containing both vertices, so they also satisfy the conditions.

The implementation of the folding does not change much except that now $\widehat{T}$ is
computed by combining the trees $T^{xy}$ for each pair $x,y$ contained in the adhesions.
In particular, some trees $T^{xy}$ may not cross the adhesions. Nonetheless, we can check
that they lie in the considered subtree $T'$ in constant time by checking that any of their
nodes lies in $T'$. Furthermore, the vertex $x$ of the adhesions incident to $T'$ which has
neighbours in the subtree on which we recurse has been precomputed as part of the
computation of $T^{xy}$. There are at most $k^{O(1)}$ pairs of vertices contained in the two
adhesions incident to $T'$, so we have to combine at most $k^{O(1)}$ trees
(of size $k^{O(1)}$) to obtain $\widehat{T}$ which means the running time of the folding
will still be $k^{O(1)}\cdot n$.

\begin{theorem}
There is an algorithm, running in time $k^{O(k^2)}\cdot n$, which either certifies that
$G$ contains a $k$-dipole, or produces the tree decomposition from \cref{thm:folded-dipole}.
\end{theorem}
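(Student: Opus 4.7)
The plan is to follow the same template as the fan case (\cref{subsec:implem-fan}), making the adaptations already sketched in the paragraphs preceding the statement. I would first compute a well-formed $(k-1)$-lean tree decomposition $T$ of $G$ using Korhonen's algorithm, which here runs in time $k^{O(k^2)}\cdot n$ since the adhesion bound is linear in $k$. Before invoking this algorithm, I would test whether $m \leq O(k^2 \cdot n)$; by \cite{Bollobas1998,Komlos1996} a violation immediately certifies a $k$-dipole topological minor.

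Next I would compute the auxiliary trees $T^{xy}$ for every pair $\{x,y\}$ that shares more than one bag of $T$. The set of relevant pairs is enumerated by scanning each bag (whose size is at most $k$) and filtering through a hash table those pairs that occur in at least two distinct bags, yielding $O(k^2\cdot n)$ candidates total. The construction of $T^{xy}$ is done bottom-up on $T$, analogous to the construction of $T^x$ in \cref{subsec:implem-fan}, except that at each subtree $T'$ I additionally maintain, for each relevant pair, two boolean flags recording whether $x$ (respectively $y$) already has an introduced neighbour in the already-processed part of $T'$. A node is added to $T^{xy}$ either when two distinct children carry nonempty partial $T^{xy}$-subtrees, or when both flags become true. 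All required lowest common ancestor and branching queries are answered in $O(1)$ using \cref{lem:data-structures}, so the total time is $k^{O(1)}\cdot n$.

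Then I would check the conditions of \cref{thm:folded-dipole}: in each bag at most one vertex has more than $b$ neighbours (counting only edges introduced in the bag, which by \ref{enum:edge-treedec} and \ref{enum:adh-neighbour} is enough since every other edge between two vertices of the bag corresponds to an incident adhesion and is counted in the adhesion term) or belongs to more than $c$ adhesions incident to the bag, and each computed $T^{xy}$ has at most $d$ nodes. These checks take time $k^{O(1)}\cdot n$. If any of them fails, \cref{thm:dec-dipole} guarantees that $G$ contains a $k$-dipole topological minor and we reject. Otherwise, I run the folding procedure of \cref{thm:folded-dipole}: for a subtree $T'$ bounded by at most two adhesions, $\widehat{T}$ is assembled from the $k^{O(1)}$ trees $T^{xy}$ indexed by pairs from those adhesions, restricted to $T'$ (the restriction being detectable in $O(1)$ using ancestor queries on any representative node). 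The interior bags are then equipped with the adhesion vertices and we recurse on the remaining subtrees, after removing the vertices that have had all their incident edges already realised.

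The main obstacle, as in the fan case, is bounding the total running time of the recursive folding by $k^{O(1)}\cdot n$ rather than paying a factor proportional to the recursion depth. This relies on two amortisation arguments: first, the trees $\widehat{T}$ built across the recursion partition the relevant nodes of the (linearly many) precomputed $T^{xy}$ into disjoint pieces, so their combined construction costs $k^{O(1)}\cdot \sum_{\{x,y\}} |V(T^{xy})| = k^{O(1)}\cdot n$; second, updating the vertex content of bags along the recursion charges only $k^{O(1)}$ per vertex, because by construction each vertex appears in only a bounded number of recursion levels. Combining these bounds with the $k^{O(k^2)}\cdot n$ cost of producing the initial $(k-1)$-lean decomposition yields the claimed overall runtime of $k^{O(k^2)}\cdot n$.
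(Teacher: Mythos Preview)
Your proposal is correct and follows essentially the same approach as the paper: compute a well-formed $(k-1)$-lean decomposition via Korhonen's algorithm, build the auxiliary trees $T^{xy}$ bottom-up with the two boolean neighbour-introduction flags, check the per-bag and per-pair conditions from \cref{thm:dec-dipole}, and then fold by assembling each $\widehat{T}$ from the $k^{O(1)}$ relevant $T^{xy}$'s. The only cosmetic difference is your amortisation phrasing (partitioning nodes of the $T^{xy}$ across recursion levels) versus the paper's simpler observation that each $\widehat{T}$ has size $k^{O(1)}$ outright because it combines $k^{O(1)}$ trees each of size $k^{O(1)}$; both arrive at the same $k^{O(1)}\cdot n$ bound for the folding phase.
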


\section{Corollaries}\label{sec:coro}

\begin{theorem}\label{thm:tbw-algo}
	There is an algorithm that given a graph $G$ and an integer $k$, in time
	$k^{O(k^2)}\cdot n$ either computes a tree-layout of bandwidth at most $g(k)$, or
	determines that $\tbw(G)>k$.
\end{theorem}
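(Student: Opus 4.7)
The plan chains three ingredients already in place: the structural consequence that bounded treebandwidth forbids large fan topological minors, the algorithmic version of the fan structure theorem from \cref{subsec:implem-fan}, and a linearisation of the resulting tree decomposition into a tree-layout. First note that any tree-layout is a tree decomposition of equal width, so $\tbw(G)\leq k$ forces $\tw(G)\leq k$; combined with the subdivision lemma of the introduction and the fact that $F_\ell$ has treebandwidth $\Omega(\ell)$, $\tbw(G)\leq k$ also forbids $F_{h(k)}$ as a topological minor for some explicit $h(k)=O(k)$.

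I would then feed $G$ and $h(k)$ into the algorithm of \cref{subsec:implem-fan}: within the claimed running time it either certifies an $F_{h(k)}$ topological minor (in which case we output ``$\tbw(G)>k$''), or returns the folded tree decomposition $(T,\beta)$ of \cref{thm:folded-fan}, with adhesion size at most $a(k)$, at most $b(k)$ torso-neighbours per vertex per bag, and vertex-spans of diameter at most $c(k)$ in $T$. By \cref{obs:folded-bag}, each bag of $T$ is contained in the union of a bag and a bounded number of adhesions of the underlying well-formed lean tree decomposition $T^0$, so if any bag of $T$ exceeds size $k+1+O(a(k))$ we deduce $\tw(T^0)>k$ and hence $\tbw(G)>k$ and reject. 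Otherwise, all bags of $T$ have size at most $B(k)$ for some explicit $B$.

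Next, I would convert $(T,\beta)$ into a tree-layout $L$. After rooting $T$ arbitrarily, process nodes top-down: at each node $t$ with parent $p$, linearise the newly introduced vertices $\beta(t)\setminus\beta(p)$ in an arbitrary order and hang this chain below the deepest vertex of the adhesion $\alpha(pt)$ already placed in $L$. An easy induction shows that every $\beta(t)$ stays contained in a single root-to-leaf path of $L$, so every edge $xy$ of $G$ ends up in ancestor-descendant position and $L$ is a valid tree-layout. For the bandwidth, take any edge $xy$ with introducing bags $t_x,t_y$: these are comparable in $T$, say $t_x$ is an ancestor of $t_y$, and the $T$-path between them lies in the subtree of bags containing $x$, of diameter at most $c(k)$. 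Each bag on this path contributes at most $B(k)$ new vertices to $L$, yielding $d_L(x,y)\leq c(k)\cdot B(k)=:g(k)$.

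The main obstacle is guaranteeing bounded bag size in the folded decomposition, since the structure theorem only bounds adhesion size, torso-degree and span-diameter. This is handled through \cref{obs:folded-bag} together with the $\tw\leq\tbw$ monotonicity, which allows rejecting whenever the underlying lean decomposition exhibits a wide bag. The overall running time is dominated by the call to the algorithm of \cref{subsec:implem-fan}, matching the claimed bound.
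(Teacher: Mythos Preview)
Your high-level plan matches the paper's: couple the algorithmic fan structure theorem with a treewidth check, fold, and linearise. Two points, however, break the argument as written.

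First, the claim that $F_\ell$ has treebandwidth $\Omega(\ell)$ is false. Placing the universal vertex at the root and laying out $P_\ell$ below it via its optimal elimination tree (centroid decomposition) yields a tree-layout of $F_\ell$ of bandwidth $O(\log\ell)$; equivalently, $\tbw(F_\ell)\leq\td(F_\ell)=O(\log\ell)$. Hence the smallest fan one can forbid under the hypothesis $\tbw(G)\leq k$ has size $h(k)=2^{\Theta(k)}$, not $O(k)$. The paper accordingly works with $F_{2^k}$.

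Second, and this is what actually controls the stated running time, a black-box call to the algorithm of \cref{subsec:implem-fan} does not give $k^{O(k^2)}\cdot n$. That algorithm, applied with fan parameter $q$, first computes a $\binom{q}{2}$-lean decomposition via Korhonen, costing $q^{O(q^4)}\cdot n$; even with your (incorrect) $q=O(k)$ this is $k^{O(k^4)}\cdot n$, and with the corrected $q=2^k$ it becomes doubly exponential. The paper sidesteps this by computing only a \emph{$k$-lean} decomposition (time $k^{O(k^2)}\cdot n$), rejecting immediately if its width exceeds $k$, and then observing that a $k$-lean decomposition of width at most $k$ is in fact lean, hence $k'$-lean for the $k'=2^{2k}$ required by \cref{lem:td-to-bound}. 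The subsequent checking and folding cost only $(k')^{O(1)}\cdot n$. Your plan thus yields a correct approximation algorithm, but not one meeting the running-time bound in the theorem; the missing idea is precisely this ``width $\leq k$ upgrades $k$-lean to lean'' shortcut.
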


\begin{proof}
We will conclude that $\tbw(G)>k$ if we reach the conclusion that $G$ contains
$F_{2^k}$, or if $\tw(G)>k$.

If $G$ excludes $F_{2^k}$ as a topological minor, then $\td(G,N[v])\leq 2^{2k}$
by \cref{lem:fan-to-td}. We may compute a $k$-lean tree decomposition $T$ of
$G$ in time $k^{O(k^2)}\cdot n$ using Korhonen's algorithm
\cite{KorhonenkLean}. If $G$ does not have width $k$, we may reject since
$\tw(G)>k$. Otherwise, $T$ is lean and in particular it is $k'$-lean for any $k'>k$.

Without loss of generality, we may assume that $T$ is well-formed. We then
proceed with the implementation of \cref{thm:folded-fan} using $T$ as our
decomposition since it is a well-formed $k'$-lean tree decomposition, with
$k'=2^{2k}$. More precisely, we check that the conclusions of
\cref{lem:td-to-bound} hold and, if they do, compute a folded decomposition
$T'$ from $T$. Otherwise, we get a contradiction with our assumption that $G$
excludes $F_{2^k}$ as a topological minor.

We conclude by constructing a tree-layout $T''$ from $T'$ in the usual way:
root it arbitrarily, replace each bag by any linear order, and keep only the
occurence of a vertex that is closest to the root. Bags in $T'$ have size $O(k)$, and the
subtree of $T'$ containing each vertex has diameter at most bounded by a function of $k$.
We deduce that $T''$ has bandwidth at most $g(k)$ for some function $g$ \footnote{The
function $g$ is doubly exponential with the current proof.}.
\end{proof}

We did not attempt to optimise the quality of the approximation, and leave the following question open.
\begin{opquestion}
Is there an FPT algorithm which, given a graph $G$ and an integer $k$, either
determines that $\tbw(G) > k$, or computes a tree-layout of bandwidth
$k^{O(1)}$ in time $f(k)\cdot n^{O(1)}$.
\end{opquestion}

A positive answer to this questions seems reasonable but would require to fold
fans using their logarithmic depth tree-layout or detect an obstruction to such
a tree-layout. In this direction, one could try to adapt arguments from
\cite{CzerwinskiNP21} to the rooted minor setting. Moreover, it should be
possible to combine paths model instead of looking for a complete obstruction
at each inductive step in the proof of \cref{lem:td-to-bound}.

\begin{definition}[overlap treewidth]
Given a tree decomposition $(T,\beta)$ of a graph $G$, we define its \emph{overlap number}
to be the maximum over pairs $u,v \in V(G)$ of the number of bags $\beta(t)$ such that
$u,v \in \beta(t)$. The \emph{overlap treewidth} of $(T,\beta)$ is the maximum
of its width and overlap number.
We define the \emph{overlap treewidth} of a graph $G$, $\otw(G)$, to be the minimum overlap treewidth
over tree decompositions of $G$. 
\end{definition}

The following observation follows from \cref{thm:folded-dipole}.

\begin{observation}
Dipoles and walls are the only obstructions to large overlap treewidth.
\end{observation}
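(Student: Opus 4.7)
My plan is to prove both directions of the characterization. The forward direction—$\otw(G)$ bounded implies no large walls or dipoles—will be shown directly from the definition of overlap treewidth. The backward direction will apply the folding construction of \cref{thm:folded-dipole} to a lean tree decomposition of bounded width, which exists because excluding walls bounds treewidth.

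For the forward direction, note first that $\otw(G) \geq \tw(G)$, so bounded overlap treewidth excludes large walls as topological minors. By \cref{obs:dipole-to-sep} it then suffices to bound $\mu(N(u),N(v))$ for every pair $u,v \in V(G)$. Fix a tree decomposition $(T,\beta)$ with $\otw \leq w$ and a pair $u,v$; let $T_u$ and $T_v$ denote the subtrees of bags containing $u$ and $v$. If $T_u \cap T_v = \varnothing$, any adhesion $\alpha(e)$ on the $T_u$-$T_v$ path in $T$ is an $N(u)$-$N(v)$ separator of size at most $w+1$, since the trace of any $N(u)$-$N(v)$ path in $G$ is a subtree of $T$ that must cross $e$ and hence meet $\alpha(e)$. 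Otherwise $X := T_u \cap T_v$ is a non-empty subtree of $T$ with $|X| \leq w$ by the overlap bound on the pair $(u,v)$; my candidate separator is $\beta(X)$, of size at most $w(w+1)$.

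The main step to verify, and the subtle point of the argument, is that $\beta(X)$ separates $N(u)$ from $N(v)$ in this second case. The key observation is that every component $C$ of $T \setminus X$ is attached to $X$ via a single edge $e$, and $\alpha(e)$ cannot contain both $u$ and $v$: otherwise the bag on $C$'s side of $e$ would contain $\{u,v\}$, extending $X$ into $C$ and contradicting the choice of $X$. Hence $C$ intersects at most one of $T_u, T_v$. For any $N(u)$-$N(v)$ path $Q$ in $G - \beta(X)$, the trace $T_Q$ would be a subtree of $T$ disjoint from $X$, yet would simultaneously intersect $T_u$ and $T_v$ (via the endpoints of $Q$ lying in $N(u)$ and $N(v)$, whose subtrees meet $T_u$ and $T_v$ respectively), contradicting the previous sentence.

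For the backward direction, suppose $G$ excludes $W_k$ and $D_k$ as topological minors. Excluding $W_k$ bounds $\tw(G)$ by a function of $k$, so using Korhonen's algorithm there exists a well-formed $k'$-lean tree decomposition $T^0$ of $G$ of bounded width, with $k'$ chosen large enough for \cref{thm:folded-dipole}. Applying the folding construction underlying that theorem to $T^0$ produces a decomposition $T$ with the guaranteed properties: in particular, each pair of vertices appears in at most $d = O_k(1)$ bags. Moreover, by direct inspection of the folding (an analogue of \cref{obs:folded-bag} for the dipole case), each bag of $T$ is contained in the union of a bag and a bounded number of adhesions of $T^0$, and therefore has size $O_k(1)$ since $T^0$ has bounded width. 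Combining, $\otw(G) \leq \max(\text{width of }T, d) = O_k(1)$, completing the characterization.
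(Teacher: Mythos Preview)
Your proof is correct and aligns with the paper's approach, which simply asserts that the observation ``follows from \cref{thm:folded-dipole}.'' Your backward direction is exactly the intended argument: bound treewidth via the excluded wall, start from a lean decomposition of bounded width, apply the folding of \cref{thm:folded-dipole}, and use the analogue of \cref{obs:folded-bag} to keep bag sizes bounded.

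For the forward direction you give a self-contained argument rather than invoking the converse of \cref{thm:folded-dipole}. This is a sensible choice: a decomposition witnessing $\otw(G)\le w$ need not directly satisfy the adhesion-count hypothesis of that converse, whereas your argument only needs bounded width and bounded overlap. Your treatment also explicitly handles the case $T_u\cap T_v=\varnothing$ via an adhesion on the $T_u$--$T_v$ path; the key fact you rely on in the main case, that the bags containing $\{u,v\}$ hit all $N(u)$--$N(v)$ paths, is the same observation the paper uses inside the proof of \cref{thm:folded-dipole}. So the two routes are essentially the same, with yours being slightly more explicit.
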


\begin{theorem}
	There is an algorithm that given a graph $G$ and an integer $k$, in time
	$f(k)\cdot n$ either computes a tree decomposition of overlap treewidth at most
	$k^{O(1)}$, or determines that $\otw(G)>k$.
\end{theorem}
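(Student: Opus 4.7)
The plan is to combine the linear-time implementation of \cref{thm:folded-dipole} given in \cref{subsec:implem-dipole} with Korhonen's algorithm \cite{KorhonenkLean} for computing $k$-lean tree decompositions, following the same template as \cref{thm:tbw-algo}. First, I would compute a well-formed $k$-lean tree decomposition $T^0$ of $G$ in time $k^{O(k^2)}\cdot n$. If the width of $T^0$ exceeds $k$, we immediately reject: since $\otw(G) \geq \tw(G)$, we then have $\otw(G) > k$. Otherwise, $T^0$ has width at most $k$ and is $k'$-lean for every $k' \geq k$, so we can use it as the starting decomposition for the folding procedure.

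Next, I would invoke the algorithm of \cref{subsec:implem-dipole} on $T^0$ with threshold $k'$ chosen large enough that the conclusions of \cref{thm:dec-dipole} force an obstruction when they fail. That algorithm either certifies the existence of a $k'$-dipole topological minor or outputs the folded decomposition $\widetilde{T}$ promised by \cref{thm:folded-dipole}. In the first case we reject: by the observation preceding the theorem, a large dipole topological minor is an obstruction to bounded overlap treewidth, so $\otw(G) > k$. In the second case, the pair-overlap parameter $d$ of \cref{thm:folded-dipole} is precisely the overlap number of $\widetilde{T}$, so it is bounded by a function of $k$ directly from the statement.

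It then remains to bound the width of $\widetilde{T}$. The folding construction in the proof of \cref{thm:folded-dipole} builds each new bag as a subset of the union of an original bag of $T^0$ together with $O(1)$ adhesions of $T^0$ (analogous to \cref{obs:folded-bag} in the fan case). Since $T^0$ has width at most $k$ and adhesions of size at most $k$, each bag of $\widetilde{T}$ has size $O(k)$. Combined with the overlap-number bound from the previous paragraph, this gives overlap treewidth at most some function of $k$, which a more careful accounting should push down to $k^{O(1)}$.

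The main obstacle I anticipate is that \cref{thm:folded-dipole} does \emph{not}, as stated, guarantee bounded bag size: it only controls adhesion size, the number of ``heavy'' vertices per bag, and pair overlap, and in principle a single bag could be wide due to many ``light'' vertices attached to a distinguished heavy one. This is why it is crucial to start with a $k$-lean tree decomposition of width at most $k$, so that the folding inherits the width bound. A secondary point is ensuring that the bound on $d$ coming out of \cref{thm:folded-dipole} is polynomial in $k$ rather than merely a function of $k$; this should follow from a careful inspection of the constants in \cref{lem:sep-to-bound} and of the recursion in the folding construction, since each recursive step contributes only $O(a^2 c)$ to the overlap number.
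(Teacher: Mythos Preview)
Your proposal is correct and follows essentially the same approach as the paper: compute a $k$-lean decomposition via Korhonen's algorithm, reject if the width exceeds $k$ (so the decomposition is in fact lean), then apply the folding implementation for excluded dipoles (the paper fixes the threshold at $D_{k^2}$) to either detect a dipole obstruction or produce a folded decomposition whose bags are of size $O(k)$ and whose overlap number is polynomial in $k$. You have also correctly identified the two nontrivial points---that the width bound must be inherited from the starting decomposition rather than from \cref{thm:folded-dipole} itself, and that the polynomial overlap bound requires tracking the constants through \cref{lem:sep-to-bound} and the folding recursion.
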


\begin{proof}
We will reach the conclusion that $\otw(G)>k$ if $\tw(G)>k$, or $G$ contains $D_{k^2}$ as a topological minor.

We compute a $k$-lean decomposition $T$ using Korhonen's algorithm
\cite{KorhonenkLean}, then check if its width is at most $k$. Either we conclude that
$\otw(G)>k$, or we conclude that $T$ is lean. We then make it well-formed
and apply the folding implementation on $T$ corresponding to the structure theorem for graphs excluding $D_{k^2}$. 

We either conclude that $D_{k^2}$ is a topological minor, or obtain the decomposition from
\cref{thm:folded-dipole} with the additional property that bags have size $O(k)$. In
particular, this decomposition has overlap number bounded by a polynomial function of $k$.
\end{proof}

\section{Centered colourings}\label{sec:colouring}

The link between treebandwidth and treedepth can be used to easily derive efficient
algorithms based on treedepth. We first give the definition of a $p$-centered colouring
which is a central notion in the theory of sparsity and its algorithmic applications.

\begin{definition}
A \emph{$p$-centered colouring} of $G$ is a colouring $\lambda:V(G) \to \mathbb{N}$ such
that for every connected subgraph $H$ of $G$, either $f$ uses more than $p$ colours on $H$
(i.e. $|\lambda(V(H))|>p$), or there is a colour assigned to a unique vertex in $H$.
\end{definition}

It is already known that graphs of treewidth $k$ have $p$-centered colourings with
$O(p^k)$ colours \cite{PilipczukSiebertz}, and that graphs avoiding a fixed topological
minor $H$ have $p$-centered colourings with $O(p^{f(H)})$ colours\cite{pCenteredTopo}.
We can significantly improve over these bounds when restricting ourselves to the case of
graphs of bounded treebandwidth.

\begin{theorem}
If a graph $G$ has treebandwidth at most $k$, it has $p$-centered colourings with $pk+1$
colours.
\end{theorem}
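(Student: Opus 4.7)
The plan is to exhibit an explicit $p$-centered colouring on $pk+1$ colours built from a tree-layout $T$ of $G$ of bandwidth at most $k$. Root $T$ arbitrarily, let $d(v)$ denote the depth of $v$ in $T$, and define $\lambda(v) := d(v) \bmod (pk+1)$. Since the range is $\{0, 1, \ldots, pk\}$, this uses exactly $pk+1$ colours.

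To verify that $\lambda$ is $p$-centered, I would fix a connected subgraph $H$ of $G$ and let $v$ be a vertex of $H$ of minimum depth. My first step is a structural lemma: $v$ is unique at that depth and every other vertex of $H$ is a descendant of $v$ in $T$. This holds because every edge of $G$ connects an ancestor to a descendant in $T$, so tracing a path in $H$ starting at $v$ cannot leave the subtree $T_v$ without first passing through a strict ancestor of $v$, contradicting the minimality of $d(v)$. If $\lambda(v)$ appears only at $v$ in $H$ we are done. Otherwise, any vertex $u \in H$ sharing $v$'s colour satisfies $d(u) \equiv d(v) \pmod{pk+1}$ together with $d(u) > d(v)$, and hence $d(u) \geq d(v) + pk + 1$.

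I then fix any $v$-$u$ path $\pi$ in $H$, producing a depth sequence $d_0 = d(v), d_1, \ldots, d_L = d(u)$ satisfying $d_i \geq d_0$ and $|d_{i+1} - d_i| \in [1, k]$ (the lower bound from ancestor-descendant edges, the upper bound from the bandwidth assumption). Let $a_1 < a_2 < \cdots < a_s$ be the distinct depths visited by $\pi$, sorted. The combinatorial core of the argument is the claim that $a_{i+1} - a_i \leq k$ for every $i$: if the gap exceeded $k$, the walk would at some step have to jump directly from a value in $(-\infty, a_i]$ to one in $[a_{i+1}, +\infty)$, an impossible step of size greater than $k$. Combined with $a_s - a_1 \geq pk + 1$, this yields $s \geq p+2$ and $a_{p+1} \leq a_1 + pk$. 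Hence $a_1, \ldots, a_{p+1}$ are $p+1$ distinct integers contained in an interval of length $pk+1$, so their residues modulo $pk+1$ are pairwise distinct, giving $p+1$ distinct colours used by $\pi$, and therefore by $H$.

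The only delicate point in this plan is the sorted-differences claim, which is a general property of $k$-Lipschitz walks on $\mathbb{Z}$: one must verify that the walk cannot avoid the entire integer interval between two consecutive visited values when those values are more than $k$ apart. Everything else is a straightforward tree-layout argument, and once the claim is in place, a modular pigeonhole on $p+1$ integers packed into an interval of $pk+1$ consecutive integers yields the desired number of colours, matching the $pk+1$ used.
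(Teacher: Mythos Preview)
Your proof is correct and takes essentially the same approach as the paper: both colour by depth modulo $pk+1$ in a bandwidth-$k$ tree-layout, pick the minimum-depth vertex of $H$, and show that any $H$-path to a vertex repeating its colour must hit $p+1$ distinct depths inside a window of $pk+1$ consecutive integers. Your $k$-Lipschitz argument on the depth sequence is simply a more explicit packaging of the step the paper states tersely as ``the shortest path covers at least $p$ vertices strictly between $x$ and $y$ in $T$.''
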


\begin{proof}
Let $T$ be a tree-layout of $G$ of bandwidth $k$. Consider the colouring $\lambda$ defined
by assigning the depth of $x$ in $T$ modulo $(pk+1)$ to $\lambda(x)$ for every $x \in G$.

Let $H$ be a connected subgraph of $G$, and $x$ be the vertex of $H$ that is closest to
the root in $T$. If some other vertex $y$ of $H$ is assigned the same color as $x$, then,
by definition of $\lambda$, they must be at distance at least $pk+1$ in $T$. Using the
bandwidth bound, this implies that the shortest path between $x$ and $y$ in $H$ covers at
least $p$ vertices that are strictly between $x$ and $y$ in $T$ and that are at distance
at most $pk$ from $x$ in $T$. By construction of $\lambda$, we conclude that these
vertices have distinct colours, implying that $H$ is coloured with more than $p$ colours.
\end{proof}

We give some explicit algorithmic applications of $p$-centered colourings to
motivate our bound. We give the example of the algorithm of Pilipczuk and
Siebertz~\cite{PilipczukSiebertz} for \textsc{Subgraph Isomorphism} (finding if $H$ is a
subgraph of $G$) that runs in time $2^{O(p \log d)}\cdot n^{O(1)}$ and space $n^{O(1)}$,
where $n,p$ are the number of vertices of the graphs $G$ and $H$ respectively, and $d$ the
depth of an elimination forest of $G$. This algorithm can be lifted to any class of
bounded expansion (in particular bounded treewidth) by first computing a $p$-centered
colouring of $G$, then guessing the colours appearing in the subgraph $H$, and searching
for $H$ in the subgraph of $G$ induced by these colours which has bounded treedepth.
Observe that the number of colours of the $p$-centered colouring directly affects the
running time of the algorithm. In our case, our improved bound directly translates in an
improved bound on the running time.

Moreover, a tree-layout of bounded bandwidth encodes good $p$-centered colourings for
every $p$. This motivates such tree-layouts as data structures to answer multiple queries.
We can compute the tree-layout of a graph $G$ once, and then efficiently answer any
subgraph isomorphism query on $G$.

\section{Hardness of computing treebandwidth}
\label{sec:hardness}

XALP was recently defined in \cite{XALP} as an extension of XNLP
\cite{XNLP0,XNLP1,XNLP2} for tree-like structures. It translates known
results about some equivalent computational models that generalise
nondeterministic Turing machines to the parameterised setting.

XNLP is the class of parameterised problems that can be solved by a
nondeterministic Turing machine that uses $f(k)\log(n)$ space and runs in
$f(k)n^{O(1)}$ (FPT) time. While the definition might seem rather technical, it
is a relatively natural class for which many hard parameterised problems are
complete when parameterised by structural parameters related to linear
structures (e.g. pathwidth). In particular, \textsc{Bandwidth} is
XNLP-complete.

XALP is (among other equivalent characterisations) the class of parameterised
problems that can be solved by a nondeterministic Turing machine that uses
$f(k)\log(n)$ space and runs in $f(k)n^{O(1)}$ (FPT) time, and additionally has
access to an auxiliary stack whose space we do not restrict. It should be clear
from this definition that XALP contains XNLP.

To prove XALP-hardness (or XNLP-hardness), one has to restrict the space
requirements of the reduction to preserve membership in the class. The least
restrictive requirement is that the reduction uses (deterministic)
$f(k)n^{O(1)}$ (FPT) time and $f(k)\log(n)$ space. One can also give a
reduction that uses $f(k) + \log(n)$ space since it also implies FPT time.

We prove that \textsc{Treebandwidth} is XALP-complete, complementing the
NP-completeness proved in \cite{spanheight}.
This has the following implications: there is a dynamic programming algorithm
running in $n^{f(k)}$ (XP) time on $n^{f(k)}$ (XP) space, and this algorithm
might very well be optimal (see \cite{SpaceEfficiency}).
If any XALP-hard problem happens to have an FPT algorithm, then the W-hierarchy
collapses (note that W[1]-hardness is usually considered a sufficient reason
why there would not be an FPT algorithm). Furthermore, this would imply
the following inclusion of classical complexity classes: SAC$^1 \subseteq$ NL.

Our proof will follow the same lines as the proof for
\textsc{Tree-Partition-Width} in \cite{algo-tpw}. We first give a sketch of the
main ideas.

\subparagraph{Membership.} We consider a nondeterministic Turing machine that stores $f(k)$
consecutive vertices of the tree-layout corresponding to a separator of the
input graph. It will progressively guess the whole tree-layout as follows. If
the graph `below' this separator is separated into several components, we will
add some information to the stack to remember that we have to check each
component. Once this is done, we progress in the current component until it has
been fully processed (i.e. there are no vertices `below'). When this happens,
we load the information on top of the stack to begin the construction of a
tree-layout for one of the connected components that were not processed yet.
Once this information is loaded, we remove it from the stack.
This describes how we guess the tree-layout, to decide if the machine accepts
or rejects, it suffices to check if the current part of the tree-layout
satisfies the distance bound.

\subparagraph{Hardness.} We reduce from \textsc{Tree-Chained Multicolour Independent
Set} \cite{XALP}. This problem can be equivalently stated as \textsc{Binary CSP}
parameterised by the width of a binary tree-partition. This problem is a
generalisation to a tree shape of the `local problem' \textsc{Multicolour
Independent Set} which asks if there is an independent set that hits every
colour of a $k$-coloured graph. We introduce it more formally above the
hardness proof. The reduction builds a graph which consists of several gadgets:
the trunk, the clique chains, and the cluster gadgets. The trunk is a part of
the graph that will enforce the global structure of the tree-layout. The
clique chains are essentially long paths attached to the trunk that are too
long for the trunk and have to be folded at their beginning and at their end to
fit in the layout.  The information about the choice of a vertex in our
independent set is encoded by the length that is folded at the beginning of a
clique chain. To check whether our choice corresponds to an independent set,
clique chains have some wider parts, and we enforce that at most one of those
wider parts can fit in some sections of the trunk specific to each edge.

\begin{lemma}
\textsc{Treebandwidth} is in XALP.
\end{lemma}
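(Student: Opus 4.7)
The plan is to design a nondeterministic Turing machine $M$ that, on input $(G,k)$, guesses a tree-layout $T$ of $G$ incrementally while verifying $\bw(T)\leq k$. The state in main memory is a \emph{window} $W=(v_0,\dots,v_j)$ with $j\leq k$, holding the last $k+1$ vertices of the current root-to-node path in $T$ and ending at the current DFS position $v_j$. For each $v_i\in W$, $M$ stores a counter $c_i$ counting neighbours of $v_i$ in $G$ not yet placed in $T$. The window and counters fit in $O(k\log n)$ bits. The central invariant is that $W$ acts as a separator of $G$: if $\bw(T)\leq k$, any $G$-edge between a descendant of $v_j$ and a non-descendant must have its second endpoint among the $k$ nearest ancestors of $v_j$, hence in $W$.

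The nondeterministic transitions are as follows. \emph{Descend}: guess a new vertex $w$ to append to $W$; if $|W|$ would exceed $k+1$, first evict $v_0$ after checking $c_0=0$ and pushing it on the auxiliary stack; reject if $w$ has a $G$-neighbour that would have to be an ancestor more than $k$ steps above $w$; for every $v_i\in W$ with $v_iw\in E(G)$, decrement $c_i$; and set $c_w$ to the number of $G$-neighbours of $w$ outside the new window. \emph{Branch}: when $v_j$ has several children in $T$, push the current state on the stack before entering each child so the remaining siblings can be resumed later. \emph{Backtrack}: on finishing the subtree at $v_j$, check $c_j=0$, remove $v_j$, and pop the stack to either restore a previously evicted ancestor or resume a pending sibling. $M$ accepts when the number of placements equals $n$ and the stack is empty.

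Correctness: soundness is immediate from the counter invariants, which force every $G$-edge to be realised as an ancestor-descendant pair within distance $k$ in the guessed tree, while the placement counter enforces $|V(T)|=n$. For completeness, given a valid $T$ with $\bw(T)\leq k$, one must produce a DFS order on $T$ for which every check $c_0=0$ at eviction time succeeds; the key observation is that the pending neighbours of $v_0$ all lie within distance $k$ of $v_0$ in $T$, hence in the shallow parts of sibling subtrees of ancestors of $v_0$, so $M$ can nondeterministically schedule those siblings ahead of the descent that would trigger eviction. For resources, the main memory is $O(k\log n)$, each stack entry is $O(k\log n)$ with depth $O(n)$, and every transition takes time polynomial in $n$ after a linear-time preprocessing of adjacency queries; the total number of transitions is polynomial in $n$ with an $f(k)$ factor, yielding the FPT running time and placing \textsc{Treebandwidth} in XALP. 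The main obstacle is this scheduling argument: making precise that a tree-layout of bandwidth $k$ admits a DFS ordering compatible with the window mechanism so that every eviction check indeed succeeds along some branch of the computation.
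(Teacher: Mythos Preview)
Your proposal has a genuine gap in the completeness direction, and the difficulty you yourself flag as ``the main obstacle'' is in fact fatal for the algorithm as described.

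Take the simplest possible instance: $k=1$, a root $r$ with two children $a,b$, each heading a long path $a,a_1,a_2,\dots$ and $b,b_1,b_2,\dots$; the only edges at $r$ are $ra$ and $rb$, so this tree-layout has bandwidth~$1$. With window size $k+1=2$, after descending to $(r,a)$ you have $c_r=1$ (for the unplaced neighbour $b$). To reach $a_1$ you must evict $r$, but $c_r>0$. Your \emph{Backtrack} transition fires only ``on finishing the subtree at $v_j$'', so you cannot abandon $a$'s subtree to first visit $b$; and processing $b$ first runs into the symmetric problem. No DFS ordering places both neighbours of $r$ before descending past depth~$1$ in some subtree, because DFS commits to finishing a subtree once entered and both subtrees are deep. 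The ``schedule those siblings ahead'' idea would require interleaving partial explorations of sibling subtrees, which the LIFO stack discipline does not support: whatever you push to remember the suspended position in $a$'s subtree sits on top of the resume point for $b$.

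The underlying issue is that your counter $c_0$ is a \emph{global} count of unplaced neighbours of $v_0$, but at a branch those neighbours may be spread over several child subtrees, so the eviction test $c_0=0$ cannot be satisfied while processing any single one of them.

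The paper's proof avoids this by dropping global counters entirely. It stores a window $S$ of at most $k$ vertices together with a representative $v_C$ of \emph{one} connected component $C$ of $G-S$. When the top vertex $w$ is evicted, it invokes Reingold's logspace undirected-connectivity algorithm to verify that $S'$ separates $w$ from $C\setminus\{u\}$. Crucially, $C$ is only the current component; neighbours of $w$ lying in other components were pushed to the stack at the moment those components split off and are irrelevant to the present check. This localises the separator test so that it is compatible with processing one component at a time. If you try to repair your approach by making the counters local to the current component, you need logspace connectivity to compute them, at which point you have essentially reconstructed the paper's argument.
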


\begin{proof}
We will use the result of Reingold that undirected
connectivity is in logspace \cite{connlogspace} as a black box. We assume that
in the input graph $G$, vertices are indexed by integers from $1$ to $|V(G)|$, in
particular, we can store a vertex index with $O(\log(n))$ bits. We now describe
how to check if there is a tree-layout of $G$ of treebandwidth at most $k$ with
an algorithm respecting the specifications of XALP.

We can work on connected components of $G$ independently because given a fixed
vertex $v$ of $G$, we can check if a vertex is in the same connected component
of $G$ as $v$ in logspace. This is also sufficient to enumerate such vertices
in logspace. We call \emph{representative} of a connected component, the vertex
of the component with minimum index.

We now assume that $G$ is connected. We will denote a
\emph{configuration} of our algorithm by $(v_C,S)$ where $S$ is a linear layout
of at most $k$ vertices, $v_C$ is a vertex of a connected component of $G-S$
for which we must compute a tree-layout extending $S$. A step of our algorithm
consists of the following substeps:
\begin{enumerate}
\item Guess a vertex $u$ in the connected component $C$ of $G-S$ containing $v_C$.
\item Define $S'$ by, if $|S|=k,$ removing from $S$ its first element
$w,$, and always adding $u$ as the last element.
\item Check that every vertex of $C-\{u\}$ is separated from $w$ (if defined,
i.e. $|S|=k$) by $S'$. If not, reject.
\item Iterate on representatives $v_{C'}$ of connected components $C'$ of
$G-S'$ that contain neighbours of $u$ and do not contain $w$ (if defined, i.e.
$|S|=k$). For each of them, push $(v_{C'},S')$ on the stack. 
\item Read a configuration from the stack for the next step and pop it from the
stack. If the stack is empty, accept.
\end{enumerate}
The algorithm starts with the configuration $(x,\varnothing)$ with the
representative of $G$ (in particular it is not simply the first vertex if $G$
was not connected).

Our computation runs in $O(k\log(n))$ space. The number of steps is linear in
$|V(G)|$ and the substeps are logspace subroutines, so the algorithm runs in
nondeterministic polynomial time.

Correctness of the algorithm follows directly from the observation that, if for
each step we add vertex $u$ as a child of the last element of $S$ in a
tree-layout, and we consider a run of the algorithm that did not reject, then
the tree-layout has treebandwidth at most $k$. Indeed, a vertex $x$ of $C'$ had a
neighbour $y$ above $w$ in the tree-layout, then the algorithm would have
rejected in an earlier step since $x$ would be part of some superset of $C'$
when we removed $y$ from $S$.
\end{proof}

We now move to the reduction from the following problem.

\defparaproblem{\textsc{Tree-Chained Multicolour Independent Set}}{A (rooted) binary
tree $T=(I,F)$, an integer $k$, and for each $i \in I$, a collection of $k$
pairwise disjoint sets of vertices $V_{i,1},\ldots,V_{i,k}$, and a graph $G$
with vertex set $V=\bigcup_{i \in I, j \in [1,k]} V_{i,j}$.}{$k$.}{Is there a
set of vertices $W \subseteq V$ such that $W$ contains exactly one vertex from
each $V_{i,j}$ ($i \in I, j \in [1,k]$), and for each pair $V_{i,j},V_{i',j'}$
with $i=i'$ or $ii' \in F$, $j,j' \in [1,k]$, $(i,j)\neq(i',j')$, the vertex in
$W \cap V_{i,j}$ is non-adjacent to the vertex in $W \cap V_{i',j'}$?}

\begin{lemma}
\textsc{Treebandwidth} is XALP-hard, even when parameterised by $\tw + \Delta$.
\end{lemma}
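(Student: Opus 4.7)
The plan is to reduce from \textsc{Tree-Chained Multicolour Independent Set}, which is XALP-complete, following the overall strategy of the reduction for tree-partition-width in \cite{algo-tpw} but adapted to tree-layouts. Given an instance $(T, k, \{V_{i,j}\}_{i \in I, j \in [k]}, G)$ with $T=(I,F)$ a binary tree, I would construct a simple graph $G'$ together with a budget $B = \Theta(k N)$, where $N$ uniformly bounds $|V_{i,j}|$, such that $\tbw(G') \leq B$ if and only if the instance has a solution. The reduction must be carried out in $f(k)+\log n$ space so as to preserve membership in XALP; this is straightforward since each gadget can be emitted by iterating over tuples $(i,j,v)$ with logspace counters.

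First I would build the \emph{trunk}, a rigid skeleton whose purpose is to force any tree-layout of bandwidth at most $B$ to essentially mimic the shape of $T$. This is done by concatenating, for each $i \in I$, a spine segment consisting of a sufficiently long path together with short decorations that fix its alignment, and by joining the spine segments of adjacent nodes of $T$ at branching vertices of the trunk. A counting argument analogous to the one used in \cite{algo-tpw} shows that, in any tree-layout of bandwidth $B$, each spine segment must occupy a contiguous ancestor-descendant chain and the branching structure of the trunk in the tree-layout must match $F$.

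Then, to each spine segment for $(i,j)$ I attach a \emph{clique chain}: a path whose length is calibrated to be slightly longer than the available vertical room, so that it must be folded both at its top and at its bottom in any valid tree-layout. The precise offset of the top fold encodes a choice $w_{i,j} \in V_{i,j}$; vertex $v \in V_{i,j}$ corresponds to folding at offset $r(v)$. I decorate the chain with \emph{widened parts}, small cliques placed at positions depending on $v$, and introduce \emph{cluster sections} in the trunk corresponding to pairs $(V_{i,j},V_{i',j'})$ with $i=i'$ or $ii' \in F$. The widths are set so that, for every edge $uv \in E(G)$ between vertices of such paired color classes, the widened parts of the two corresponding chains would land in a common cluster section and exceed the budget $B$, whereas a valid coloring yields a non-colliding arrangement of folds realising a tree-layout of bandwidth exactly $B$.

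The main obstacle will be the careful calibration of spine length, clique-chain length, and cluster-section capacity so that both directions of the reduction go through: in particular, I must rule out pathological foldings that bypass the intended encoding (such as placing part of a clique chain in a sibling spine or splitting a widened part across the trunk) by making these alternatives strictly costlier than the intended layout, typically via small rigidifying cliques at the endpoints of each chain. Finally, each vertex of $G'$ either lies on a path, in a bounded-size clique, or at a low-degree junction coming from the trunk branching, so $\tw(G') = O(k)$ is inherited from the caterpillar-like structure along $T$ and $\Delta(G') = O(k)$ can be verified locally at each junction, establishing XALP-hardness under the parameterisation $\tw + \Delta$.
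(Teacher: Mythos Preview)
Your overall architecture matches the paper's proof: reduce from \textsc{Tree-Chained Multicolour Independent Set}, build a trunk that forces the shape of $T$, attach clique chains whose fold offset encodes the chosen vertex, and use widened parts together with tight sections to detect edges. However, your choice of budget $B = \Theta(kN)$ is fatal and inconsistent with the rest of your plan.

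If $\tw(G') = O(k)$ and $\Delta(G') = O(k)$ as you claim at the end, then the fan number of $G'$ is at most $\Delta(G') = O(k)$ (the centre of any subdivided fan keeps its degree) and the wall number is bounded by $\tw(G')$. By the very characterisation the paper establishes in \cref{thm:tbw-obs}, this forces $\tbw(G') \le f(k)$ for some fixed function $f$, independently of $N$. Hence for large $N$ the question ``is $\tbw(G') \le \Theta(kN)$?'' becomes trivially yes, and the reduction collapses. The paper avoids this by taking the target to be $L = 18k+8 = O(k)$, \emph{independent of the instance size}. The clique chains are long (length of order $N = 2r(m+1)$) but \emph{thin}: each link is a clique of size $3$, bumped to $4$ at positions encoding incidences; the trunk bags are cliques of size $L' = 9k+3$; and the slack $\delta(ii')$ between consecutive trunk cliques is calibrated so that at most one size-$4$ link fits above each edge-checking node. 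The choice of vertex is encoded by a fold offset $h(i,c) \in [1,r]$, not by a budget proportional to $r$.

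In short, you must decouple the \emph{length} of the gadgets (which carries the instance) from the \emph{width} of the gadgets (which is the budget and must stay $O(k)$). Once you do this, your sketch becomes essentially the paper's construction; the remaining work is the careful arithmetic on $\delta(ii')$ and the cluster gadgets (\cref{obs:cluster-gadget}) that force each trunk clique and each chain link to appear consecutively, which is what rules out the ``pathological foldings'' you allude to.
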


\begin{proof}
We set $L=18k+8$. We consider an instance of \textsc{Tree-Chained Multicolour
Independent Set} (TCMIS) with notations as introduced in the problem definition
above. We will construct an auxiliary graph $H$ such that the treebandwidth of
$H$ is at most $L$ if and only if the instance of TCMIS is positive. $H$ will
also have degree bounded by a polynomial in $k$. 

We assume that each set $V_{i,j}$ is of the same cardinality $r$. This is
easily enforced by adding vertices that are adjacent to all other vertices in
$V_{i,j'}$ for all $j' \in [1,k]$ since such vertices cannot be in the
solution. Throughout the construction, we consider a fixed indexation of the
edges of $G$, $E=\{e_1,\dots,e_m\}$. We also index the elements of each $V_{i,j}$
$V_{i,j}=\{v_{i,j,1},\dots,v_{i,j,r}\}$. We set $N=2r(m+1)$, $L'=9k+3$.

We call \emph{grandparent} of a node the parent of its parent, and
\emph{grandchildren} of a node the nodes having it is their grandparent.

\subparagraph{Cluster gadgets.} Given a clique $C$ of size $c$, we call \emph{cluster
gadget} on $C$ the following construction. Add a set $C'$ of $L-c+1$ vertices
to $H$, and add edges to $H$ so that $H[C \cup C']$ induces a clique. Since
vertices of $C'$ have no other neighbours, we can always have them below
vertices of $C'$ and on a separate branch in a tree-layout of $H$.

\begin{observation}\label{obs:cluster-gadget}
If there is a cluster gadget on a clique $C$, and $H$ has treebandwidth at most
$L$, then there is a layout in which vertices of $C$ appear consecutively.
\end{observation}

\subparagraph{Trunk.} We first define a (rooted) tree $T'$ from $T$. First, we add
a grandparent to the root of $T$ (i.e. we add a parent to the root twice) and
obtain a tree $T^r$. We then subdivide each edge of $T^r$ $N$ times to
obtain $T'$. Nodes of $T$ in $T'$ are referred to as \emph{original nodes}. We
extend the use of the word original to parents and grandparents: an original
(grand)parent is a (grand)parent in $T$. For each node $i$ of $T'$, we create a
clique $A_i$ of $H$ of size $L'$ and with a cluster gadget on $A_i$, we denote
vertices of $A_i$ by $\{a_{i,1},\dots,a_{i,L'}\}$. If $i$ and $i'$ are
adjacent nodes of $T'$, we add some edges between $A_i$ and $A_{i'}$ as
described in the following.

The edges between the cliques $A_i$ and $A_{i'}$, for some $i$ with parent $i'$
in $T'$, depend on some number $p(ii')$ which counts the number of spots that
must be given to the clique chains between the two cliques. We
count the number of clique chains that `cover' $ii'$, this corresponds to $k$
multiplied by the number of original nodes below $i$ with an original
grandparent above $i'$ (where $i$ and $i'$ are respectively included). Observe
that $p(ii')$ is always in the range $[k,3k]$ because $T$ is a binary tree. For
each subdivision of an edge of $T$, we decompose it into sections of length
$2r$ with the $i$-th one corresponding to edge $e_i$ of $G$. Intuitively, each
of the sections is dedicated to checking that we did not choose two vertices
incident to the same edge. For each original node $i$ and each edge $e_j$ of
$s$, let $i_{e_j}$ be the $(2jr)$-th ancestor of $G$. If $i'$ is the parent of
$i_{e_j}$, we set $\delta(i_{e_j}i')=3p(i_{e_j}i')+1$. For vertices $i$ that
are not one of the $i_{e_j}$, we set $\delta(i_{e_j}i')=3p(i_{e_j}i')+2$. Now,
for any node $i$ in $T'$ and its parent $i'$ in $T'$, we add edges between
$a_{i,j}$ and $\{a_{i',\ell}: \ell \in [1,\min(L',L-L'-\delta(ii')+j)]\}$.

\subparagraph{Clique chains.} For each original node $i$, and each color $c \in
[1,k]$, we add a clique chain $C_{i,c}$. Each clique chain consists of
cliques $C_{i,c,\gamma}$ with $\gamma \in [1,2N+r+1]$. There is a cluster
gadget on each $C_{i,c,\gamma}$. $C_{i,c,1}$ is fully adjacent to $A_i$,
$C_{i,c,2N+r+3}$ is fully adjacent to $A_{i'}$, where $i'$ is the original
grandparent of $i$, and $C_{i,c,\gamma}$ is fully adjacent to
$C_{i,c,\gamma+1}$ for $\gamma \in [1,2N+r]$. The clique $C_{i,c,\gamma}$ has
size $4$ if there is an edge $e_j \in E$ with endpoint $v_{i,c,\alpha}$, and,
either $\gamma=2jr+\alpha$ or $\gamma=2jr+\alpha+N+1$, otherwise, it has size $3$.

\subparagraph{Rooting gadget.} This gadget ensures a rooting of the tree-layout
with $A_r$ with $r$ the root of $T'$, and such that $A_r$ is also correctly
ordered, we add $L+1$ copies of a clique $R=\{r_1,\dots,r_L\}$ of size $L$
such that $a_{r,j}$ is adjacent to exactly $r_1,\dots,r_{L-j+1}$ in each copy.

\begin{claim}
	If $I$ is a valid independent set of our instance, we deduce a graph $H$ with treebandwidth at most $L$.
\end{claim}

\begin{subproof}
	We construct a tree-layout of $H$. We start from $T'$, and replace each
	node $i$ by $a_{i,L'},\dots,a_{i,1}$ in order from closest to the root to
	furthest to the root. For each original node $i$, let $h(i,c)$
	be the index of vertex of $V_{i,c}$ in the independent set, i.e.
	$v_{i,c,h(i,c)} \in I$. We place $C_{i,c,h(i,c)+j}$ above $A_{i'}$ for $i'$
	the $j$-th ancestor of $i$ in $T'$, with $j \in [0,2N+1]$. The rest of the
	clique chain can be folded greedily (see Figure~\ref{fig:subdivision}) in a
	branch below $A_i$ and in a branch below $A_{i'}$, with $i'$ the original
	grandparent of $i$.

	There are exactly $p(ii')$ cliques from the clique chains between $A_i$ and
	$A_{i'}$ with $i'$ the parent of $i$. By construction, there are at most
	two cliques of size $4$ between $A_i$ and $A_{i'}$. Moreover, since $I$ is
	an independent set, for $i=i_{e_j}$ and $i'$ the parent of $i$,
	there is at most one clique of size $4$ between $A_i$ and $A_{i'}$.
	By definition of $\delta(ii')$, the distance in the layout for edges
	between $A_i$ and $A_{i'}$ is at most $L$. Edges between consecutive
	cliques in the clique chains have distance at most $L'+3k+4 \leq L$, if the
	relative order of the chains is the same between each pair $A_i,A_{i'}$.
	Edges in folded parts are also shorter than $L'+3k+4 \leq L$.
\end{subproof}

\begin{claim}
	If $H$ has treebandwidth at most $L$, then the instance of TCMIS is positive.
\end{claim}

\begin{subproof}
	We first observe that the trunk cannot be folded in a tree-layout of
	treebandwidth at most $L$. Indeed, each $A_i$ must appear consecutively due
	to its cluster gadget. If $i'$ is the parent of $i$ in $T'$, then all
	vertices of $A_{i'}$ must be comparable to all vertices of $A_i$. 
	Furthermore, we can fit exactly $p(ii')$ cliques between $A_i$ and
	$A_{i'}$: we cannot fit more since $3(p(ii')+1)>3p(ii')+2$, and
	if we skip a clique, then there will be edges covering a distance
	at least $2L'+5 > L$. We recall that cliques in the clique chains have
	clustering gadgets and appear consecutively as well. Furthermore, one can
	check that $A_r$ is enforced as the root thanks to the rooting gadget. We
	now consider a tree-layout of treebandwidth at most $L$ for $H$. From the
	observations above, it follows that the $A_i$s follow the structure of
	$T'$, and that there is exactly one clique from each clique chain between a
	pair of consecutive $A_i$s. Let $h(i,c)$ be the minimum integer such that
	$C_{i,c,h(i,c)}$ appears above $A_i$, i.e. it is first clique of a clique
	chain to appear on the trunk. Observe first that $h(i,c) \in [1,r]$ because
	there must be $2(N+1)$ cliques from the chain placed on the trunk so the
	number of cliques folded below $A_i$ is at most $r-1$. We claim that
	picking $v_{i,c,h(i,c)}$ for each original node $i$, and colour $c \in
	[1,k]$, defines an independent set. Indeed, for each pair of distinct
	vertices $v_{i,c,h(i,c)},v_{i',c',h(i',c')}$ of $G$, they can be adjacent
	only if $i=i'$ or (w.l.o.g.) $i'$ is the parent of $i$ in $T$. If they
	were, there would be an edge $e_j$ incident to both, in which case the
	cliques $C_{i,c,\gamma},C_{i',c',\gamma'}$ for $\gamma=h(i,c)+2jr \mod
	(N+1)$ and $\gamma'=h(i',c')+2jr \mod (N+1)$ would have size $4$. In
	particular, this means that the total size of cliques from clique chains
	directly above $A_{i_{e_j}}$ is more than $\delta(i_{e_j}i')$ with $i'$ the
	parent of $i_{e_j}$. This would contradict the bound on the distances
	between adjacent vertices of $A_{i'}$ and $A_{i_{e_j}}$.
\end{subproof}
\vspace{-10pt}
\end{proof}

\begin{theorem}
\textsc{Treebandwidth} is in XALP, and is XALP-hard even when parameterised by
$\tw + \Delta$.
\end{theorem}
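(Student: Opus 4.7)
The statement is the immediate combination of the two preceding lemmas: the membership lemma showing \textsc{Treebandwidth} $\in$ XALP, and the hardness lemma establishing XALP-hardness under the parameterisation by $\tw+\Delta$. So the plan is essentially to observe that no further work is required beyond packaging these two lemmas into a single theorem statement.

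For membership, we invoke the nondeterministic Turing machine described in the first lemma, which uses $O(k\log n)$ tape and a stack to progressively guess a tree-layout by maintaining a length-$k$ ``current branch'' and pushing unprocessed subtree-roots of $G - S'$ onto the stack whenever the current separator splits the remaining graph. The correctness and space/time bounds are already argued in the lemma.

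For hardness under the more restrictive parameter $\tw+\Delta$, we rely on the second lemma, whose reduction from \textsc{Tree-Chained Multicolour Independent Set} produces a graph $H$ built from the trunk (a subdivision of the input binary tree with clique gadgets $A_i$ of size $L' = O(k)$ attached at each node), from the clique chains $C_{i,c,\gamma}$ (each of constant size), and from the rooting gadget. The only subtle point to double-check is that $\tw(H)$ and $\Delta(H)$ are genuinely both bounded by a function of $k$: the tree decomposition inherited from $T'$ (with bags covering $A_i \cup A_{i'}$ together with the $O(k)$ clique-chain cliques currently ``between'' them and the local cluster gadgets of size $L-L'+1 = O(k)$) has width $O(k)$, and each vertex has degree $O(k)$ because cluster gadgets and clique chains are of size $O(k)$, and each $a_{i,j}$ sees only its bag and $O(k)$ clique-chain neighbours.

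The main (minor) obstacle is simply this bookkeeping verification that $\tw(H) + \Delta(H) = k^{O(1)}$; beyond that, the theorem is a direct corollary. I would therefore write the proof in one line: ``Immediate from Lemmas [membership] and [hardness], noting that the graph $H$ constructed in the latter satisfies $\tw(H) + \Delta(H) = k^{O(1)}$.''
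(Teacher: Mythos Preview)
Your proposal is correct and matches the paper's approach exactly: the paper states the theorem immediately after the two lemmas with no separate proof, treating it as their direct combination. Your extra bookkeeping about $\tw(H)$ and $\Delta(H)$ being $k^{O(1)}$ is reasonable diligence, but strictly speaking that burden already lies with the hardness lemma (which is stated ``even when parameterised by $\tw + \Delta$''), so the theorem itself needs no additional argument.
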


\section{Parameters and topological minor obstructions}\label{sec:zoo}

In order to present how the parameters we introduce compare to other parameters from the
literature, we give a list of parameters that can be characterised by topological minors.
The relations between the considered parameters are illustrated in
Figure~\ref{fig:hasse-param}. Most characterisations are known or simple consequences of
existing literature.
Our results are the following: a proof of a characterisation by topological
minors for tree-cut width, a proof that treespan and domino treewidth are
equivalent, and an improved bound on edge-treewidth.

The obstruction lists corresponding to the considered parameters are given in Table~\ref{tab:obs-lists} with
some obstructions illustrated in Figure~\ref{fig:obs-graphs}. The list is small in all
considered examples. However, considering a minimal list may lead to large gaps between
obstruction size and value of the parameter. As a simple example, consider component size
which is exactly the size of the largest tree subgraph, but there is an exponential gap
with maximum degree and the longest path which asymptotically characterise the size of the
largest tree.

Characterising a parameter $\kappa$ by its minimal obstructions makes it simple to compare
to other parameters, to estimate the value of $\kappa$ on structured instances (e.g.
gadgets in hardness proofs), and may allow for win-win approaches.

\begin{figure}[h]
\centering
\begin{tikzpicture}
\node[draw,rectangle] (sz) at (0,0) {size};
\node[draw,rectangle] (vc) at (-2,-1) {vertex cover};
\node[draw,rectangle] (csz) at (2,-1) {component size};
\node[draw,rectangle] (td) at (0,-2) {treedepth};
\node[draw,rectangle] (pw) at (-1.5,-5) {pathwidth};
\node[draw,rectangle] (cw) at (2.5,-3) {cutwidth};
\node[draw,rectangle] (fvs) at (-3,-6) {feedback vertex set};
\node[draw,rectangle] (dtw) at (3,-4) {domino treewidth};
\node[draw,rectangle] (stcw) at (2,-5) {slim tree-cut width};
\node[draw,rectangle] (etw) at (3,-6) {edge-treewidth};
\node[draw,rectangle] (tcw) at (1.5,-7) {tree-cut width};
\node[draw,rectangle] (delta) at (6,-6) {maximum degree};
\node[draw,rectangle] (delta2) at (7,-8) {biconnected maximum degree};
\node[draw,rectangle] (tpw) at (2,-8) {tree-partition-width};
\node[draw,rectangle,dashed] (tbw) at (1.2,-9) {treebandwidth};
\node[draw,rectangle] (tw) at (0,-10) {treewidth};
\node[draw,rectangle,dashed] (otw) at (4.5,-9) {overlap treewidth};
\node[draw,rectangle] (ntd) at (3.5,-10) {fan number};
\node[draw,rectangle] (dipole) at (7,-10) {dipole number};

\draw (sz) -- (vc) -- (fvs) -- (tw);
\draw (sz) -- (csz) -- (td) -- (vc);
\draw (csz) -- (cw) -- (pw) -- (td) to [out=270,in=150] (tbw) -- (tw) -- (pw);
\draw (cw) -- (dtw) -- (stcw) to [out=220,in=110] (tcw) -- (tpw) -- (tbw) -- (ntd) to
[out=10,in=250] (delta2) -- (dipole) -- (otw) -- (etw) -- (delta2) -- (delta) -- (dtw);
\draw (stcw) --  (etw) to [out=270,in=30] (tpw);
\draw (otw) to [out=200,in=20] (tw);

\end{tikzpicture}
\caption{Hasse diagram of some parameters that admit a simple characterisation by
topological minors. Parameters we introduce are in dashed rectangles.}
\label{fig:hasse-param}
\end{figure}
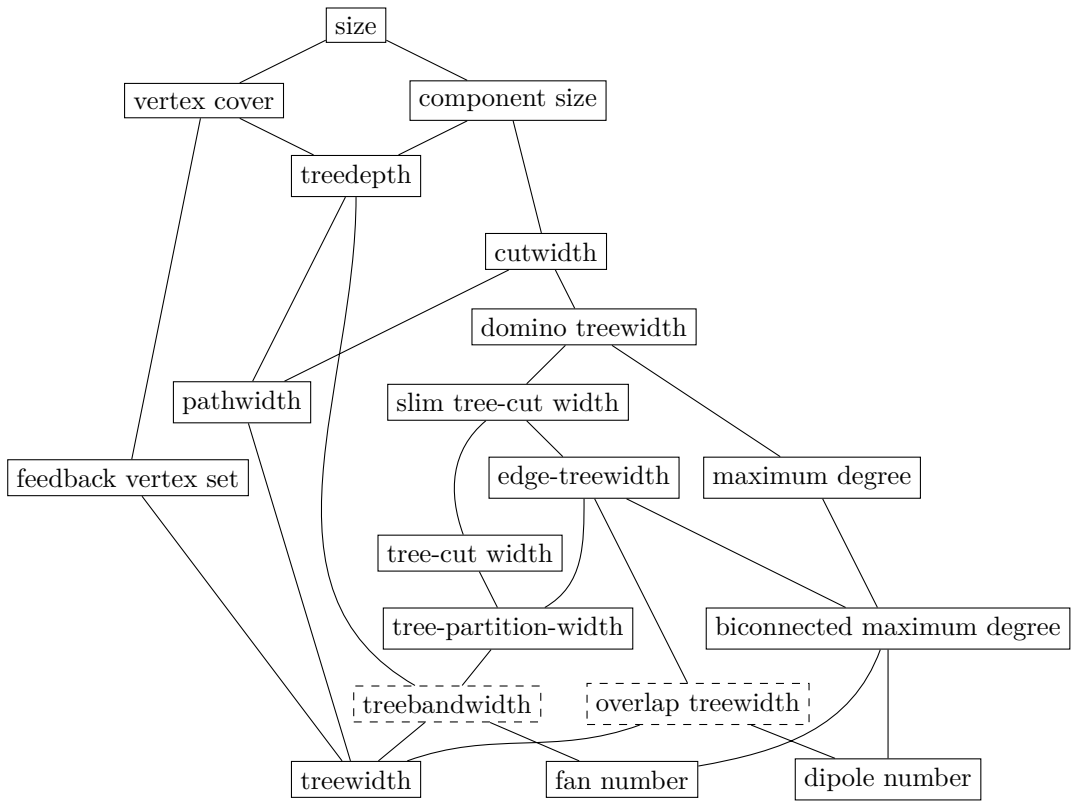

\begin{figure}[h]
\centering
\includegraphics{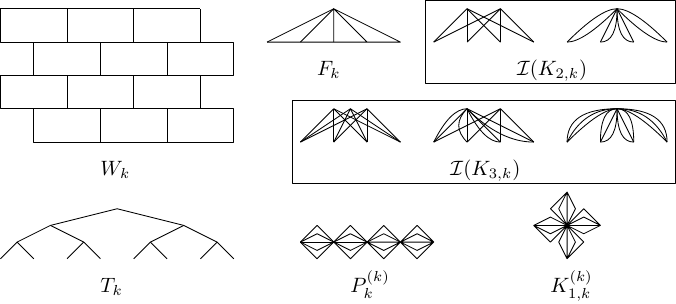}
\caption{Some graphs appearing in obstruction lists. $\mathcal{I}(K_{c,k})$ are the graphs
that can be obtained from $K_{c,k}$ by identifications on its set of vertices of degree $k$.}
\label{fig:obs-graphs}
\end{figure}

\begin{table}[h]
\centering
\begin{tabular}{|c|c|}
\hline
Parameters & Obstructions  \\
\hline
size & $kK_1$ \\
vertex cover & $kK_2$ \\
feedback vertex set & $kK_3$ \\
component size & $K_{1,k},P_k$ \\
treedepth & $P_k$ \\
pathwidth & $T_k$ \\
cutwidth & $T_k,K_{1,k}$ \\
treewidth & $W_k$ \\
domino treewidth & $W_k,K_{1,k}$ \\
slim tree-cut width & $W_k,\mathcal{I}(K_{2,k})$ \\
edge-treewidth & $W_k,F_k,K_{2,k}$ \\
maximum degree & $K_{1,k}$ \\
tree-cut width & $W_k,F_k,P_k^{(k)},\mathcal{I}(K_{3,k})$ \\
tree-partition-width & $W_k,F_k,P_k^{(k)},K_{1,k}^{(k)}$ \\
biconnected maximum degree & $F_k,K_{2,k}$ \\
treebandwidth & $W_k,F_k$ \\
overlap treewidth & $W_k,K_{2,k}$ \\
fan number & $F_k$ \\
dipole number & $K_{2,k}$ \\
\hline
\end{tabular}
\caption{Lists of parametric obstructions by topological minors.}
\label{tab:obs-lists}
\end{table}

A very natural way of restricting a graph class is by imposing a bound on the
maximum degree. This is equivalent to bounding the largest star appearing as a
subgraph. In this section, we illustrate how parameters originating from degree bounds,
forbidden minors, and forbidden immersions, can be characterised by simple
classes of forbidden topological minors.

Contrary to the minor relation and the immersion relation, which
were shown to be well-quasi-orderings by Robertson and Seymour in the Graph
Minor series \cite{RobertsonS04,RobertsonS10}, the topological minor relation
is known not to be a well-quasi-ordering.
Nonetheless, Robertson conjectured that there is only one minimal antichain
based on $2$-multiples of paths. A proof of this conjecture was given by
Chun-Hung Liu in his PhD thesis \cite{ChunHungLiuThesis}. 

Questions about the structure of graphs excluding a (topological) minor are naturally
related to questions about unavoidable (topological) minors. A class $\mathcal{U}$ of
unavoidable (topological) minors in a host class $\mathcal{C}$ of graphs consists of
graphs such that any large enough graph of $\mathcal{C}$ contains a graph of $\mathcal{U}$
as a (topological) minor.
In \cite{typicalsubgraphs}, the unavoidable topological minors of $c$-connected
graphs for $c \leq 4$ are given. This was extended by Benson Joeris in his PhD
thesis \cite{BensonJoerisThesis,generalizedwheel} who describes large
\emph{generalized wheels} as the unavoidable minors of graphs that
contain large $c$-connected sets of vertices (now for any integer $c$).
Generalized wheels (very) roughly consist of circular grids with an apex set
hitting each column of the grid. Although this is not explicitly stated, a
close inspection of the proof reveals that the apex set is not obtained by
contractions. In particular, if we restrict the grid part to be subcubic, we
obtain a topological minor. In the setting of infinite graphs, Carolyn Chun and
Guoli Ding obtained a similar result for topological minors
\cite{UnavoidableTopMinor}. 

We now list the results that are summarised in Table~\ref{tab:obs-lists}.

The result below follows from picking all vertices of a maximal matching and was
discovered independently by Fanica Gavril and Mihalis Yannakakis.

\begin{theorem}
	In a graph $G$, we have $\mu(G) \leq \vc(G) \leq 2 \mu(G),$ where $\mu(G)$ is
	the maximum matching number of $G$.
\end{theorem}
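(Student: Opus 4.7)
The plan is to prove the two inequalities separately, each via a very short combinatorial argument.

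For the lower bound $\mu(G) \leq \vc(G)$, I would fix a maximum matching $M$ with $|M| = \mu(G)$ and observe that any vertex cover $C$ must contain at least one endpoint of each edge of $M$. Since the edges of $M$ are pairwise vertex-disjoint by definition of a matching, these required endpoints are all distinct, giving $|C| \geq |M| = \mu(G)$.

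For the upper bound $\vc(G) \leq 2\mu(G)$, I would let $M$ be any \emph{maximal} (not necessarily maximum) matching of $G$, and let $S$ be the set of endpoints of edges in $M$, so $|S| = 2|M| \leq 2\mu(G)$. I claim $S$ is a vertex cover: if some edge $e = uv \in E(G)$ avoided $S$, then neither $u$ nor $v$ is incident to an edge of $M$, so $M \cup \{e\}$ would still be a matching, contradicting maximality of $M$. Hence every edge is covered by $S$, so $\vc(G) \leq |S| \leq 2\mu(G)$.

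There is no real obstacle here; both arguments are one line each, and I would just stitch them together into a single short proof. The only care needed is to distinguish between a \emph{maximum} matching (used for the lower bound, to get $\mu(G)$ vertices forced into any cover) and a \emph{maximal} matching (used for the upper bound, so that the endpoint set is automatically a vertex cover). I would also note explicitly that the $\mu(G) \leq \vc(G)$ direction is essentially LP-duality between matching and vertex cover, which makes the factor-$2$ gap tight on odd cycles and on $K_3$, so the inequalities cannot be improved in general.
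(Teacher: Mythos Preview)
Your proof is correct and matches the paper's approach exactly: the paper does not give a detailed proof but simply notes that the result ``follows from picking all vertices of a maximal matching,'' which is precisely your argument for the upper bound, and the lower bound is standard.
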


The following theorem due to Paul Erd\H{o}s and Lajos P\'{o}sa gives its name to the
Erd\H{o}s-P\'{o}sa property, a duality between packing subgraphs and hitting subgraphs.

\begin{theorem}[Erd\H{o}s and P\'{o}sa \cite{ErdosPosa}]
	In a graph $G$, we have $\nu(G) \leq \fvs(G) \leq O(\nu(G) \log \nu(G)),$
	where $\nu(G)$ is the cycle packing number of $G$.
\end{theorem}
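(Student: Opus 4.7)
The plan is to prove the two inequalities separately. The lower bound $\nu(G) \leq \fvs(G)$ is immediate: if $C_1, \dots, C_{\nu(G)}$ are pairwise vertex-disjoint cycles of $G$ and $S$ is a feedback vertex set, then $S$ must hit each $C_i$ (otherwise $G-S$ still contains the cycle $C_i$, contradicting $G-S$ being acyclic), so $|S| \geq \nu(G)$. This uses no structural argument and can be dispatched in one line.

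For the upper bound I would follow the classical argument of Erdős and Pósa. The first step is a reduction to graphs of minimum degree at least $3$: iteratively delete vertices of degree at most $1$ (they lie on no cycle) and suppress vertices of degree $2$ (replacing a degree-$2$ vertex $v$ by a single edge between its neighbours, possibly creating a multi-edge, which one can treat by subdividing back or by allowing multigraphs for this argument). These operations preserve both $\nu$ and $\fvs$, so it suffices to bound $\fvs$ on a (multi)graph $H$ of minimum degree at least $3$.

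The core ingredient is the short-cycle lemma: any graph $H$ with minimum degree $\geq 3$ on $n$ vertices contains a cycle of length at most $2\log_2 n + 1$, obtained by running BFS from any vertex and observing that a BFS tree of depth $d$ has at most $1 + 2 + 2\cdot 2 + \cdots + 2^{d}$ vertices, so some non-tree edge must appear by depth $O(\log n)$ and closes a short cycle. I then induct on $k := \nu(G)$. If $|V(H)| \leq c\cdot k \log k$ for a sufficiently large constant $c$, take $V(H)$ itself as the feedback vertex set. Otherwise locate a cycle $C$ of length $O(\log(k\log k)) = O(\log k)$; the graph $H - V(C)$ satisfies $\nu(H - V(C)) \leq k-1$, since otherwise $k-1$ disjoint cycles in $H - V(C)$ together with $C$ would give a $k$-packing in $H$ contradicting $\nu(H) \leq k-1$ in the inductive setup (or yield a larger packing than assumed). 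Applying induction to $H - V(C)$ (after reducing again to minimum degree $3$) and adding $V(C)$ gives a feedback vertex set of size at most $f(k-1) + O(\log k)$, yielding the recurrence $f(k) \leq f(k-1) + O(\log k)$ which sums to $f(k) = O(k\log k)$.

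The main obstacle is the bookkeeping across the repeated reduction to minimum degree $3$: each inductive step removes a short cycle and then one must re-suppress low-degree vertices, taking care that $\nu$ strictly decreases (which it does, by the argument above) and that the $|V(H)|$-versus-$k\log k$ dichotomy is applied to the \emph{reduced} graph at each stage. Once these constants are tracked, the recurrence is routine, and no technology beyond BFS and elementary degree counting is required.
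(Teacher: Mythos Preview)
The paper does not give a proof of this theorem; it is quoted from the literature in \cref{sec:zoo} with attribution to Erd\H{o}s and P\'{o}sa, and no argument is supplied. There is therefore no ``paper's own proof'' to compare against.

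That said, your sketch of the upper bound has a genuine gap. In your dichotomy the ``otherwise'' branch is $|V(H)|>c\,k\log k$, and you then claim to locate a cycle of length $O(\log(k\log k))=O(\log k)$. But the short-cycle lemma only yields a cycle of length at most $2\log_2|V(H)|+1$, and you are precisely in the regime where $|V(H)|$ is \emph{large} and a priori unbounded in terms of $k$ --- bounding it is exactly the content of the theorem. So the recurrence $f(k)\le f(k-1)+|C|$ does not close: you need $|C|=O(\log k)$, yet you only have $|C|=O(\log|V(H)|)$. The inequality points the wrong way.

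The classical fix is to prove the contrapositive on the reduced graph: a multigraph of minimum degree at least~$3$ on at least $r_k=\Theta(k\log k)$ vertices must contain $k$ pairwise disjoint cycles. One first passes to a cubic multigraph, where removing a shortest cycle and re-reducing deletes at most $2|C|$ vertices; the monotonicity of $n\mapsto n-O(\log n)$ for large $n$ then lets the induction go through even when $n\gg r_k$. Controlling how many vertices are lost in each cleanup step --- not the bookkeeping you flag at the end --- is the missing ingredient.
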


The theorems above imply that the parameters $\vc$ and $\fvs$ are tied to the parametric
graphs $kK_2$ and $kK_3$, respectively.

The obstructions to treedepth, pathwidth, and treewidth are well-known, they are the path, the
complete binary tree, and the grid (or the wall), respectively. See \cite{ChuzhoyTan21} for
the best known bound between treewidth and the wall number. See \cite{CzerwinskiNP21} for
polynomial bounds for treedepth and consequently pathwidth in terms of the three obstructions.

\begin{theorem}
	Treedepth and pathwidth are tied to the size of the largest path and size of the
	largest complete binary tree topological minor.

	Treewidth is polynomially tied to the wall number.
\end{theorem}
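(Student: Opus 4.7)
The plan is to establish the three ties separately, each by combining a monotonicity lower bound with a classical structural upper bound from the literature.

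For treedepth, I would first use minor-monotonicity: a $P_k$ topological minor is in particular a subgraph, and $\td(P_k) = \lceil \log_2(k+1) \rceil$, so if $L(G)$ denotes the size of the largest path subgraph then $\td(G) \geq \lceil \log_2(L(G)+1) \rceil$. For the converse, I would use the classical DFS-tree argument: a DFS tree $T$ of $G$ is an elimination tree because every non-tree edge of $G$ is an ancestor-descendant edge of $T$, and its depth is at most the length of the longest path in $G$ starting at its root, giving $\td(G) \leq L(G)$. Since paths are subcubic, their topological minor models coincide with subgraph models, so $L(G)$ is exactly the largest path topological minor.

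For pathwidth, the lower bound is again immediate by monotonicity using $\pw(T_k) = \lceil k/2 \rceil$ (Ellis--Sudborough--Turner), where $T_k$ denotes the complete binary tree of height $k$. The matching upper bound is the theorem of Bienstock--Robertson--Seymour--Thomas that graphs of sufficiently large pathwidth contain a large complete binary tree as a minor. Since $T_k$ is subcubic, any minor model of $T_k$ can be refined into a topological minor model (in the subcubic case, the branching pattern of a branch set can always be realised by an internally disjoint tripod through three chosen vertices of the branch set), so the $T_k$ minor yields a $T_k$ topological minor. Citing \cite{CzerwinskiNP21} would give polynomial bounds.

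For treewidth, I would invoke the Chuzhoy--Tan \cite{ChuzhoyTan21} polynomial grid-minor theorem: $\tw(G) \geq k$ implies a grid minor of order polynomial in $k$. A wall $W_\ell$ is a subgraph of the $(2\ell) \times \ell$ grid and is subcubic, so a grid minor of order polynomial in $k$ yields a wall topological minor of order polynomial in $k$, by the same subcubic minor-to-topological-minor refinement. Conversely, $\tw(W_k) = \Theta(k)$, so the presence of $W_k$ as a topological minor (hence a minor) gives $\tw(G) \geq \Omega(k)$, closing the polynomial tie.

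No serious obstacle arises: the argument is essentially bookkeeping around classical theorems. The one point that deserves explicit mention is the standard fact that minor containment and topological minor containment coincide for subcubic target graphs, which is what allows all three upper bound theorems (phrased in terms of minors) to be transferred to the topological-minor setting as required by the statement.
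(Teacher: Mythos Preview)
Your proposal is correct. The paper does not actually prove this theorem; it merely records it as well known, citing \cite{ChuzhoyTan21} for the polynomial grid/wall bound and \cite{CzerwinskiNP21} for polynomial bounds on treedepth and pathwidth in terms of the three obstructions. Your write-up is more explicit than the paper's: you give the elementary DFS-tree argument for the treedepth upper bound and spell out the subcubic minor-to-topological-minor conversion that the paper only alludes to elsewhere. The one minor discrepancy is attribution for the pathwidth direction---the excluded-forest theorem is usually credited to Robertson and Seymour (Graph Minors I), with Bienstock--Robertson--Seymour--Thomas giving the tight linear bound---but this does not affect correctness.
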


The best known bounds on domino treewidth compared to treewidth and maximum degree are the
following. The upper bound is not known to be tight, in particular it might be possible to
shave off a $\Delta$ factor. See \cite{DominoTw2,tpwWood} for constructions such that $\dtw(G)
= \Omega(\Delta(G) \tw(G))$.

\begin{theorem}[Bodlaender \cite{DominoTw2}]\label{thm:domino-tw}
	In a graph $G$, we have $$\max(\tw(G),(\Delta(G)-1)/2) \leq \dtw(G) \leq
	O(\tw(G)\Delta(G)^2).$$
\end{theorem}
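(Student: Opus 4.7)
I would prove the two inequalities separately. The bound $\tw(G) \leq \dtw(G)$ is tautological, since a domino tree decomposition is in particular a tree decomposition. For the degree lower bound, I would fix a vertex $v$ with $\deg(v) = \Delta(G)$ and consider an optimal domino tree decomposition. The vertex $v$ lies in at most two bags $B_1, B_2$, whose union must contain $N[v]$: every edge incident to $v$ is covered by some bag, and such a bag necessarily contains $v$. Since $|B_1|, |B_2| \leq \dtw(G) + 1$ and $v \in B_1 \cap B_2$, we obtain $\Delta(G) + 1 = |N[v]| \leq |B_1 \cup B_2| \leq 2\dtw(G) + 1$, so $(\Delta(G)-1)/2 \leq \Delta(G)/2 \leq \dtw(G)$.

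For the upper bound, the plan is to follow Bodlaender's strategy of transforming a tree decomposition of width $k = \tw(G)$ into a domino one with a controlled blow-up. Starting from a tree decomposition $(T, \beta)$ of width $k$, each vertex $v$ occupies a subtree $T_v$ of $T$, and the goal is to reshape $(T, \beta)$ so that $|V(T_v)| \leq 2$ for every $v$. After rooting $T$, I would designate for each $v$ a \emph{representative} bag (say the topmost one in $T_v$) and at most one adjacent ``spillover'' bag, then delete $v$ from every other node of $T_v$. To preserve the tree-decomposition property, every neighbour $u$ of $v$ that previously shared a bag with $v$ outside the representative/spillover pair must be inserted into one of those two bags so that the edge $uv$ remains covered. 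A single such transformation injects at most $\Delta(G)$ new vertices into a representative bag, and accumulating over the (at most $k+1$) vertices of a given original bag yields an $O(\Delta(G)^2)$ additive blow-up per bag, giving $\dtw(G) = O(\tw(G)\Delta(G)^2)$.

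\textbf{Main obstacle.} The delicate part is the simultaneous nature of the transformation: when a neighbour $u$ is pushed into the representative bag of $v$, the subtree of bags containing $u$ may grow, potentially violating the at-most-two constraint for $u$ and triggering a cascade of further adjustments. I would control this either by processing vertices in a carefully chosen order (for example bottom-up in the rooted $T$), together with a potential argument showing that the cascade terminates after a bounded number of rounds, or by a global, non-iterative assignment of representatives combined with an accounting showing that each bag of the final decomposition is contained in the union of its original bag and at most $O(\Delta(G)^2)$ added vertices. Pinning down the exponent $2$ on $\Delta(G)$ (rather than a higher power) is the main quantitative step, and is what makes the upper bound match the construction from \cite{DominoTw2,tpwWood} showing $\dtw(G) = \Omega(\Delta(G)\tw(G))$.
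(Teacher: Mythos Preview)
The paper does not prove this theorem; it is quoted verbatim as a result of Bodlaender \cite{DominoTw2} and used later as a black box, with no argument given. So there is no proof in the paper to compare your proposal against.

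On the substance of what you wrote: the lower-bound half is correct and complete. For the upper bound, however, you have an outline rather than a proof, and you yourself identify the gap. Your scheme of picking a representative bag for each vertex and pushing its neighbours into that bag does not obviously terminate or stay bounded: once you insert a neighbour $u$ into $v$'s representative bag, $u$ now appears in an extra bag, and fixing $u$ may in turn force further insertions. Neither of your proposed remedies (a ``carefully chosen order'' with an unspecified potential, or a ``global non-iterative assignment'' with unspecified accounting) is spelled out, and it is precisely this interference between vertices that makes the upper bound nontrivial. Bodlaender's construction does not proceed by locally editing an arbitrary tree decomposition in this way; it builds the domino decomposition more globally from a suitably normalised tree decomposition, and the $\Delta^2$ factor comes out of that construction rather than from summing per-vertex insertions. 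As written, your upper-bound argument has a genuine gap at exactly the point you flag as the ``main obstacle''.
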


We show that the parameter treespan ($\ts(G)$) introduced in
\cite{FominHT05GraphSearching} roughly corresponds to the same notion as domino
treewidth. One of the equivalent definitions introduced as \emph{elimination
span} in \cite{FominHT05GraphSearching} can be expressed in terms of
tree-layout. The pruned subtree of a vertex $x$ in a tree-layout $T$ is the
subtree $T^p[x]$ obtained from the subtree of $T$ rooted in $x$ by removing all
vertices that do not have a neighbour of $x$ in their descendants. The
elimination span of a tree-layout is then maximum number of vertices in a
pruned subtree minus one, and the elimination span of a graph is the minimum elimination
span of a tree-layout of the graph.

\begin{theorem}\label{thm:eq-treespan-dtw}
Treespan is polynomially tied to domino treewidth.
\end{theorem}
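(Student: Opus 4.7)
The plan is to prove polynomial inequalities $\ts(G) \leq O(\dtw(G))$ and $\dtw(G) \leq O(\ts(G)^3)$. The second inequality is an immediate consequence of Theorem \ref{thm:domino-tw} together with lower bounds $\tw(G) \leq \ts(G)$ and $\Delta(G) \leq 2\ts(G)$, so the main work is the first inequality and the verification of these two lower bounds.

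To establish the lower bounds, I would start from a tree-layout $T$ of $G$ with elimination span $k$ and make the following key observation: if $u$ is an ancestor of $v$ with $v \in T^p[u]$, then by the path-closedness of the pruned subtree the entire $u$-to-$v$ path is contained in $T^p[u]$, so $v$ is among the $k$ closest descendants of $u$ in $T$; equivalently, $u$ is among the $k$ closest ancestors of $v$. This yields two consequences. First, every ancestor-neighbour $u$ of $v$ automatically satisfies $v \in T^p[u]$, so $v$ has at most $k$ ancestor-neighbours, and combined with the at most $k$ descendant-neighbours coming from $|T^p[v]| \leq k+1$, we obtain $\Delta(G) \leq 2k = 2\ts(G)$. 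Second, the standard tree decomposition derived from $T$ by setting $\beta(v) = \{v\} \cup \{u \text{ ancestor of } v : v \in T^p[u]\}$ has bags of size at most $k+1$, so $\tw(G) \leq \ts(G)$. Combining these with Theorem \ref{thm:domino-tw} gives $\dtw(G) \leq O(\tw(G) \Delta(G)^2) \leq O(\ts(G)^3)$.

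For the harder direction, given a domino tree decomposition $(T,\beta)$ of width $k$ rooted arbitrarily, I would define $\gamma(t) = \beta(t) \setminus \beta(\text{parent}(t))$ (with $\gamma(r) = \beta(r)$ at the root). The sets $\gamma(t)$ partition $V(G)$ and each has size at most $k+1$. Arrange each $\gamma(t)$ into an arbitrary linear chain in a candidate tree-layout $T'$, and attach the chain of $\gamma(t)$ below the bottom vertex of the chain of $\gamma(\text{parent}(t))$. To verify ancestor-descendant for each edge $uv$ of $G$, observe that $u$ and $v$ share some bag $\beta(t)$, and the domino property forces both introducing bags $t_u, t_v$ to lie in $\{t, \text{parent}(t)\}$; in every case the chains of $\gamma(t_u)$ and $\gamma(t_v)$ are either identical or stacked parent-then-child, so $u$ and $v$ are comparable in $T'$.

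The span bound then follows from analysing where a fixed $v \in \gamma(t_v)$ can have neighbours below it in $T'$. The domino property implies $v$ lies in at most one bag other than $t_v$, necessarily a child $t_v^*$ of $t_v$ in $T$, so every neighbour of $v$ in $G$ lies in $\beta(t_v) \cup \beta(t_v^*)$. Consequently, descendants of $v$ in $T'$ that are neighbours of $v$ lie either in the tail of the chain $\gamma(t_v)$ after $v$ or inside the chain $\gamma(t_v^*)$ (accessed through the remainder of $\gamma(t_v)$), so the pruned subtree $T^{\prime p}[v]$ is contained in the concatenation of a suffix of $\gamma(t_v)$ and a prefix of $\gamma(t_v^*)$, giving $|T^{\prime p}[v]| \leq 2(k+1)$ and hence $\ts(G) \leq 2\dtw(G) + 1$. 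The main subtlety is verifying that no descendant-neighbour of $v$ can slip into some other child chain $\gamma(t')$ of $t_v$; this is precisely what the domino property rules out, since such a neighbour would force $v$ into a third bag. Once this confinement is secured, the two polynomial bounds combine to give the desired equivalence.
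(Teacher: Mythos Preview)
Your proposal is correct and follows essentially the same strategy as the paper: both directions prove $\ts(G)\leq 2\dtw(G)$ (you get $+1$, an inconsequential off-by-one) via the canonical tree-layout built from a rooted domino decomposition, and both deduce $\dtw(G)\leq O(\ts(G)^3)$ from Theorem~\ref{thm:domino-tw} after bounding $\tw(G)$ and $\Delta(G)$ by $\ts(G)$ and $2\ts(G)$ respectively. The only cosmetic differences are that you construct an explicit tree decomposition $\beta(v)=\{v\}\cup\{u:\ v\in T^p[u]\}$ for the treewidth bound where the paper invokes the chordal-completion characterisation, and you argue the degree bound uniformly over all vertices rather than at a maximum-degree vertex; both are the same idea.
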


\begin{proof}

The statement follows from a lower bound by $\tw(G)$ and $\Delta(G)$, from
which we deduce a lower bound in terms of domino treewidth using
\cref{thm:domino-tw}, and a direct upper bound in terms of domino treewidth.

\begin{claim}
$\max(\tw(G),\Delta(G)/2) \leq \ts(G)$
\end{claim}

\begin{subproof}
Consider a fixed tree-layout $T$ of $G$.

Let $x$ be a vertex of degree $\Delta(G)$.
Let $d_1(x)$ (resp. $d_2(x)$) be the number of neighbours of $x$ in $G$ that
are ancestors (resp. descendants) of $x$ in $T$. Since $d_1(x)+d_2(x) =
\Delta(G)$, one of $d_1(x)$ and $d_2(x)$ is at least $\Delta(G)/2$. 
First, we consider the case when $d_1(x)\geq \Delta(G)/2$. Let $y$ be
the neighbour of $x$ in $G$ which is closest to the root in $T$. Then $x \in
T^p[y]$ and $|T^p[y]| \geq d_1(x)+1$, thus, $\ts(G) \geq \Delta(G)/2$.  We now
move to the case when $d_2(x)\geq \Delta(G)/2$. Then, $|T^p[x]| \geq d_2(x) +
1$, thus $\ts(G) \geq \Delta(G)/2$.

There must exist a vertex $x$ such that the vertices of $G$ mapped to the
subtree $T[x]$ of $T$ rooted in $x$ have a set $U$ of neighbours in $G$
that are ancestors of $x$ such that $|U|\geq \tw(G)$. (This follows from the
characterisation of treewidth via the maximum clique size in a minimal chordal
completion.) Let $y$ be the vertex of $U$ which is closest to the root in $T$.
Then, $T^p[y]$ contains all vertices of $U \cup \{x\}$ since $y$ has a
neighbour which is a descendant of $x$ and vertices of $U$ are all on the path
of $T$ from $y$ to $x$. We conclude that $|T^p[y]|\geq \tw(G)+1$, implying
$\ts(G) \geq \tw(G)$.
\end{subproof}

\begin{claim}
$\ts(G) \leq 2\dtw(G)$
\end{claim}

\begin{subproof}
We now consider the canonical tree-layout $T'$ of $G$ obtained from a domino
tree decomposition $(T,X)$ of width $k$. The pruned subtree of a vertex $x$ is
always contained in the path of $T'$ induced by the two bags (of size at most
$k+1$) of $T$ containing $x$. Since the bags have at least one vertex in
common, we conclude that $\ts(G) \leq 2k$.
\end{subproof}
\vspace{-12pt}
\end{proof}

\begin{lemma}[see e.g. \cite{PhucNguyenThesis}]\label{lem:ramsey-connected}
If a graph $H$ is connected and $X \subseteq V(H)$ is large enough (i.e. $|X|
\geq f(s)$), then either $H$ contains a subdivided $K_{1,s}$ with leaves in
$X$, or it contains an $X$-rooted path minor of length $s$.
\end{lemma}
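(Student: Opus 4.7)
The plan is a branching-versus-diameter dichotomy applied to a Steiner tree of $X$ inside $H$. As $H$ is connected, fix a minimal Steiner tree $T \subseteq H$ for $X$; by minimality every leaf of $T$ lies in $X$. If some vertex $v \in V(T)$ has degree at least $s$, then each of its $s$ incident subtrees contains a leaf, hence a vertex of $X$; the resulting $s$ tree paths from $v$ to these $X$-vertices are internally vertex-disjoint, and together form a subdivided $K_{1,s}$ centered at $v$ with leaves in $X$.

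Otherwise every vertex of $T$ has degree at most $s-1$. Contract every maximal path of degree-$2$ non-$X$ vertices of $T$ to obtain a tree $T^\ast$; the contracted paths remain present in $H$, so any topological minor found in $T^\ast$ lifts to $H$. By construction $T^\ast$ has maximum degree at most $s-1$, every leaf in $X$, and every internal non-$X$ vertex of degree at least $3$. Let $P^\ast$ be a longest path of $T^\ast$. Its endpoints are leaves of $T^\ast$ and thus lie in $X$; each internal vertex $v$ of $P^\ast$ is either in $X$, or is a non-$X$ branching vertex which, having degree at least $3$, admits a side subtree disjoint from $P^\ast$ whose leaf lies in $X$. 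Thus the \emph{territory} of each vertex of $P^\ast$ (the vertex itself together with its side subtree when it is not in $X$) meets $X$. If $|V(P^\ast)| \geq s$, picking $s$ consecutive territories along $P^\ast$ produces $s$ pairwise disjoint connected branch sets, each meeting $X$, with consecutive ones adjacent, hence an $X$-rooted $P_s$ minor.

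The only remaining possibility is that $T^\ast$ has diameter strictly less than $s-1$. Rooting $T^\ast$ at the centre of its longest path bounds its height by $\lceil (s-2)/2 \rceil$, and combined with maximum degree $s-1$ this yields $|V(T^\ast)| \leq 2(s-1)^{\lceil (s-1)/2 \rceil}$. Since $X \subseteq V(T^\ast)$, this bounds $|X|$ by an explicit function $f(s)$ of $s$ alone; taking $f(s)$ slightly larger than this bound then guarantees that one of the two structural conclusions must hold whenever $|X| \geq f(s)$.

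\textbf{Main obstacle.} The crucial simplification is the suppression of degree-$2$ non-$X$ vertices of the Steiner tree. Without it, a long diameter path in $T$ could be made up mostly of Steiner points whose territories contain no $X$-vertex, so a long diameter would not immediately yield a $P_s$ minor. After suppression, every vertex on a longest path of $T^\ast$ automatically hosts an $X$-meeting territory, which is precisely what makes the diameter bound equivalent to the desired path minor and lets the whole argument close with an elementary height-times-degree counting.
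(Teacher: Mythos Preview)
Your argument is correct. The paper does not actually prove this lemma; it is stated with a citation to \cite{PhucNguyenThesis} and used as a black box, so there is no proof in the paper to compare against.

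Your Steiner-tree approach is the natural one and goes through cleanly. A couple of minor points worth tightening if you write it out in full: when you say ``consecutive territories are adjacent'', this holds in $T^\ast$, not directly in $H$; the lift to $H$ requires distributing the suppressed degree-$2$ paths of $T$ among the branch sets (e.g.\ assigning each suppressed path to the branch set of one of its endpoints on $P^\ast$), which you allude to with ``the contracted paths remain present in $H$'' but do not spell out. Also, ``any topological minor found in $T^\ast$ lifts to $H$'' is slightly off---you are finding an $X$-rooted \emph{minor} of $P_s$, not a topological minor, though for paths the distinction is immaterial. Neither point is a genuine gap; the proof is sound.
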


By applying \cref{lem:ramsey-connected} with the full vertex set of a connected graph, we
have the characterisation of component size\footnote{This is also easily derived from the
fact a connected graph admits a spanning tree and the size of a tree is bounded by its
maximum degree and depth.} by $P_k$ and $K_{1,k}$.

By considering the vertex $v$ of maximum degree in a $2$-connected component
$C$ and setting $H:=C-v$, $X:=N_C[v]$, we may derive the following.

\begin{observation}
$G$ has bounded degree in each $2$-connected component if and only if it contains neither
a large subdivided fan nor a large subdivided dipole.
\end{observation}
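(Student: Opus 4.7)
The plan is to establish the equivalence in both directions, using \cref{lem:ramsey-connected} for the less trivial direction.

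For the forward direction, suppose $G$ contains a subdivision $H$ of $F_k$ (the case of a dipole is analogous). Since $F_k$ is $2$-connected for $k\geq 2$, its subdivision $H$ is also $2$-connected. Therefore $H$ is contained in a single $2$-connected component $C$ of $G$. The image of the universal vertex of $F_k$ in $H$ has degree at least $k$ in $H$, and hence at least $k$ in $C$. The same argument applied to either pole of a subdivided $D_k\cong K_{2,k}$ shows that a large dipole topological minor also forces large degree in some $2$-connected component.

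For the backward direction, suppose that some $2$-connected component $C$ contains a vertex $v$ of degree at least $f(s)$, where $f$ is the function provided by \cref{lem:ramsey-connected} and $s$ is the desired bound on the fan and dipole number. Set $H := C - v$ and $X := N_C(v)$; since $C$ is $2$-connected, $H$ is connected, and $|X|=\deg_C(v)\geq f(s)$. Applying \cref{lem:ramsey-connected} to $H$ and $X$ leaves two cases.

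In the first case, $H$ contains a subdivided $K_{1,s}$ whose $s$ leaves lie in $X$; that is, there exist a vertex $u \in V(H)$ and $s$ internally vertex-disjoint paths in $H$ from $u$ to distinct neighbours $x_1,\dots,x_s$ of $v$. Since these paths avoid $v$, concatenating each with the edge $vx_i$ yields $s$ internally vertex-disjoint $u$-$v$ paths in $C$, which witness a subdivision of $D_s$ in $G$. In the second case, $H$ contains an $X$-rooted $P_s$ minor: there are pairwise disjoint connected subgraphs $B_1,\dots,B_s$ of $H$, each meeting $X$ in some vertex $x_i$, and an edge between consecutive $B_i,B_{i+1}$. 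Picking internal paths inside each $B_i$ between its $X$-vertex and the endpoints of the inter-branch-set edges gives a path in $H$ that visits $x_1,\dots,x_s$ in order; combined with the edges $vx_i$ (which avoid $H$), this forms a subdivision of $F_s$ centered at $v$. Either way, $G$ contains a large fan or large dipole as a topological minor, completing the contrapositive.

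The only potential subtlety is ensuring that the star or path supplied by \cref{lem:ramsey-connected} is genuinely a topological minor structure (and not just a minor), but the lemma produces a subdivided $K_{1,s}$ directly, and in the path case it suffices to pick a single spanning tripod inside each branch set connecting $x_i$ to the two neighbouring branch-set edges—a routine operation which does not interact with $v$ because every $B_i$ lies in $H=C-v$.
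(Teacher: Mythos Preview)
Your argument is correct and follows exactly the route the paper indicates: apply \cref{lem:ramsey-connected} with $H=C-v$ and $X=N_C(v)$ for a vertex $v$ of maximum degree in a $2$-connected component $C$, and read off a dipole from the star case and a fan from the rooted-path case. One phrasing issue: in the path case you do not literally obtain a single path in $H$ visiting $x_1,\dots,x_s$ in order (the branch set may force a detour to reach $x_i$), but your final tripod remark fixes this—the tripod centres become the degree-$3$ branch vertices of the fan and the legs to $x_i$ together with the edges $vx_i$ form the spokes.
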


In particular this implies that the parameter edge-treewidth introduced in
\cite{edgetreewidth} may be asymptotically characterised by topological minors
and does not strictly require the use of weak topological minors. Let $\Delta_2(G)$ denote
biconnected maximum degree of $G$, which is the maximum over $2$-connected components $C$
of $G$ of $\Delta(C)$. It was shown in \cite{edgetreewidth} that edge-treewidth is
functionally equivalent to $\tw(G)+\Delta_2(G)$. We give an improved upper bound and
obtain the following bounds.

\begin{theorem}
	In a graph $G$, we have $\max(\tw,\sqrt{\Delta_2}) \leq \etw \leq \Delta_2\tw$. 

	Edge-treewidth is polynomially tied to the largest of the treewidth and the
	biconnected maximum degree.
\end{theorem}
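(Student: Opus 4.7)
The plan is to establish the three bounds $\tw \leq \etw$, $\sqrt{\Delta_2} \leq \etw$, and $\etw \leq \Delta_2 \cdot \tw$ separately; the polynomial equivalence with $\max(\tw, \Delta_2)$ is then immediate since $\etw \leq \max(\tw,\Delta_2)^2$ and $\max(\tw,\Delta_2) \leq \etw^2$. The bound $\tw \leq \etw$ is standard and follows directly from the definition of edge-treewidth in \cite{edgetreewidth}, which refines treewidth.

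For the lower bound $\sqrt{\Delta_2} \leq \etw$, I would pick a 2-connected component $C$ of $G$ attaining $\Delta_2(G)$, a vertex $v$ of degree $\Delta_2(G)$ in $C$, and set $H := C - v$, $X := N_C(v)$. Since $C$ is 2-connected, $H$ is connected, so \cref{lem:ramsey-connected} applies: with the Ramsey-type function $f$ of that lemma (which can be taken quadratic), for $|X| \geq f(s)$ one obtains either a subdivided $K_{1,s}$ in $H$ with its leaves in $X$, or an $X$-rooted $P_s$ minor in $H$. In the first case, attaching $v$ (adjacent to all of $X$) to the $K_{1,s}$ via its $s$ leaves produces a subdivided $K_{2,s}$ in $G$. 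In the second case, the path minor visits $s$ branch sets each hitting $X$, so using $v$'s adjacencies to $X$ as the universal vertex yields a subdivided $F_s$ in $G$. Both $F_s$ and $K_{2,s}$ are topological-minor obstructions to bounded edge-treewidth (see \cref{tab:obs-lists}), so $\etw(G) = \Omega(s)$. Choosing $s = \Omega(\sqrt{\Delta_2(G)})$ via $|X| = \Delta_2(G) \geq f(s)$ yields the claim.

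For the improved upper bound $\etw \leq \Delta_2 \cdot \tw$, I would start from an optimal tree decomposition $(T,\beta)$ of $G$ of width $\tw(G)$ and upgrade it to certify edge-treewidth at most $\Delta_2(G) \cdot \tw(G)$. The key observation is that the number of edges of $G$ that need to be simultaneously charged to a bag can be bounded by the number of vertices in the bag times the maximum degree achievable within a 2-connected component visible to the bag: each vertex $v$ of the bag contributes at most $\Delta_2(G)$ edges within its 2-connected component, and inter-block edges are controlled by cut-vertex structure. Compared to the additive $\tw + \Delta_2$ gadget construction of \cite{edgetreewidth}, the multiplicative bound saves a polynomial factor via this direct count.

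The main obstacle is the upper bound: achieving the clean $\Delta_2 \cdot \tw$ product (rather than a polynomial dependence) requires a careful accounting of how edges are attributed to bags in the refined decomposition, especially across cut vertices where degrees from different blocks combine. The lower bound instead hinges on ensuring that the Ramsey function in \cref{lem:ramsey-connected} can indeed be taken quadratic; if only a weaker polynomial bound is available, the $\sqrt{\Delta_2}$ term degrades accordingly but the polynomial tie with $\max(\tw,\Delta_2)$ still holds.
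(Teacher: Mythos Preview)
Your lower-bound discussion is more elaborate than necessary: the paper simply cites \cite{edgetreewidth} for both $\tw \leq \etw$ and $\sqrt{\Delta_2} \leq \etw$, so no new argument is needed there. Your alternative derivation via \cref{lem:ramsey-connected} and the obstruction table is plausible but somewhat circular, since the entries for $\etw$ in \cref{tab:obs-lists} are justified \emph{by} the present theorem together with the observation preceding it; you would need to verify directly that $\etw(F_s)$ and $\etw(K_{2,s})$ grow linearly in $s$ and that $\etw$ is monotone under topological minors, neither of which you do.

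The genuine gap is in the upper bound. Your sketch starts from a global tree decomposition and attempts to charge at most $\Delta_2(G)$ edges per vertex per bag, but this count is wrong: a cut vertex can lie in arbitrarily many $2$-connected components and thus have total degree far exceeding $\Delta_2(G)$, so ``each vertex $v$ of the bag contributes at most $\Delta_2(G)$ edges'' fails exactly where the biconnected restriction matters. Saying ``inter-block edges are controlled by cut-vertex structure'' does not repair this without a concrete construction. The paper's proof works with the tree-layout definition of $\etw$ and supplies the missing idea: compute the block-cut tree, root it, and for each block produce a tree-layout of optimal treewidth \emph{rooted at the cutvertex it shares with its parent block}, then glue these layouts along the tree. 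The crucial consequence of this rooting is that every edge of the resulting tree-layout is crossed only by graph edges from a single $2$-connected component $C$, so the count at that edge is at most $\tw(C)\cdot\Delta(C) \leq \tw(G)\cdot\Delta_2(G)$. Without this block-by-block rooting you cannot isolate contributions from different blocks, and the product bound does not follow.
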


We use the following definition of edge-treewidth which mimics cutwidth: given a
tree-layout $T$ of $G$, its edge-treewidth is the maximum over nodes $u$ of $T$ of the
number of edges of $G$ with one endpoint mapped to a descendant of $u$ (including $u$) and
one endpoint mapped to a strict ancestor of $u$. Similarly, treewidth can be characterised
by the maximum over nodes $u$ of the number of neighbours of the descendants $u$ that are
strict ancestors of $u$.

\begin{proof}
The lower bound is proven in \cite{edgetreewidth}, we prove the upper bound.

Observe that given a graph $G$ of treewidth $w$, we can obtain a tree-layout of treewidth
$w$ rooted at any vertex of $G$. It follows from the definition that such a tree-layout
has edge-treewidth is at most $w\Delta(G)$.

We now describe a construction of tree-layout of $G$ of edge-treewidth
$\Delta_2(G)\tw(G)$.

First, compute the biconnected components of $G$ (this can be done in linear time). 
Then, root the tree of biconnected components and, for each biconnected component, if
defined, produce a tree-layout of optimal treewidth rooted at the cutvertex shared with the parent biconnected component. Let $T$ be the obtained tree-layout.

For each edge of $T$, only one biconnected component of $G$ has edges crossing it. We
conclude that the edge-treewidth of $T$ is bounded by the bound for the tree-layouts of
its biconnected components. This shows $\etw(G) \leq \Delta_2(G)\tw(G)$.
\end{proof}

The topological minor obstructions to tree-partition-width were given by Ding and
Oporowski \cite{obs-tpw}.

\begin{theorem}\label{thm:obs-tpw}
	The tree-partition-width of a graph $G$ is tied to the largest $k$ such that one of
	$$W_k,F_k,K_{1,k}^{(k)},P_k^{(k)}$$ is a topological minor of $G$.
\end{theorem}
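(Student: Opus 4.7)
The plan is to prove both directions, invoking the Ding--Oporowski structure theorem for the harder upper bound. For the lower bound (each obstruction forces large $\tpw$), it suffices to check $\tpw = \Omega(k)$ on each parametric family. The trivial inequality $\tpw \geq \tw$ together with $\tw(W_k) = \Omega(k)$ handles the wall. For the fan $F_k$, one argues that in any tree-partition of $F_k$, the part $X$ containing the universal vertex $v$ holds at least $k/2$ of the path-vertices: every path-vertex lies in $X$ or a neighbour of $X$ (being a neighbour of $v$), and since in a tree two neighbours of $X$ are never adjacent, consecutive path-vertices lying in distinct parts outside $X$ is impossible; hence the path alternates between $X$ and single ``leaf'' parts, forcing $|X| \geq k/2$. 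For $K_{1,k}^{(k)}$ and $P_k^{(k)}$, the $k$ internally disjoint subdivided edges between each pair of poles force any tree-partition to pack $\Omega(k)$ subdivision vertices into a bounded number of parts adjacent to the poles. These lower bounds transfer to the topological-minor relation since $\tpw$ is approximately preserved under subdivision: a subdivision vertex can always be placed in the part of one of its endpoints, giving $\tpw(G) \leq \tpw(H) + 1$ whenever $G$ is a subdivision of $H$.

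For the upper bound (absence of the four obstructions implies bounded $\tpw$), we invoke the theorem of Ding and Oporowski \cite{obs-tpw}. Their argument bounds treewidth using the exclusion of $W_k$ (via the grid theorem), and then combines Wood's bound $\tpw = O(\Delta \cdot \tw)$~\cite{tpwDegree,tpwWood} with a case analysis along cutvertices and $2$-separators to bound the effective degree relevant for tree-partitions. The remaining three obstructions precisely encode the cases where the effective degree cannot be bounded: $F_k$ captures apex-like behaviour at a single vertex, $K_{1,k}^{(k)}$ captures multiplicity across a cutvertex, and $P_k^{(k)}$ captures chained multiplicity along a $2$-separator. Combining these bounds yields a tree-partition whose width is a function of $k$.

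The main technical obstacle is decoupling the three ``apex-like'' obstructions at their respective structural scales (vertex, cutvertex, $2$-separator), in particular because they can interact across distinct parts of the decomposition. A modern alternative would apply the folding technique of \cref{sec:folding} to a well-formed lean decomposition, in the spirit of \cref{thm:folded-fan} and \cref{thm:folded-dipole}, tracking both fan-type and dipole-type obstructions along adhesions simultaneously; this would also yield a constructive FPT linear-time procedure to approximate $\tpw$. We do not pursue this here as the statement of the theorem is already available in \cite{obs-tpw}.
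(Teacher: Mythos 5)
The paper gives no proof of \cref{thm:obs-tpw}: it is quoted as a known result of Ding and Oporowski, and the sentence preceding the theorem and the bibliography entry \cite{obs-tpw} are the entire ``proof.'' You ultimately do the same (``We do not pursue this here as the statement of the theorem is already available in \cite{obs-tpw}''), so in substance your plan matches the paper exactly. Where you go further is in trying to sketch both directions, and the sketch of the lower bound has two concrete problems worth flagging.

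First, the direction of your subdivision lemma is backwards for what you need, and false anyway. You argue $\tpw(\text{obstruction}) = \Omega(k)$ and then invoke ``$\tpw(G) \leq \tpw(H)+1$ whenever $G$ is a subdivision of $H$'' to transfer this to subdivisions of the obstructions. But placing $G$ as the subdivision, this inequality upper-bounds $\tpw$ of the subdivision in terms of the original; what you actually need for the topological-minor version of the lower bound is a bound the other way, of the form $\tpw(H) \leq f(\tpw(G))$, i.e.\ that subdividing cannot decrease $\tpw$ by much. And the inequality as stated is also simply wrong: the multigraph $K_{1,k}^{(k)}$ has tree-partition-width $1$ (each vertex its own part, with the star as the host tree), while its simple subdivision $K_{1,k}^{(k)}$ has tree-partition-width $\Theta(k)$; the naive ``place each subdivision vertex in the part of an endpoint'' inflates the width by the maximum number of subdivided edges sharing an endpoint, which is unbounded.

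Second, even before the transfer step, the $F_k$ argument only establishes a bound for $F_k$ itself, not for its subdivisions. In an arbitrary subdivision $F'_k$, the path vertices are no longer neighbours of the universal vertex $v$, so the statement ``every path-vertex lies in $X$ or a neighbour of $X$ (being a neighbour of $v$)'' fails, and with it the whole argument. (There is also a small slip in the reasoning for $F_k$ itself: consecutive path-vertices outside $X$ are forced into the \emph{same} part, but that part can be large, so the conclusion should be about $\max(|X|, \max |Y_j|)$ rather than $|X|$ alone; the lower bound on $\tpw$ survives, but the claim $|X|\geq k/2$ does not.) The same issue of adjacency being destroyed by subdivision affects your sketches for $K_{1,k}^{(k)}$ and $P_k^{(k)}$. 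A correct lower-bound proof must reason directly about arbitrary subdivisions of the obstruction graphs, for instance via the fact that a subdivision path between $u_i$ and $u_{i+1}$ traces a walk in the host tree between the parts $P(u_i)$ and $P(u_{i+1})$, which must pass through the part of $v$ when the spokes are hung off it; this is quite a bit more delicate than your sketch suggests, and is precisely the content one inherits by citing \cite{obs-tpw}. As a final note, a constructive FPT-linear approximation for $\tpw$ already exists (\cite{algo-tpw}, cited in the paper), so that motivation for the ``folding'' alternative is moot.
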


To give the list of topological minor obstructions to tree-cut width, we use the following
theorem of Wollan \cite{treecutwidth}.

\begin{theorem}
	The tree-cut width of a graph $G$ is tied to the largest $k$ such that $W_k$ is a weak
	immersion of $G$. 
\end{theorem}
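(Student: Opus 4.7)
The plan is to establish both directions of the ``tied'' relation separately. For the easy direction, I would prove that tree-cut width is monotone under the weak immersion relation: given a good tree-cut decomposition of the host graph, one can pull it back to a decomposition of the weakly immersed graph by mapping each branch vertex to the bag containing it and routing the image of each edge of the immersed graph through the decomposition, incurring only a constant-factor loss in width. Combined with the direct estimate $\tcw(W_k) = \Omega(k)$ (obtained either from the treewidth bound $\tw(W_k) = \Omega(k)$ together with $\tcw \geq \tw$, or directly from the fact that $W_k$ is ``edge-expanding''), this yields $\tcw(G) = \Omega(k)$ whenever $G$ contains $W_k$ as a weak immersion.

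The harder direction — that large tree-cut width forces a large wall weak immersion — I would attack by adapting the Robertson--Seymour tangle/grid strategy to the edge-connectivity setting. First, define \emph{edge-tangles} of order $k$: orientations of the near-minimum edge-separations of order less than $k$ satisfying the standard triple consistency axioms, where ``small sides'' are measured by edge-cuts rather than vertex-cuts. Then establish the expected duality between tree-cut decompositions and edge-tangles: $\tcw(G) \geq f_1(k)$ if and only if $G$ admits an edge-tangle of order at least $k$. This duality mirrors the classical tangle/branch-width duality but works with the edge-based adhesion measure that underlies tree-cut width.

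The core step is then a ``wall theorem for edge-tangles'': any graph $G$ carrying an edge-tangle $\mathcal{T}$ of sufficiently large order $f_2(k)$ contains $W_k$ as a weak immersion. I would build the wall inductively by greedily routing edge-disjoint paths along the rows and columns. The edge-tangle guarantees that each partially constructed piece lies inside a ``large side'' of every small edge-separation, so any obstruction to extending a path must cross many edge-disjoint alternative routes; concentrating these routes via a Menger-type argument on edges then yields the next row or column. The crucial relaxation is that we only aim for a \emph{weak} immersion, so the routed paths may share internal vertices freely — only edges must be disjoint. This sidesteps the apex-vertex pathologies that appear in topological-minor grid theorems and keeps the counting clean.

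The main obstacle will be the last step: converting an abstract edge-tangle into a concrete wall weak immersion with good quantitative bounds. Controlling the interaction between successive path-routing phases — in particular ensuring that the edge-budget used to produce row $i$ does not destroy the tangle witnesses needed for row $i+1$ — is the delicate part, and following Wollan's approach one handles it by working with ``well-linked'' subsets that are stable under the tangle and by exploiting the subcubic structure of the wall to keep the edge-usage at each vertex bounded by its degree in $W_k$.
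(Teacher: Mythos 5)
The paper does not prove this statement at all: immediately above the theorem it says ``we use the following theorem of Wollan \cite{treecutwidth},'' and the result is imported as a black box. So there is no in-paper proof to compare yours against. What you have written is essentially a high-level reconstruction of the strategy in the cited literature, and at that level of abstraction it is heading in the right direction, but it should not be mistaken for a proof.

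A few concrete points. For the easy direction, you do not actually need monotonicity of tree-cut width under weak immersion combined with a lower bound on $\tcw(W_k)$: since $\tcw$ is tied to treewidth plus a degree-type quantity and any graph containing $W_k$ as a weak immersion has $\tw = \Omega(k)$ (walls are well-linked and weak immersion preserves well-linked sets up to constant factors), the lower bound follows more directly. Your proposed monotonicity lemma is true for strong immersions, but for weak immersions it needs care because paths may pass through branch vertices, inflating torsos; this is glossed over. For the hard direction, you correctly name the two ingredients (an edge-connectivity analogue of tangles and a duality, then a wall-from-edge-tangle theorem), but each of these is itself a multi-page theorem in the source. In particular, the ``delicate part'' you flag — controlling edge budgets row-by-row so that the tangle is not destroyed — is precisely where most of Wollan's technical machinery (well-linked sets stable under the tangle, bramble-like routing arguments) lives, and simply invoking ``Menger on edges'' will not discharge it. As written, nearly every sentence of your second and third paragraphs is a lemma whose proof would have to be supplied before this counts as a proof of the theorem rather than a plan for one.
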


We denote by $\Theta'_k$ the graph consisting of an independent set $I$ of size $k$
to which we add two vertices $x,y$, two parallel edges $xu$ for each $u\in I$, and an edge $yu$ for each $u \in I$.
The graphs $\Theta'_k,K_{1,k}^{(3)}$ and $K_{3,k}$ correspond to the graphs
$\mathcal{I}(K_{3,k})$ obtained by identifications of the vertices of degree
$k$ in $K_{3,k}$. 

\begin{theorem}
	The tree-cut width of a graph $G$ is tied to the largest $k$ such that one of
	$$W_k,F_k,K_{1,k}^{(3)},\Theta'_k,K_{3,k},P^{(k)}_k$$ is a topological minor
	of $G$.
\end{theorem}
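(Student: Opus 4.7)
The plan is to use Wollan's theorem, stated immediately above, as the pivot: it characterises tree-cut width by the weak-immersion wall number, so the task reduces to showing that $G$ contains $W_k$ as a weak immersion if and only if $G$ contains one of the six listed graphs as a topological minor, with $k$ tied to the corresponding parameter by suitable functions.

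For the easy direction, I would check that each of $F_k$, $K_{1,k}^{(3)}$, $\Theta'_k$, $K_{3,k}$, $P_k^{(k)}$ contains $W_{k'}$ as a weak immersion for some $k' = \Omega(k)$. These are concrete constructions: each listed graph carries a concentrated bundle of edge-disjoint paths between a small vertex set (for instance $k$ edge-disjoint $u$-$v$ paths in $P_k^{(k)}$; many internally disjoint $3$-paths through the apex of $F_k$; many edge-disjoint paths between hubs in the $\mathcal{I}(K_{3,k})$ family), from which a weak immersion of a small wall is readily routed. The case of $W_k$ is trivial, and the reverse half of this direction is immediate since every topological minor is a weak immersion.

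For the hard direction, I would fix a weak $W_k$-immersion model $\mathcal{M}$ of $G$ minimising the total length of its paths. If the paths are ``almost internally disjoint''---no internal vertex is used by too many paths and no branch vertex of $\mathcal{M}$ is crossed too often---then by locally rerouting a bounded number of crossings I can extract a topological minor model of $W_{k'}$ for some $k' = \Omega(k)$. Otherwise some vertex $v$ is used by many paths of $\mathcal{M}$, and a pigeonhole argument on the branch vertices reached by the half-paths incident to $v$ determines which local obstruction appears. If the heavy paths through $v$ form a fan around a single apex without multi-routing we recover $F_{k'}$; if they form many parallel bundles along a linear portion of the wall we recover $P_{k'}^{(k')}$; otherwise they induce a bipartite-like structure in which a small set of ``hubs'' of $\mathcal{M}$ is linked to a large vertex set by many edge-disjoint triple paths, giving $K_{1,k'}^{(3)}$, $\Theta'_{k'}$, or $K_{3,k'}$ according as $1$, $2$, or $3$ hubs are involved, which is precisely the content of the family $\mathcal{I}(K_{3,k'})$ from Figure~\ref{fig:obs-graphs}.

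The main obstacle is the case analysis in the hard direction, since the global wall structure interacts with the local sharing pattern at a heavily used vertex in many configurations that must be dissected carefully. The key invariant to track is the number of distinct hubs touched by paths passing through an overloaded vertex, and the $\mathcal{I}(K_{3,k})$ notation was precisely designed so that the $0$, $1$, $2$ identifications of the three degree-$k$ vertices of $K_{3,k}$ exhaust the possibilities. A natural normalisation step is to replay Wollan's own shifting/exchange arguments on $\mathcal{M}$ before the analysis, reducing us to a bounded number of ``hot'' vertices to inspect directly.
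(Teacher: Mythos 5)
Your proof takes the same pivot (Wollan's theorem reducing tree-cut width to the weak immersion wall number) and the same flavour for the easy direction, but in the hard direction you propose a genuinely different strategy that, as written, has a significant gap.

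The paper does \emph{not} attempt a direct analysis of a minimum-length immersion model. Instead, it introduces a second reduction step: it invokes Ding and Oporowski's topological-minor characterisation of tree-partition-width (Theorem~\ref{thm:obs-tpw}) to split the hard direction into two cases. If $G$ has unbounded tree-partition-width, the obstructions $W_k$, $F_k$, $K_{1,k}^{(k)}$ (which contains $K_{1,k}^{(3)}$), or $P_k^{(k)}$ are delivered for free; in particular this is where $F_k$ and $P_k^{(k)}$ come from, and the paper never has to extract them from the immersion model at all. If $G$ has bounded tree-partition-width, the bounded-width tree-partition supplies the scaffold for analysing the wall immersion: well-linkedness of the wall induces an orientation on the tree-partition with a unique sink bag, this bag has bounded cardinality but unbounded degree in the tree-partition, and the pigeonhole is applied to multisets of size three drawn from this bounded bag. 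The three identification types $K_{1,k}^{(3)}$, $\Theta'_k$, $K_{3,k}$ then come out of the multiplicity pattern, plus a technical lemma that an edge-disjoint tripod yields a vertex-disjoint tripod with the same leaves.

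By contrast, your plan fixes a weak $W_k$-immersion model $\mathcal{M}$ minimising total path length and does a local case analysis at ``hot'' vertices, claiming that either you can reroute to a topological minor of a smaller wall, or else the congestion pattern around a hot vertex yields one of the obstructions. The problem is that, without the tree-partition acting as a global skeleton, there is no a priori bound on how many distinct vertices the competing paths pass through or on how the heavy vertices interact with the wall's branch structure. In particular, nothing in your outline forces the pigeonhole onto a ground set of bounded size, which is exactly what the bounded-width bag provides in the paper; the ``local rerouting when paths are almost internally disjoint'' step is a nontrivial claim in its own right; and the extraction of $F_{k'}$ or $P^{(k')}_{k'}$ from a congested immersion model is substantially harder than your sketch suggests, which is why the paper sidesteps it by outsourcing those cases to the Ding--Oporowski theorem. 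Your proposal is not obviously wrong, but the missing structural control makes the case analysis far from the ``careful dissection'' you anticipate; filling the gap would essentially require reproving a tree-partition-like decomposition from scratch.
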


\begin{proof}
	First, we observe that all given parametric obstructions contain large wall immersions.
	Indeed, $P^{(k)}_k$ contains every graph with $k$ edges as a weak
	immersion\footnote{Note however that it does not contain large
	$3$-connected strong immersions.},
	$K_{1,k}^{(3)},\Theta'_k,$ and $K_{3,k}$ contain every subcubic graph on
	$k$ vertices as an immersion. To find a wall immersion in $F_k$, divide its path
	into $\sqrt{k}$ subpaths of length $\sqrt{k}$, they are the rows of our
	wall and we may route edges between distinct rows of our wall via the
	universal vertex. We conclude that all our obstructions of order $k$
	contain a wall of order $\Omega(\sqrt{k})$ as an immersion.

	We now show that if the tree-cut width is not bounded by a function of $k$, then
	we find a parametric obstruction of order at least $k$.
	
	If $G$ has unbounded tree-partition-width, it contains a subdivision of
	$W_k,F_k,K_{1,k}^{(k)},$ or $P_k^{(k)}$ (see \cref{thm:obs-tpw}). Since
	$K_{1,k}^{(k)}$ contains $K_{1,k}^{(3)}$, in all cases, we find an obstruction from
	our list of order at least $k$.

	We now assume that $G$ has tree-partition-width bounded by a function of
	$k$, and since $G$ has unbounded tree-cut width, it contains a large wall
	immersion. We will analyse how the immersion model of a large wall can fit
	in a tree-partition of bounded width. A bag of the tree-partition contains
	a bounded number of vertices and hence a bounded number of branch vertices.
	Every edge $tt'$ of the tree-partition corresponds to a cut of $G$ crossed
	by a number of edges bounded by a function of $k$. Walls being well-linked,
	one side of this cut contains a number of branch vertices bounded by a
	function of $k$. We orient edges of the tree-partition towards the subtree
	that contains most branch vertices in its bags. This orientation has a
	unique sink, and due to the previous observations, it is a bag of degree
	unbounded by a function of $k$. More precisely, the number of edges of the
	tree-partition incident to this bag that lead to branch vertices of degree
	$3$ in the model is unbounded by a function of $k$. In particular, for each
	of these edges, we associate the multiset corresponding to the first
	vertices of the central bag on these 3 paths, with multiplicity. Since the
	central bag contains a number of vertices that is bounded by a function of
	$k$, the number of multisets of order $3$ on this ground set is also bounded
	by a function of $k$. In particular, one multiset is repeated a number of
	times unbounded by a function of $k$ among our edges of the tree-partition.
	If this multiset has a vertex of multiplicity $3$, we find a subdivision of
	$K_{1,k}^{(3)}$. Else, if it has a vertex of multiplicity $2$, we find a
	subdivision of $\Theta'_k$. Else, we find a subdivision of $K_{3,k}$.
	This requires the rather technical observation that, from an edge-disjoint
	tripod, we can always extract a vertex-disjoint tripod with the same
	leaves.
\end{proof}

The obstructions for slim tree-cut width were given in \cite{stcw}.

\begin{theorem}
	The slim tree-cut width of a graph $G$ is tied to the largest $k$ such that
	one of $$W_k,K_{2,k},K_{1,k}^{(2)}$$ is a topological minor of $G$.
\end{theorem}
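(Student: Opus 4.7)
The plan is to follow exactly the strategy of the tree-cut width proof given just above, with triples of path-endpoints replaced throughout by pairs.

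\textbf{Easy direction.} I first verify that each of $W_k$, $K_{2,k}$, and $K_{1,k}^{(2)}$ has slim tree-cut width unbounded in $k$. The wall has treewidth $\Omega(k)$ and slim tree-cut width is at least treewidth (up to a polynomial), so $W_k$ is fine. For $K_{2,k}$ and $K_{1,k}^{(2)}$, I argue that they contain walls of order $\Omega(\sqrt{k})$ as (weak) immersions: route the edges of a wall through the two degree-$k$ vertices of $K_{2,k}$, respectively through the central vertex of $K_{1,k}^{(2)}$ using its two parallel edges to each leaf. Combined with the immersion-based lower bound on slim tree-cut width from \cite{stcw}, each parametric obstruction has slim tree-cut width polynomial in $k$.

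\textbf{Hard direction.} Conversely, assume that the slim tree-cut width of $G$ is unbounded. Because tree-cut width is upper-bounded by slim tree-cut width (they are adjacent in \cref{fig:hasse-param} with slim tree-cut width above), the tree-cut width of $G$ is also unbounded, so by the characterisation proved just above, $G$ contains one of $W_k, F_k, K_{1,k}^{(3)}, \Theta'_k, K_{3,k}, P_k^{(k)}$ as a topological minor. Most of these immediately yield one of our three claimed obstructions: $W_k$ is in the list; $K_{1,k}^{(3)}$ contains $K_{1,k}^{(2)}$ as a subgraph; $K_{3,k}$ contains $K_{2,k}$ as a subgraph; $\Theta'_k$ contains $K_{2,k}$ as a topological minor via the paths $x\text{-}u'\text{-}u\text{-}y$ for each $u$ in the independent set, where $u'$ is the subdivision point of one of the two parallel $xu$-edges; and $P_k^{(k)}$ contains $K_{2,k}$ as a topological minor by subdividing once each of the $k$ parallel edges between two consecutive vertices.

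\textbf{Main obstacle.} The only delicate case is $F_k$, since the fan itself has bounded slim tree-cut width and is not in our list. To handle it, I redo the tree-partition analysis from the end of the tree-cut width proof but with pairs (multisets of size two) instead of triples at the central bag: if $\tpw(G)$ is unbounded we invoke \cref{thm:obs-tpw} and dispatch $W_k, K_{1,k}^{(k)} \supseteq K_{1,k}^{(2)}, P_k^{(k)} \supseteq K_{2,k}$ as before (the $F_k$ subcase of that theorem being consistent because a fan alone does not witness unbounded slim tree-cut width); if $\tpw(G)$ is bounded, the large wall immersion coming from unbounded slim tree-cut width survives inside a tree-partition of bounded width, and examining the multiset of size two of endpoints on the central bag of each edge produces either $K_{1,k}^{(2)}$ (when the two endpoints coincide) or $K_{2,k}$ (when they differ), using the same edge-disjoint to vertex-disjoint routing trick invoked for tripods in the tree-cut width proof. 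The main subtlety throughout is that the fan is an obstruction to tree-cut width but not to slim tree-cut width, so the tree-cut width characterisation cannot be used as a black box and the final local analysis must be redone with pair-multiplicities.
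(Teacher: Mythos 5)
The paper does not actually prove this theorem---it cites \cite{stcw} for the obstruction list---so you are doing genuinely new work, but your argument inverts a crucial inequality. In \cref{fig:hasse-param}, slim tree-cut width sits \emph{above} tree-cut width: bounded slim tree-cut width implies bounded tree-cut width, i.e.\ $\tcw(G)$ is at most a function of the slim tree-cut width of $G$. Your hard direction starts from ``slim tree-cut width unbounded'' and deduces ``tree-cut width unbounded'', which is the converse implication and is simply false. The graph $K_{2,k}$ is a standard counterexample: it is absent from the tree-cut-width obstruction list in Table~\ref{tab:obs-lists} (so its tree-cut width is bounded), yet it is literally one of the three slim tree-cut-width obstructions. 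Once this step fails, everything downstream fails too, including the appeal to ``the large wall immersion coming from unbounded slim tree-cut width'': unlike tree-cut width, slim tree-cut width is not characterised by wall immersions (again $K_{2,k}$ witnesses this---it has only two vertices of degree greater than two, so it contains no wall of order at least three even as a weak immersion, yet its slim tree-cut width is unbounded).

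Two further claims in your write-up are wrong for the same underlying reason. In the easy direction you assert that $K_{2,k}$ and $K_{1,k}^{(2)}$ contain walls of order $\Omega(\sqrt{k})$ as immersions, but both graphs have at most two vertices of degree at least three, while a wall of order at least three has many degree-three branch vertices that would each need three edge-disjoint paths leaving them; so no such immersion exists, and the lower bound for these two obstructions must come directly from the definition of slim tree-cut width in \cite{stcw}, not from wall immersions. Finally, the assertion that ``the fan itself has bounded slim tree-cut width'' is false: $F_{3j}$ contains $K_{1,j}^{(2)}$ as a subgraph (take the $j$ internally disjoint $4$-cycles $u\text{--}v_{3i+1}\text{--}v_{3i+2}\text{--}v_{3i+3}\text{--}u$ through the universal vertex $u$), so the fan has unbounded slim tree-cut width and is simply not a \emph{minimal} obstruction. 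The correct route is to argue directly from the slim tree-cut decomposition; reducing to the tree-cut-width characterisation is going in the wrong direction.
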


\section{Equivalent definitions}\label{sec:def}

We use the following characterisation of proper chordal graphs \cite{properchordal}.

\begin{definition}\label{def:proper-chordal}
A graph $G$ is a proper chordal graph if there exists a tree-layout $T$ of $G$ such that
for every maximal clique $K$ of $G$, the vertices of $K$ appear consecutively on a
root-to-leaf path of $T$.
\end{definition}

\begin{proposition}
Given a fixed graph $G$, the following statements are equivalent:
\begin{enumerate}[label=(\roman*)]
\item\label{enum:tbw} $\tbw(G) \leq k$;
\item\label{enum:proper-chordal} $\omega(H)-1 \leq k$ for a proper chordal supergraph $H$
of $G$; and
\item\label{enum:search} there is a monotone search strategy to capture a visible fugitive
of infinite speed in which searchers are placed one at a time and each vertex of $G$ is
occupied by a searcher during at most $k+1$ time steps.
\end{enumerate}
\end{proposition}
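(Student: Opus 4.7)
My plan is to prove the two equivalences \ref{enum:tbw} $\Leftrightarrow$ \ref{enum:proper-chordal} and \ref{enum:tbw} $\Leftrightarrow$ \ref{enum:search} separately, in both cases using a tree-layout of $G$ as the common underlying object that witnesses all three conditions.

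For \ref{enum:tbw} $\Leftrightarrow$ \ref{enum:proper-chordal}, starting from a tree-layout $T$ of $G$ of bandwidth at most $k$, I would build a supergraph $H$ of $G$ by adding an edge between every pair $u, v$ in ancestor-descendant relation in $T$ whose tree-distance is at most $k$. Each root-to-leaf path of $T$ then becomes a proper interval subgraph of $H$ whose maximal cliques are precisely the windows of $k+1$ consecutive vertices, exhibiting $T$ as a tree-layout that witnesses $H$ as a proper chordal graph via \cref{def:proper-chordal}, with $\omega(H) = k+1$. Conversely, given a proper chordal supergraph $H$ of $G$ with $\omega(H) \leq k+1$, a tree-layout $T$ witnessing proper chordality of $H$ is also a tree-layout of $G$ (since $E(G) \subseteq E(H)$), and every edge of $G$ lies in a maximal clique of $H$ of size at most $k+1$ placed consecutively on a root-to-leaf path of $T$, which caps its tree-distance at $k$.

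For \ref{enum:tbw} $\Leftrightarrow$ \ref{enum:search}, a tree-layout $T$ of $G$ of bandwidth at most $k$ drives a monotone search strategy in which searchers are placed one at a time along a traversal of $T$ that processes each subtree fully before moving on to its siblings. The active searchers at any moment form a descending chain of length at most $k+1$ along the current root-to-leaf branch of the traversal; since the bandwidth bound forces any edge of $G$ to stay inside a $(k+1)$-window on a root-to-leaf path, this chain is always a valid separator between cleared and contaminated territory, preserving monotonicity. Conversely, from a monotone strategy with occupation time at most $k+1$, I would extract a tree-layout by ordering vertices by first placement time and attaching each new vertex $v$ as a child of the most recently placed searcher still present at $v$'s placement; monotonicity against an infinite-speed visible fugitive forces the endpoints of every edge of $G$ to be occupied simultaneously at some time (otherwise the fugitive could contaminate a previously cleared vertex across the edge), making them comparable in the constructed tree-layout, and the $(k+1)$-occupation bound then caps the number of searchers simultaneously alive on any ancestor-descendant chain, forcing the bandwidth to be at most $k$.

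The main obstacle I anticipate is the forward direction of the search equivalence: scheduling the traversal of $T$ so that the active chain of searchers is always a valid separator while never exceeding $k+1$ in length. The delicate case is a branching node $v$ with several unprocessed sibling subtrees, where a naive DFS would keep the searcher at $v$ alive for the whole sweep through its subtrees; the bandwidth bound should let one replace $v$'s role in the shielding chain by descendants that still dominate all incident edges from yet-unprocessed siblings, and showing that this replacement can always be scheduled within the $(k+1)$-budget is the technical heart of the argument.
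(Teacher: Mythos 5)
Your proof of \ref{enum:tbw} $\Leftrightarrow$ \ref{enum:proper-chordal} is correct; the only deviation from the paper is that you build a larger completion $H$ (connecting \emph{every} ancestor--descendant pair at tree-distance at most $k$) rather than only completing the intervals spanned by actual edges of $G$, but the argument that this $H$ is proper chordal and has $\omega(H)\le k+1$ goes through, and the converse is the same.

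The equivalence \ref{enum:tbw} $\Leftrightarrow$ \ref{enum:search} is where there is a real gap, and it is precisely the one you flagged. Your forward direction proposes a \emph{non-adaptive} DFS-style sweep of $T$ that ``processes each subtree fully before moving on to its siblings''; this is what you would do against an \emph{invisible} agile fugitive, and, as you correctly observe, it keeps the searcher on a branching node $v$ (and its high ancestors) alive for the entire time taken to clear the first subtree. Your proposed fix --- substituting descendants of $v$ for $v$ in the shielding chain --- cannot work, because a descendant of $v$ lying in one child subtree does not dominate the edges from $v$ (or from an ancestor of $v$) into a different, not-yet-processed child subtree; no matter how you schedule it, some vertex near a deep branching point is occupied for an unbounded number of steps. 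The missing idea is to use the fact that the fugitive is \emph{visible}: the strategy should be adaptive, at each step placing a searcher at the root of the subtree currently containing the fugitive and never entering the sibling subtrees at all. Then searchers more than $k$ levels above the current frontier can be removed, and every vertex is occupied for at most $k+1$ steps; no sweep over siblings ever happens, so the delicate scheduling problem disappears. The paper's proof takes exactly this route.

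Your backward direction \ref{enum:search} $\Rightarrow$ \ref{enum:tbw} has a related problem: you ``order vertices by first placement time'' as if there were a single linear sequence of moves, but an adaptive strategy against a visible fugitive is a \emph{tree} of plays --- after a searcher is placed, the fugitive's territory may split into several components and the strategy's continuation depends on which component the fugitive chose. A single play will not visit all vertices, and different plays would assign different ``placement times'' to the same vertex, so the construction is not well defined. The correct construction (the paper's) builds the tree-layout \emph{by branching on the fugitive's choice}: when the strategy plays $x$ and the territory $S$ splits into $S_1,\dots,S_\ell$, make $x$ a node with one child-subtree per $S_i$ and recurse. Monotonicity then ensures that for every edge $uv$ there is some branch of the game tree on which $u$ and $v$ are in ancestor--descendant relation, and the $(k+1)$-occupation bound caps their distance. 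Your intuition that monotonicity forces simultaneous occupation of endpoints of an edge is morally right, but it needs to be phrased on the branch of the game tree where the fugitive actually has access to both endpoints, which is exactly what this inductive construction supplies.
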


\begin{proof}
\begin{description}[listparindent=1.5em]
\setlength{\itemsep}{7pt}

\item[\ref{enum:tbw} $\Rightarrow$ \ref{enum:proper-chordal}] Consider a tree-layout $T$
of bandwidth at most $k$. Define $H$ as the graph obtained from $G$ by adding edges to
form cliques on intervals $[x,y]$ of $T$, for each edge $x,y$ of $G$. By interval $[a,b]$
of $T$, we mean the set of vertices of $G$ that are on the unique path from $a$ to $b$ in
$T$. We show that $H$ is a proper chordal graph using \cref{def:proper-chordal}. Let $K$
be a maximal clique of $H$ and let $x,y$ be its vertices that are closest and furthest,
respectively, to the root of $T$. By construction of $H$ and definition of $K$, $xy \in
E(G)$. Indeed, if $xy \notin E(G)$, then $K$ is strictly contained in an interval $[a,b]$
of $T$ for which $ab \in E(G)$. As $[a,b]$ would induce a clique in $H$, this would
contradict the maximality of $K$. We conclude that vertices of a maximal clique of $H$ are
intervals of $T$, i.e. they are consecutive on a root-to-leaf path of $T$. We conclude
that $H$ is proper chordal, but also that the size of a maximum clique in $H$ corresponds to
the size of an interval $[x,y]$ of $T$ such that $xy \in E(G)$. In turn, we deduce that
$\omega(H)-1 \leq k$ since the distance between $x$ and $y$ is bounded by $k$.

\item[\ref{enum:proper-chordal} $\Rightarrow$ \ref{enum:tbw}] If $H$ is a proper chordal
graph with $\omega(H)-1 \leq k$, then, by \cref{def:proper-chordal}, $\tbw(H) \leq k$. We
deduce $\tbw(G) \leq \tbw(H) \leq k$ since $G$ is a subgraph of $H$. 

\item[\ref{enum:tbw} $\Rightarrow$ \ref{enum:search}] Given a tree-layout $T$ of bandwidth
at most $k$, we describe a winning strategy for the searchers. At each time step, place a
searcher at the root of the subtree $T_i$ containing the fugitive. We may remove a
searcher from the vertex he occupies when it is not in $N_G(V(T_i))$. Indeed, by the monotonicity
assumption, its neighbours will never become accessible to the fugitive again. In
particular, any vertex at distance at least $k$ from the root of $T_i$ may be removed. Due
to the order in which we place searchers, a searcher on the $k$-th ancestor $x$ of the
root of $T_i$ is removed $k$ steps after it was placed. This means $x$ was occupied for
$k+1$ time steps. Since $T$ has finite depth, the strategy eventually leads to the capture
of the fugitive.

\item[\ref{enum:search} $\Rightarrow$ \ref{enum:tbw}] Given a monotone strategy for the
searchers that occupies a vertex during at most $k+1$ time steps, we construct a
tree-layout $T$ of bandwidth at most $k$. Consider $H$ a history of moves of the searchers
and $S$ inducing a connected subgraph such that there exists an execution with history $H$ where $S$ is the set of vertices accessible to the fugitive. We define $T_{S,H}$ on such a pair inductively as follows. If $S$ is empty, $T_{S,H}$ is an empty tree. Otherwise, let $x$ be the next move of the strategy in configuration $(S,H)$ and let $S_1,\dots,S_\ell$ denote the connected components of $G[S] - x$. After the searcher
is placed on $x$, the fugitive must be in some $S_i$. For each $S_i$, if the fugitive
chose to move to $S_i$, the strategy can capture the fugitive such that no vertex will
have been occupied more than $k+1$ steps. Let $H'$ be obtained from $H$ by adding $x$ as
the last move. We set $x$ to be the root of $T_{S,H}$ with subtrees $T_{S_1,H'},\dots,
T_{S_\ell,H'}$. Consider $T=T_{V(G),H}$ with $H$ an empty history of moves for the
searchers. For each rooted subtree of $T_i=T_{S,H}$, by monotonicity, there were still
searchers on $A=N(V(T_i))$ when placing a searcher on the root of $T_i$. By definition
of the strategy, a vertex $a$ of $A$ cannot be occupied more than $k+1$ steps.
Furthermore, by definition of $A$, there is an edge $ay$ between $a$ and $S$. By
monotonicity of the strategy, the sets $S'$ of vertices accessible to the fugitive after $a$
was placed always contained $S$, and thus $a$ was part of $N(S')$. Hence, at each time
step since we placed a searcher on $a$, $a$ was occupied. We conclude that $a$ cannot be
at distance more than $k$ in $T$ of the root of $T_i$, and $T$ is the desired tree-layout.
\qedhere
\end{description}
\end{proof}

\begin{opquestion}
Is there always an optimal monotone strategy in this graph searching game?
\end{opquestion}

\bibliography{refs}

\end{document}